\newtheorem{theorem}{Theorem}[]
\newtheorem{lemma}[theorem]{Lemma}
\newtheorem{corollary}[theorem]{Corollary}
\newtheorem{proposition}[theorem]{Proposition}
\newcommand{\retheorem}{theorem}
\newcommand{\mycase}[1]{\noindent{\bf CASE #1\ }}
\newcommand{\contractd}{\ensuremath{D_c}}%
\newcommand{\shortcutd}{\ensuremath{D_s}}%
\newcommand{\ignore}[1]{}%
\newcommand{\ProblemFormat}[1]{{\sc #1}}
\newcommand{\ProblemName}[1]{\ProblemFormat{#1}\xspace}
\newcommand{\probDS}{\ProblemName{Dominating Set}}
\newcommand{\probPlMDS}{\ProblemName{Planar Dominating Set}}
\newcommand{\probPlFVS}{\ProblemName{Planar Feedback Vertex Set}}
\newcommand{\probRMLO}{\ProblemName{$k$-Leaf Out-Branching}}
\newcommand{\probRMLOshort}{\ProblemName{LOB}}
\newcommand{\probPlRMLO}{\ProblemName{Planar Rooted Maximum Leaf Outbranching}}
\newcommand{\probHRMLO}{\ProblemName{$H$-minor-free Rooted Maximum Leaf Outbranching}}
\newcommand{\probIOB}{\ProblemName{$k$-Internal Out-Branching}}
\newcommand{\probIOBshort}{\ProblemName{IOB}}
\newcommand{\mc}{\mathcal}
\def\grad_#1{\nabla\!_{#1}}
\newcommand{\heading}[1]{\medskip\noindent{\bf #1.\ }}%
\newcounter{rulecnt}
\newcommand{\rrule}[2]{\medskip\noindent{\bf \refstepcounter{rulecnt}\label{#1}Rule~\ref{#1}}  #2\ }%
\newcommand{\Gg}{{\ensuremath{\mathcal{G}}}}
\newcommand{\Hh}{{\ensuremath{\mathcal{H}}}}
\newcommand{\nab}{\mathop{\triangledown}}
\def\ml{{\rm maxleaf}}
\def\cv{{\rm cv}}
\def\sp{{\rm sp}}
\def\iso{{\rm iso}}
\def\ea{{\rm ea}}
\def\hd{{\rm hd}}
\def\sl{{\rm sl}}
\newcommand{\scalefactor}{0.8}
\begin{document}
\title{Linear kernels for outbranching problems in sparse digraphs\thanks{Work partially supported by the ANR Grant EGOS (2012-2015) 12 JS02 002 01 (MB), by the National Science Centre of Poland, grants number 2013/09/B/ST6/03136 (\L{}K, AS). This work was done while Micha\l{} Pilipczuk has been holding a post-doc position at Warsaw Centre of Mathematics and Computer Science, and has been supported by the Foundation for Polish Science via the START stipend programme.} 
}

\date{}

\author{Marthe Bonamy\thanks{LIRMM, France} \and \L ukasz Kowalik\thanks{University of Warsaw, Poland} \and Micha\l \ Pilipczuk\footnotemark[3] \and Arkadiusz Soca\l a\footnotemark[3]}

\maketitle

\begin{abstract}
In the \probRMLO and \probIOB problems we are given a directed graph $D$ with a designated root $r$ and a nonnegative integer $k$. The question is to determine the existence of an outbranching rooted at $r$ that has at least $k$ leaves, or at least $k$ internal vertices, respectively. Both these problems were intensively studied from the points of view of parameterized complexity and kernelization, and in particular for both of them kernels with $O(k^2)$ vertices are known on general graphs. In this work we show that \probRMLO admits a kernel with $O(k)$ vertices on $\Hh$-minor-free graphs, for any fixed family of graphs $\Hh$, whereas \probIOB admits a kernel with $O(k)$ vertices on any graph class of bounded expansion.
\end{abstract}

\section{Introduction}\label{sec:intro}

Kernelization is a thriving research direction within parameterized complexity that aims at understanding the computational power of polynomial-time preprocessing procedures via a rigorous mathematical framework. Its central notion is the definition of a {\em{kernelization algorithm}}, or simply a {\em{kernel}}: Given an instance $(I,k)$ of some parameterized problem $L$, a kernelization algorithm reduces $(I,k)$ in polynomial time to an equivalent instance $(I',k')$ of $L$ so that $|I'|,k'\leq f(k)$ for some computable function $f$ of the parameter $k$ only; function $f$ is called the {\em{size}} of the kernel. While for a decidable problem $L$ the existence of {\em{any}} kernelization algorithm is equivalent to fixed-parameter tractability of the problem, we are most interested in finding small kernels, possibly of polynomial or even linear size. For concreteness, in this paper we concentrate on parameterized graph problems, so we always assume that the input instance is a graph.

One of the most influential ideas in the search for small kernels was to restrict the input graph to belong to some {\em{sparse}} graph class, e.g. to be planar, bounded-genus, or $H$-minor-free for some fixed $H$. Starting with the groundbreaking work of Alber et al.~\cite{afn:planar-domset}, who showed a kernel of size $335k$ for \probDS on planar graphs, numerous strong kernelization results were shown on planar, bounded genus, and $H$-minor-free graphs; these results often concern problems that on general graphs are intractable in the parameterized sense. A milestone in this theory is the development of the technique of {\em{meta-kernelization}} by Bodlaender et al.~\cite{BodlaenderFLPST09}, further refined by Fomin et al.~\cite{fomin:bidim-kernels}. Informally speaking, using this methodology one can explain the existence of linear kernels for many parameterized problems by proving that the problem behaves in a ``bidimensional'' way and possesses certain finite-state properties. Whereas verifying the latter usually boils down to a quick technical check, the bidimensionality requirement is quite restrictive. Roughly speaking, it says that the optimum solution size is large whenever a large two-dimensional structure (like a grid minor) can be found in the graph, and the problem behaves monotonically under minor operations.

The concept of bidimensionality was initially introduced by Demaine et al.~\cite{DemaineFHT05} as a technique for obtaining subexponential parameterized algorithms, in this case typically with the running time of the form $2^{\tilde{O}(\sqrt{k})}\cdot n^{O(1)}$. Note that, provided the considered problem can be solved in time $2^{\tilde{O}(t)}\cdot n^{O(1)}$ on graphs of treewidth $t$, the existence of a kernel with $O(k)$ vertices for the problem on planar/bounded genus/$H$-minor free graphs immediately implies an algorithm for the problem with running time $2^{\tilde{O}(\sqrt{k})}+n^{O(1)}$, as graphs from these classes have $O(\sqrt{n})$ treewidth. Thus, for many natural problems the existence of a linear kernel on sparse graphs is a stronger property than admitting a subexponential parameterized algorithm.

While the techniques of bidimensionality and meta-kernelization are elegant and have many important applications, they have certain limitations that make them inapplicable to several important families of problems, for instance problems on directed graphs or problems with prescribed sets of distinguished vertices like {\sc{Steiner Tree}}. Therefore, significant effort has been put into investigating the existence of subexponential parameterized algorithms and small kernels outside the framework of bidimensionality~\cite{DornFLRS13,FominLRS11,KleinM14,LokshtanovSW12,PilipczukPSL13,PilipczukPSL14}.

In this work we are interested in two problems investigated by Dorn et al.~\cite{DornFLRS13}, namely \probRMLO (\probRMLOshort) and \probIOB (\probIOBshort). In both problems, we are given a directed graph $D$ with a specified root $r$ and a nonnegative integer $k$. By an {\em{outbranching rooted at $r$}} we mean a spanning tree of $D$ with all the edges oriented away from~$r$. A vertex of $D$ is a {\em{leaf}} in an outbranching $T$ if it has outdegree $0$ in $T$, and is {\em{internal}} otherwise. In \probRMLOshort the question is to verify the existence of an outbranching rooted at $r$ that has at least $k$ leaves, whereas in \probIOBshort we instead ask for an outbranching rooted at $r$ with at least $k$ internal vertices. Both problems enjoy the existence of kernels with $O(k^2)$ vertices on general graphs~\cite{daligault09,Gutin-quadratic}, however up to this work no better kernels were known even in the case of planar graphs. Indeed, the directed nature of both problems prevents them from satisfying even the most basic properties needed for the bidimensionality tools to be applicable.

Dorn et al.~\cite{DornFLRS13} designed subexponential parameterized algorithms with running time $2^{\tilde{O}(\sqrt{k})}\cdot n^{O(1)}$ for both problems on $H$-minor-free graphs\footnote{We remark that Dorn et al. state the result for \probIOBshort only for apex-minor-free graphs, but a combination of their approach with the contraction decomposition technique of Demaine et al.~\cite{DemaineHK11} immediately generalizes the result to $H$-minor-free graphs.}. They did it, however, by circumventing in both cases the need of obtaining a linear kernel. In  the case of \probRMLOshort they show how to apply preprocessing rules to obtain an instance that can be still large in terms of $k$, but has treewidth $O(\sqrt{k})$ so that the dynamic programming on a tree decomposition can be applied. In the case of \probIOBshort they apply a variant of Baker's layering technique.

\paragraph*{Our results and techniques.} In this work we fill the gap left by Dorn et al.~\cite{DornFLRS13} and prove that both \probRMLOshort and \probIOBshort admit linear kernels on $H$-minor-free graphs. In fact, for \probIOBshort our approach works even in the more general setting of graph classes of bounded expansion (see Section~\ref{sec:prelims} for a definition). By slightly abusing notation, in what follows we say that a directed graph $D$ belongs to some class of undirected graphs (e.g. is $H$-minor free) if the underlying undirected graph of $D$ has this property.

\begin{restatable}{\retheorem}{restateintroRMLO}\label{thm:mainLOB}
Let $H$ be a fixed graph. There is an algorithm that, given an instance $(D,k)$ of \probRMLOshort where $D$ is $H$-minor-free, in polynomial time either resolves the instance $(D,k)$, or outputs an equivalent instance $(D',k')$ of \probRMLOshort where $|V(D')|=O(k)$, $k'\leq k$, and $D'$ is $H$-minor free.
The algorithm does not need to know $H$.
\end{restatable}

Note that Theorem~\ref{thm:mainLOB} implies also a kernel of linear size for any minor-closed family of graphs $\Gg$.
Indeed, by the Roberson and Seymour's graph minor theorem there exists a fixed finite family $\Hh$ such that $\Gg$ contains exactly graphs that are $H$-minor free for every $H\in \Hh$. 
By Theorem~\ref{thm:mainLOB}, for any input graph $D\in\Gg$, the output graph $D'$ is $H$-minor free for every $H\in \Hh$.
Hence, $D'$ is in $\Gg$.
In particular, it follows that Theorem~\ref{thm:mainLOB} implies linear kernels for planar graphs and other graphs embeddable on a surface of bounded genus.

\begin{restatable}{\retheorem}{restateintroIOB}\label{thm:mainIOB}
Let $\mc G$ be a hereditary graph class of bounded expansion. There is an algorithm that, given an instance $(D,k)$ of \probIOBshort where $D\in \mc G$, in polynomial time either resolves the instance $(D,k)$, or outputs an equivalent instance $(D',k)$ of \probIOBshort where $|V(D')|=O(k)$ and $D'$ is an induced subgraph of $D$.
\end{restatable}

By applying these kernelization algorithms and then running dynamic programming on a tree decomposition of the obtained graph, we easily obtain the following corollary.

\begin{restatable}{\retheorem}{restateintrosubexp}\label{thm:subexp}
Let $H$ be a fixed graph. Then both \probRMLOshort and \probIOBshort can be solved in time $2^{O(\sqrt{k})}+n^{O(1)}$ when the input is an $n$-vertex $H$-minor-free graph.
\end{restatable}

Algorithms with a similar running time --- but with additional $\log k$ factor in the exponent --- were obtained by Dorn et al.~\cite{DornFLRS13}. If one follows their approach, then for \probRMLOshort it is possible to shave off this factor in the exponent just by replacing the dynamic programming on a tree decomposition with a more modern one. However, for \probIOBshort the logarithmic factor is caused also by an application of the layering technique, and hence such a replacement and manipulation of parameters in layering would only improve $\log k$ to $\sqrt{\log k}$. By constructing a truly linear kernel we are able to shave this factor completely off. We remark that the running time given by Theorem~\ref{thm:subexp} is optimal under the Exponential Time Hypothesis even on planar graphs; see Section~\ref{sec:subexp} for further details.

To prove Theorems~\ref{thm:mainLOB} and~\ref{thm:mainIOB}, we revisit the quadratic kernels on general graphs given by Daligault and Thomass\'e~\cite{daligault09} (for \probRMLOshort) and by Gutin et al.~\cite{Gutin-quadratic} (for \probIOBshort). For \probRMLOshort we need to modify the approach substantially, as the core reduction rule used by Daligault and Thomass\'e is the following: whenever there is a {\em{cutvertex}} in the graph --- a vertex whose removal makes some other vertex not reachable from $r$ --- then it is safe to {\em{shortcut}} it: remove it and add an arc from every its inneighbor to every its outneighbor. Observe that an application of this rule does not preserve $H$-minor-freeness, so the kernel of Daligault and Thomass\'e~\cite{daligault09} may start with an $H$-minor free graph and go outside of this class. 

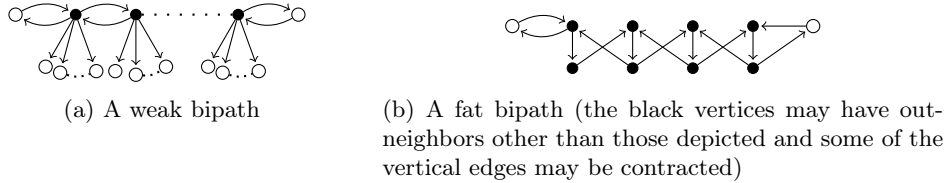
\begin{figure}[t]
\centering
%
%
%
%
%
\subfloat[][A weak bipath]{
\centering
\begin{tikzpicture}[scale=\scalefactor]
\tikzstyle{whitenode}=[draw,circle,fill=white,minimum size=5pt,inner sep=0pt]
\tikzstyle{blacknode}=[draw,circle,fill=black,minimum size=4pt,inner sep=0pt]
\tikzstyle{texte}=[circle,minimum size=5pt,inner sep=0pt]
\draw (0,0) node[whitenode] (u) {}
 ++(0:1cm) node[blacknode] (y1) {}
  ++(0:1cm) node[blacknode] (y2) {}
   ++(0:1.7cm) node[blacknode] (yp) {}
      ++(0:1cm) node[whitenode] (v) {};  
      
      \draw (u) edge [post,bend left] node {} (y1); 
      \draw (y1) edge [post,bend left] node {} (u); 
      
      \draw (y1)
      ++ (-120:1cm) node[whitenode] (y11) {};
      \draw (y1) edge [post] node {} (y11); 
      \draw (y1)
      ++ (-105:1cm) node[whitenode] (y12) {};
      \draw (y1) edge [post] node {} (y12); 
      
      \draw (y1)
      ++ (-70:1cm) node[whitenode] (y13) {};
      \draw (y1) edge [post] node {} (y13); 

      \draw[dotted, thick,bend right=20] (y12) edge node {} (y13);

      \draw (y2)
      ++ (-110:1cm) node[whitenode] (y11) {};
      \draw (y2) edge [post] node {} (y11); 
      \draw (y2)
      ++ (-90:1cm) node[whitenode] (y12) {};
      \draw (y2) edge [post] node {} (y12); 
      
      \draw (y2)
      ++ (-60:1cm) node[whitenode] (y13) {};
      \draw (y2) edge [post] node {} (y13); 

      \draw[dotted, thick,bend right=20] (y12) edge node {} (y13);

      \draw (yp)
      ++ (-120:1cm) node[whitenode] (y11) {};
      \draw (yp) edge [post] node {} (y11); 
      \draw (yp)
      ++ (-105:1cm) node[whitenode] (y12) {};
      \draw (yp) edge [post] node {} (y12); 
      
      \draw (yp)
      ++ (-70:1cm) node[whitenode] (y13) {};
      \draw (yp) edge [post] node {} (y13); 

      \draw[dotted, thick,bend right=20] (y12) edge node {} (y13);

      \draw (y2) edge [post,bend left] node {} (y1); 
      \draw (y1) edge [post,bend left] node {} (y2);
      
         \draw[loosely dotted, thick] (y2) edge node {} (yp); 
      
      \draw (yp) edge [post,bend left] node {} (v); 
      \draw (v) edge [post,bend left] node {} (yp);
\end{tikzpicture}
\label{fig:weakbipath}
}
\qquad
\subfloat[][A fat bipath (the black vertices may have out-neighbors other than those depicted and some of the vertical edges may be contracted)]{
\centering
\begin{tikzpicture}[scale=\scalefactor]
\useasboundingbox (-2cm,-0.5cm) rectangle (7cm, -1cm);
\tikzstyle{whitenode}=[draw,circle,fill=white,minimum size=5pt,inner sep=0pt]
\tikzstyle{blacknode}=[draw,circle,fill=black,minimum size=4pt,inner sep=0pt]
\tikzstyle{texte}=[circle,minimum size=5pt,inner sep=0pt]
\draw (0,0) node[whitenode] (u) {}
 ++(0:1cm) node[blacknode] (y1) {}
  ++(0:1cm) node[blacknode] (y2) {}
   ++(0:1cm) node[blacknode] (y3) {}
   ++(0:1cm) node[blacknode] (y4) {}
      ++(0:1cm) node[whitenode] (v) {};  
      
      \draw (u) edge [post,bend left] node {} (y1); 
      \draw (y1) edge [post,bend left] node {} (u); 
      
      \draw (y1) 
       ++(-90:0.7cm) node[blacknode] (y1c) {};
       \draw (y1) edge [post] node {} (y1c); 
             
\draw (y2)
++ (-90:0.7cm) node[blacknode] (y2c) {};     
\draw (y3)
++ (-90:0.7cm) node[blacknode] (y3c) {};  
\draw (y4)
++ (-90:0.7cm) node[blacknode] (y4c) {};

\draw (y2) edge [post] node {} (y2c);
\draw (y3) edge [post] node {} (y3c);
\draw (y4) edge [post] node {} (y4c);

\draw (y2c) edge [post] node {} (y1);
\draw (y1c) edge [post] node {} (y2);

\draw (y3c) edge [post] node {} (y2);
\draw (y2c) edge [post] node {} (y3);

\draw (y4c) edge [post] node {} (y3);
\draw (y3c) edge [post] node {} (y4);

\draw (y4) edge [pre] node {} (v);
\draw (y4c) edge [post] node {} (v);

\end{tikzpicture}
\label{fig:fatbipath}
}
\caption{Different types of bipaths
}
\label{fig:bipaths}
\vspace{-5mm}
\end{figure}

To circumvent this problem, we exploit the structural approach proposed by Dorn et al.~\cite{DornFLRS13}. While not achieving a linear kernel in the precise sense, Dorn et al. are able to simplify the structure of the instance so that it fits their purposes. The main idea is to contract cutedges instead of shortcutting cutvertices, which is a weaker operation that, however, preserves $H$-minor-freeness. Dorn et al. are able to expose a set of so-called {\em special} vertices $S$ of size linear in $k$ such that $G\setminus S$ has constant pathwidth; this is already enough to employ the bidimensionality technique. To obtain a linear kernel, we need to perform a much more refined analysis of the instance. More precisely, we construct a set $S$ with $|S|=O(k)$ such that $G\setminus S$ is consists of {\em{fat bipaths}}: chains as depicted in Figure~\ref{fig:bipaths}, possibly with some vertical (cut)edges contracted, and with outgoing edges with heads in $S$. After contracting the vertical edges, such a fat bipath becomes a weak bipath: a bidirectional path possibly with outgoing edges with heads in $S$. Weak bipaths are crucial in the structural approach of Daligault and Thomass\'e~\cite{daligault09}, and our fat bipaths can be thought of as more fuzzy variants of weak bipaths that cannot be reduced due to the inability to shortcut cutvertices.

To obtain a linear kernel, we need to reduce the total length of the fat bipaths. For this, we use concepts borrowed from the analysis of graph classes of bounded expansion, of which $H$-minor-free classes are special cases. Very recently Drange et al.~\cite{DrangeDFKLPPRSVS14} announced a linear kernel for \probDS on graph classes of bounded expansion, and the main tool used there is the analysis of the number of different neighborhoods that can arise in a graph $G$ from a bounded expansion graph class $\mc G$. Essentially, there is a constant $c$ such that for every $X\subseteq V(G)$ there are only $O(|X|)$ vertices in $V(G)\setminus X$ that neighbor more than $c$ vertices in $X$, while the vertices of $V(G)\setminus X$ that neighbor at most $c$ vertices in $X$ can be grouped into $O(|X|)$ classes with exactly the same neighborhoods. We apply this idea to the instance at hand with the interior of every fat bipath contracted to one vertex. Thus, we infer that there are only $O(k)$ fat bipaths that neighbor more than $c$ special vertices, and their total length can be bounded by $O(k)$ using reduction rules. On the other hand, fat bipaths with neighborhoods of size at most $c$ are reduced within their neighborhood classes, whose number is also $O(k)$.

The same neighborhood diversity argument plays the key role also in our kernel for \probIOBshort (Theorem~\ref{thm:mainIOB}). The idea of Gutin et al.~\cite{Gutin-quadratic} is that if a solution to the instance cannot be found immediately by a simple local search, then one can expose a vertex cover $U$ of size at most $2k$ in the graph. The vertices of $V(D)\setminus U$ are reduced using an argument involving crown decompositions in an auxiliary graph where vertices of $V(D)\setminus U$ are matched to pairs of adjacent vertices of $U$; this gives a quadratic dependence on $k$ of the size of the kernel. We observe that in case $D$ belongs to a class of bounded expansion, then there is only $O(|U|)=O(k)$ vertices of $V(D)\setminus U$ that have super-constant neighborhood size in $U$, while the others are grouped into $O(|U|)=O(k)$ neighborhood classes, each of which can be reduced to constant size using the same approach via crown decompositions.

For \probIOBshort we did not need any edge contractions in the reduction rules, so the kernelization procedure works on any graph class of bounded expansion. However, for \probRMLOshort it seems necessary to apply contractions of subgraphs of unbounded diameter, e.g. to reduce long paths that contribute with at most one leaf to the solution. While the last phase relies mostly on the bounded expansion properties of the graph class, we need to allow contractions in the reduction rules and hence we do not achieve the same level of generality as for \probIOBshort. 

We see the additional advantage of our approach in its simplicity. Instead of relying on complicated decomposition theorems for $H$-minor free graphs, which is a standard technique in such a setting, we use the methodology proposed by Drange et al.~\cite{DrangeDFKLPPRSVS14}: To exploit purely combinatorial, abstract notions of sparsity, like the concept of bounded expansion, and in this manner obtain a much cleaner treatment of the considered graph classes. Of particular interest is the usefulness of the approach of grouping vertices according to their neighborhoods in some fixed modulator $X$, which is the key idea in~\cite{DrangeDFKLPPRSVS14}. 

\paragraph*{Organization of the paper.} In Section~\ref{sec:prelims} we give preliminaries on tools borrowed from the analysis of graph classes of bounded expansion. Sections~\ref{sec:LOB-kernel} and~\ref{sec:IOB-kernel} are devoted to the proofs of Theorems~\ref{thm:mainLOB} and~\ref{thm:mainIOB}, respectively. In Section~\ref{sec:subexp} we derive Theorem~\ref{thm:subexp} as a corollary, and discuss the optimality of the obtained algorithms. We conclude with some closing remarks in Section~\ref{sec:conclude}.

\paragraph*{Notation.}
In this paper we deal with digraphs. 
Let $D=(V,E)$ be a digraph.
Consider an edge $(u,v)\in E$. We say that $v$ is an {\em out-neighbor} of $u$ and $u$ is an {\em in-neighbor} of $v$.
We also say that $v$ is a {\em head} and $u$ is a {\em tail} of $(u,v)$.
Also, $v$ and $u$ are {\em neighbors} of each other.
For any vertex $v$ we denote the sets of all its neighbors, out-neighbors and in-neighbors by $N_D(v)$, $N_D^+(v)$ and $N_D^-(v)$, respectively.
Moreover, the {\em degree}, {\em out-degree}, and {\em in-degree} of $v$ are defined as $\deg_D(v)=|N(v)|$, $\deg^+_D(v)=|N^+(v)|$, and $\deg^-_D(v)=|N^-(v)|$.
We omit the subscripts and write simple $N(v)$ or $\deg(v)$ whenever it does not lead to ambiguity.
For any set $S\subseteq V$ we denote $N_D^-(S)=\bigcup_{v\in S}N_D^-(v)\setminus S$ and $N_D^+(S)=\bigcup_{v\in S}N_D^+(v)\setminus S$.

\section{Preliminaries on Sparse Graphs}\label{sec:prelims}

In this section we recall some definitions and basic properties of sparse graphs, in particular $d$-degenerate graphs, bounded expansion graphs and $H$-minor-free graphs.
Although in this section we refer to undirected graphs, all the notions and claims apply also to digraphs, by looking at the underlying undirected graph.

We say that graph $G$ is $k$-degenerate when every subgraph of $G$ has a vertex of degree at most $k$. 
This implies (and in fact is equivalent to) that we can remove all the edges of $G$ by repeatedly removing vertices of degree at most $k$.
It follows that $G$ has at most $k|V(G)|$ edges.
The {\em degeneracy} of a graph is the smallest value of $k$ for which it is $k$-degenerate. 
Degeneracy is closely linked to {\em arboricity}, i.e., minimum number ${\rm arb}(G)$ of forests that cover the edges of $G$: it is well known that degeneracy is between ${\rm arb}(G)$ and $2\,{\rm arb}(G)$.

Recall that a graph $H$ is a {\em{minor}} of graph $G$ if there exists a {\em{minor model}} $(I_u)_{u\in V(H)}$ of $H$ in $G$ that satisfies the following properties:
\begin{itemize}
\item sets $I_u$ for $u\in V(H)$ are pairwise disjoint subsets of $V(G)$ that moreover induce connected subgraphs;
\item for each $uv\in E(H)$, there exist $x_u\in I_u$ and $x_v\in I_v$ such that $x_ux_v\in E(G)$.
\end{itemize}
For any fixed graph $H$, the class of {\em{$H$-minor-free graphs}} comprises all the graphs $G$ that do not have $H$ as a minor. 
Note that $H$-minor free graphs are closed under minor operations: vertex and edge deletions, and edge contractions.
For example, graphs embeddable into a constant genus surface are $H$-minor-free for some fixed $H$; in particular, by Kuratowski's theorem, planar graphs are $K_5$-minor free and $K_{3,3}$-minor free. 
The celebrated Decomposition Theorem of Robertson and Seymour implies, in a sense, a reverse implication: every $H$-minor-free graph can be decomposed into parts that are close to being embeddable into surfaces of bounded genus. The following lemma provides a connection between $H$-minor-free graphs and degeneracy.

\begin{lemma}[see Lemma 4.1 in~\cite{Sparsity}]
\label{lem:degeneracy-h-minor}
 Any $H$-minor free graph is $d_H$-degenerate for $d_H=O(|H|\sqrt{\log |H|})$.
\end{lemma}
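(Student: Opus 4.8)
The plan is to reduce the statement to the classical extremal bound on the edge density of graphs that exclude a complete graph as a minor. First I would observe that $H$ is isomorphic to a subgraph of the complete graph $K_{|H|}$ on $|H|$ vertices; hence any graph that has $K_{|H|}$ as a minor also has $H$ as a minor, since restricting a minor model of $K_{|H|}$ to the branch sets indexed by $V(H)$ already yields a minor model of $H$. Taking the contrapositive, every $H$-minor-free graph $G$ is also $K_{|H|}$-minor-free.

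Next I would invoke the theorem of Kostochka and Thomason (see also~\cite{Sparsity}): there is a universal constant $c$ such that every graph with no $K_t$ minor has at most $c\,t\sqrt{\log t}$ times as many edges as vertices. Since being $K_t$-minor-free is a property inherited by subgraphs, this density bound applies to \emph{every} subgraph of $G$ as well. Consequently every subgraph of $G$ has average degree at most $2c\,|H|\sqrt{\log|H|}$, and in particular contains a vertex of degree at most $2c\,|H|\sqrt{\log|H|}$. By the definition of degeneracy, this means that $G$ is $d_H$-degenerate with $d_H = \floor{2c\,|H|\sqrt{\log|H|}} = O(|H|\sqrt{\log|H|})$, as claimed.

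The only genuine content of the argument lies in the Kostochka--Thomason density bound, which is a nontrivial extremal result; the reduction from $H$-minor-freeness to $K_{|H|}$-minor-freeness and the passage from an edge-density bound to a degeneracy bound are both routine, the latter relying only on the observation that the excluded-minor property is monotone under taking subgraphs. Thus the main obstacle is external to this write-up, which is precisely why the lemma is quoted from the literature rather than proved from scratch. I remark that for the stated order of magnitude $O(|H|\sqrt{\log|H|})$ one does not need the sharp constant, so Kostochka's original counting argument on dense subgraphs already suffices, and Thomason's optimization of the constant is not required here.
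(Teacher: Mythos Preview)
Your argument is correct: reducing $H$-minor-freeness to $K_{|H|}$-minor-freeness, invoking the Kostochka--Thomason density bound, and passing from a hereditary edge-density bound to a degeneracy bound is precisely the standard derivation. The paper itself gives no proof at all---it simply cites Lemma~4.1 of~\cite{Sparsity}---so there is nothing to compare; your write-up is exactly the content one would find behind that citation.
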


Let $r$ be a nonnegative integer. If a minor model $(I_u)_{u\in V(H)}$ satisfies in addition that $G[I_u]$ has radius at most $r$ for each $u\in V(H)$, then $(I_u)_{u\in V(H)}$ is an {\em{$r$-shallow minor model of $H$}}, and we say that $H$ is an {\em{$r$-shallow minor of $G$}}. If $\mc G$ is a class of graphs, then by $\mc G \nab r$ we denote the class of all $r$-shallow minors of graphs from $\mc G$; note that $\mc G\nab 0$ are all subgraphs of graphs of $\mc G$. We now define the {\em{greatest reduced average degree}} ({\em{grad}}) of a class $\mc G$ at depth $r$ as 
$$\grad_r(\mc G)=\sup_{H\in \mc G\nab r} \frac{|E(H)|}{|V(H)|}.$$
That is, we take the greatest edge density among the $r$-shallow minors of $\mc G$. Class $\mc G$ is said to be of {\em{bounded expansion}} if $\grad_r(\mc G)$ is a finite constant for every $r$. Observe that then the graphs from $\mc G$ are in particular $d$-degenerate for $d=\lfloor 2\grad_0(\mc G)\rfloor$. For a single graph $G$, we denote $\grad_r(G)=\grad_r(\{G\})$.

Consider the class $\mc G_H$ of $H$-minor-free graphs. 
By Lemma~\ref{lem:degeneracy-h-minor}, every graph $G\in \mc G_H$ has at most $d_H\cdot |V(G)|$ edges. Since $\mc G_H$ is closed under taking minors, it follows that $\mc G_H\nab r=\mc G_H$ for every nonnegative $r$, so also $\grad_r(\mc G_H)\leq d_H$. Thus, $H$-minor-free graphs form a class of bounded expansion with all the grads bounded independently of $r$.

In this paper we do not use the original definition of bounded expansion graphs, but we rather rely on the point of view of diversity of neighborhoods, which was found to be very useful in~\cite{DrangeDFKLPPRSVS14}. More precisely, we now use the following result from \cite[Lemma~6.6]{BndExpKernels13}; the statement with adjusted notation is taken verbatim from~\cite{DrangeDFKLPPRSVS14}.

\begin{proposition}[Proposition 2.5 of \cite{DrangeDFKLPPRSVS14}]\label{prop:bipartite-general}
  Let $G$ be a graph, $X \subseteq V(G)$ be a vertex subset, and $R=V(G)\setminus X$.
  Then for every integer $p \geq \grad_1(G)$ it holds that
  \begin{enumerate}
  \item $| \{ v \in R \colon |N(v)\cap X| \geq 2p \}| \leq 2p \cdot
    |X|$, and
  \item $| \{ A \subseteq X \colon |A| < 2p \textrm{ and } \exists_{v\in R}\ A =
      N(v) \cap X \}| \leq (4^p + 2p) |X|$.
  \end{enumerate}
  Consequently, the following bound holds:
  \[
   | \{ A\subseteq X \colon \exists_{v\in R}\ A = N(v)\cap X \}| \leq 
  \left( 4^{\grad_1(G)} + 4\grad_1(G) \right) \cdot |X| .
  \]
\end{proposition}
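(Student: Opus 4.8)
The plan is to handle the two items separately, in both cases using that $\grad_1(G)\le p$ forces $G$ to be $2p$-degenerate: since $\grad_0(G)\le\grad_1(G)\le p$, every subgraph of $G$ has average degree at most $2p$, hence a vertex of degree at most $2p$. Fix a degeneracy ordering $\prec$ of $V(G)$, so that every vertex has at most $2p$ neighbors that are $\prec$-smaller, and orient every edge towards its $\prec$-smaller endpoint; call the result $\vec G$, so that $N^+_{\vec G}(v)$ is exactly the set of $\prec$-smaller neighbors of $v$ and $|N^+_{\vec G}(v)|\le 2p$ for all $v$. By a \emph{trace} I mean a set of the form $N(v)\cap X$ with $v\in R$.

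For item (1): let $R'=\{v\in R:|N(v)\cap X|\ge 2p\}$ and consider $G'=G[X\cup R']$. Being a subgraph of $G$ it is a $0$-shallow, hence a $1$-shallow, minor of $G$, so $|E(G')|\le\grad_1(G)\cdot|V(G')|\le p(|X|+|R'|)$; on the other hand every vertex of $R'$ has at least $2p$ neighbors in $X$, so $|E(G')|\ge 2p|R'|$. Combining the two inequalities gives $2p|R'|\le p|X|+p|R'|$, that is $|R'|\le|X|\le 2p|X|$, which also bounds the number of distinct traces of size at least $2p$.

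For item (2), I would split the small traces ($|A|<2p$) according to how a realizing vertex lies in $\prec$. Call a small trace $A$ \emph{backward} if some $v_A\in R$ realizes it with all of $A$ strictly $\prec$-below $v_A$, i.e.\ with $A\subseteq N^+_{\vec G}(v_A)$, and \emph{forward} otherwise. If $A$ is forward, then every $v\in R$ realizing $A$ has a $\prec$-larger neighbor inside $X$, hence lies in $\bigcup_{x\in X}\bigl(N^+_{\vec G}(x)\cap R\bigr)$, a set of size at most $\sum_{x\in X}|N^+_{\vec G}(x)|\le 2p|X|$; distinct forward traces then correspond to distinct vertices of this set, so there are at most $2p|X|$ of them. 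This yields the additive $2p|X|$ term (the empty trace, if present, contributes one more and is absorbed below).

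The core of item (2) — and the step I expect to be the main obstacle — is bounding the number of \emph{backward} traces by $4^p|X|$. Each backward trace $A$ is a subset of the set $N^+_{\vec G}(v_A)$ of size at most $2p$, whose elements share $v_A$ as a common in-neighbor. The natural strategy is a charging scheme: assign to each nonempty backward trace $A$ an element $\chi(A)\in A$ so that all traces charged to a fixed $x\in X$ are contained in a single set $Z_x\subseteq X$ with $|Z_x|\le 2p$; then the traces charged to $x$ form a subfamily of the $2^{2p}=4^p$ subsets of $Z_x$, and summing over $x$ gives the bound. The obstacle is that a naive choice — say $\chi(A)=\min_{\prec}A$ — can send unboundedly many traces to a single vertex, because the out-neighborhoods of the witnesses $v_A$ sharing a given $\chi$-value need not jointly fit in a set of bounded size; this is exactly the place where $\grad_1(G)\le p$ has to be used beyond mere degeneracy. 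The standard remedy is to pass to a bounded-out-degree augmentation of $\vec G$ — in the style of the transitive--fraternal augmentations of Ne\v{s}et\v{r}il and Ossona de Mendez, or, alternatively, to a low-treedepth coloring of $G$ with a number of colors bounded in terms of $p$ — after which the elements of every $N^+_{\vec G}(v)$ become pairwise linked, so that one of them may serve as $\chi(A)$ and $Z_x$ as its out-neighborhood in the augmentation; the only remaining point is that such an augmentation keeps out-degrees bounded by a function of $p$, which is where the sparsity hypothesis bites. This last statement is precisely the neighborhood-complexity estimate behind \cite[Lemma~6.6]{BndExpKernels13}, which I would either invoke directly or reprove along these lines. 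Combining the at most $2p|X|$ forward traces with the at most $4^p|X|$ backward traces settles item (2), and adding item (1) and taking $p=\grad_1(G)$ (rounded up if it is not an integer) yields the stated bound $\bigl(4^{\grad_1(G)}+4\grad_1(G)\bigr)|X|$.
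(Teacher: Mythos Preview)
The paper does not contain its own proof of this proposition: it is quoted verbatim from \cite{DrangeDFKLPPRSVS14} and attributed to \cite[Lemma~6.6]{BndExpKernels13}. So there is nothing to compare your argument against; one can only assess whether your attempt stands on its own.

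Your argument for item~(1) is correct and in fact yields the stronger bound $|R'|\le |X|$. The forward-trace part of item~(2) is also fine.

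The backward-trace part, however, is not a proof. You correctly identify that the naive charge $\chi(A)=\min_\prec A$ fails and that degeneracy alone is insufficient; but then you only gesture at transitive--fraternal augmentations or low-treedepth colorings without carrying out any of it, and you do not verify that either route produces the specific constant $4^p$ (a single augmentation step makes the elements of $A$ pairwise adjacent, but the resulting out-degree bound depends on $\grad_1$ in a way you never pin down). Worse, your fallback is to ``invoke directly'' \cite[Lemma~6.6]{BndExpKernels13} --- but the paper explicitly states that the proposition you are trying to prove \emph{is} that lemma, so this is circular. As written, item~(2) is a sketch of a plan with its crucial step outsourced to the very statement under consideration; to make it a proof you would have to actually execute the augmentation argument and track the constants.
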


We need a strengthening of the first claim of Proposition~\ref{prop:bipartite-general}.

\begin{lemma}
\label{lem:big-degrees}
 Let $G=(X,Y,E)$ be a bipartite graph of degeneracy at most $d$. Then,
 \[\sum_{\substack{y\in Y\\ \deg_G(y) > 2d}}\deg_G(y) \leq 2d|X|.\]
\end{lemma}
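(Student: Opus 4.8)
The plan is to use the degeneracy ordering of $G$ together with a charging argument, essentially refining the standard proof that a $d$-degenerate graph has at most $d|V|$ edges. Fix a degeneracy ordering $v_1,v_2,\dots,v_n$ of $V(G)=X\cup Y$ such that each $v_i$ has at most $d$ neighbors among $v_1,\dots,v_{i-1}$; this exists because every subgraph has a vertex of degree at most $d$ (peel such vertices off one at a time and reverse the order). Orient every edge of $G$ from the later endpoint to the earlier endpoint in this ordering, so that every vertex has out-degree at most $d$. Let $Y' = \{y\in Y : \deg_G(y) > 2d\}$ be the set of high-degree vertices on the $Y$ side; the quantity to bound is $\sum_{y\in Y'}\deg_G(y)$.

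Next I would split, for each $y\in Y'$, its incident edges into those oriented \emph{out of} $y$ and those oriented \emph{into} $y$. The out-edges are easy: they contribute at most $d$ per vertex of $Y'$, hence at most $d|Y'|$ in total, and since $G$ is bipartite with $Y'\subseteq Y$ every such edge has its head in $X$; moreover $|Y'| \le \sum_{y\in Y'}\deg_G(y)/(2d)$, but more simply I will bound $|Y'| \le |X|$ is \emph{not} automatic, so instead I keep $d|Y'|$ and handle it at the end. The in-edges are the crux: an edge oriented into $y\in Y'$ comes from some vertex $w$ that appears later in the ordering, and again by bipartiteness $w\in X$. So all in-edges at vertices of $Y'$ are edges from $X$ to $Y'$, and each vertex $x\in X$ sends at most $d$ such edges (its out-degree is at most $d$). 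Therefore the total number of in-edges at $Y'$ is at most $d|X|$.

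Combining, $\sum_{y\in Y'}\deg_G(y) \le d|Y'| + d|X|$. To finish I need $d|Y'|\le d|X|$, i.e.\ $|Y'|\le|X|$: indeed each $y\in Y'$ has at least one in-edge \emph{or} the bound needs care — but here is the clean way. Every $y\in Y'$ has $\deg_G(y) > 2d \ge d+1$, so it has strictly more than $d$ neighbors, hence strictly more than $d-(\text{out-degree})\ge 0$ in-neighbors; more precisely it has at least $\deg_G(y) - d \ge d+1 \ge 1$ in-neighbors in $X$. Thus summing in-edges gives $\sum_{y\in Y'}(\deg_G(y)-d) \le \sum_{y\in Y'} (\#\text{ in-edges at }y) \le d|X|$, which rearranges directly to $\sum_{y\in Y'}\deg_G(y) \le d|X| + d|Y'|$, and since each term of the left side exceeds $2d$ we get $2d|Y'| < \sum_{y\in Y'}\deg_G(y) \le d|X| + d|Y'|$, hence $|Y'| < |X|$, and plugging back, $\sum_{y\in Y'}\deg_G(y) \le d|X| + d|Y'| \le 2d|X|$, as claimed.

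The only real obstacle is making the in-edge/out-edge split interact correctly with the bipartiteness: one must be careful that it is precisely the bound ``out-degree $\le d$ for vertices of $X$'' that controls the in-edges at $Y'$, not anything about $Y$. Once the orientation is fixed this is immediate, so I expect no serious difficulty; the argument is short and entirely elementary.
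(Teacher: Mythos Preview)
Your proof is correct and is essentially the same argument as the paper's: both obtain the key inequality $\sum_{y\in Y'}\deg_G(y)\le d|X|+d|Y'|$ from $d$-degeneracy of $G[X\cup Y']$ and then use $\deg_G(y)>2d$ to finish. The paper simply applies the edge bound $|E(G[X\cup Y'])|\le d(|X|+|Y'|)$ directly rather than going through an acyclic orientation, and concludes via $\deg_G(y)-d>\tfrac12\deg_G(y)$ instead of the detour through $|Y'|<|X|$, but the substance is identical.
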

\begin{proof}
Let $Z=\{y\in Y \colon \deg_G(y) > 2d\}$. 
Consider $G'=G[X\cup Z]$, and observe that $|E(G')| = \sum_{y\in Z} \deg_G(y)$. Since $G'$ is a subgraph of $G$, we obtain that
$$d \geq \frac{|E(G')|}{|V(G')|} = \frac{\sum_{y\in Z} \deg_G(y)}{|X|+|Z|}.$$
Consequently $\sum_{y\in Z} \deg(y)\leq d|X|+d|Z|$, so 
$$\sum_{y\in Z} (\deg_G(y) - d)\leq d|X|.$$
Observe that since $\deg_G(y)>2d$ for each $y\in Z$, we have that $\deg_G(y)-d>\deg_G(y)/2$. 
Hence it follows that $\sum_{y\in Z} \deg_G(y) \leq 2d|X|$.
\end{proof}

Note that Proposition~\ref{prop:bipartite-general} has the following corollary when applied to $H$-minor-free graphs.

\begin{corollary}\label{cor:boundedsmallneighborhood}
Let $H$ be a graph. There exists $c_H=2^{O(|H|\sqrt{\log |H|})}$ such that in any $H$-minor-free bipartite graph $G=(X,Y,E)$, there are at most $c_H \cdot |X|$ vertices in $Y$ with pairwise distinct neighborhoods in $X$.
\end{corollary}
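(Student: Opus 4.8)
The plan is to derive Corollary~\ref{cor:boundedsmallneighborhood} directly from the consequence stated at the end of Proposition~\ref{prop:bipartite-general}, combined with the degeneracy bound for $H$-minor-free graphs from Lemma~\ref{lem:degeneracy-h-minor}. First I would recall that the quantity we want to bound --- the number of vertices in $Y$ with pairwise distinct neighborhoods in $X$ --- is exactly the cardinality of the family $\{A \subseteq X \colon \exists_{v \in Y}\ A = N(v) \cap X\}$, since for vertices of $Y$ in a bipartite graph $G = (X, Y, E)$ we have $N(v) \subseteq X$, hence $N(v) \cap X = N(v)$. So it suffices to show this family has size at most $c_H \cdot |X|$ for a suitable $c_H = 2^{O(|H|\sqrt{\log|H|})}$.

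Next I would invoke Proposition~\ref{prop:bipartite-general} with the ground graph $G$, the set $X$ as the modulator, and $R = V(G) \setminus X = Y$. The consequence of that proposition gives the bound $(4^{\grad_1(G)} + 4\grad_1(G)) \cdot |X|$ on the size of that family. It remains only to bound $\grad_1(G)$ in terms of $|H|$. Since $G$ is $H$-minor-free, every shallow minor of $G$ is also $H$-minor-free (the class of $H$-minor-free graphs is closed under minors, and a shallow minor is in particular a minor), so by Lemma~\ref{lem:degeneracy-h-minor} every $1$-shallow minor $H'$ of $G$ is $d_H$-degenerate with $d_H = O(|H|\sqrt{\log|H|})$, hence $|E(H')| \leq d_H \cdot |V(H')|$. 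Taking the supremum over all $H' \in \{G\} \nab 1$ yields $\grad_1(G) \leq d_H$. (This is exactly the reasoning already spelled out in the paragraph following Lemma~\ref{lem:degeneracy-h-minor} for the class $\mathcal{G}_H$, specialized to a single graph.)

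Finally I would plug this in: the number of distinct neighborhoods is at most $(4^{d_H} + 4 d_H) \cdot |X| \leq c_H \cdot |X|$ where $c_H := 4^{d_H} + 4 d_H$. Since $d_H = O(|H|\sqrt{\log|H|})$, we get $c_H = 2^{O(|H|\sqrt{\log|H|})} + O(|H|\sqrt{\log|H|}) = 2^{O(|H|\sqrt{\log|H|})}$, which is the claimed form. This completes the proof.

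There is essentially no obstacle here --- the corollary is a routine specialization of the general bounded-expansion statement to the $H$-minor-free case, and the only minor point requiring care is the identification of ``pairwise distinct neighborhoods in $X$'' with the size of the neighborhood family, together with the observation that the $\grad_1$ of an $H$-minor-free graph is controlled by its degeneracy bound. One should also make sure the growth-rate bookkeeping for $c_H$ is stated cleanly, absorbing the additive $O(|H|\sqrt{\log|H|})$ term into the exponential.
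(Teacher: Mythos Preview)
Your proposal is correct and follows exactly the approach the paper intends: the corollary is stated without proof, merely as the specialization of Proposition~\ref{prop:bipartite-general} to $H$-minor-free graphs, using the bound $\grad_1(G)\le d_H=O(|H|\sqrt{\log|H|})$ obtained from Lemma~\ref{lem:degeneracy-h-minor}. Your identification of the number of distinct neighborhoods with the size of the family $\{A\subseteq X:\exists_{v\in Y}\,A=N(v)\cap X\}$ and the absorption of the lower-order term into $2^{O(|H|\sqrt{\log|H|})}$ are exactly right.
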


\section{\probRMLO in $H$-minor-free graphs}\label{sec:LOB-kernel}

In this section we deal with rooted digraphs, i.e., digraphs with a vertex $r$, called {\em root}, of in-degree 0. 
In such digraphs we redefine some standard connectivity notions as follows.
Let $(D,r)$ be a rooted digraph.
We say that $D$ is {\em connected} when every vertex of $D$ is reachable from $r$.
A {\em cut-vertex} is any vertex $v\in V(D)\setminus\{r\}$ such that $D-r$ is not connected.
The set of all cut-vertices of $D$ is denoted by $\cv(D)$.
We say that $D$ is {\em 2-connected} if $D$ has no cut-vertex (equivalently, for every vertex $v\in V(D)\setminus\{r\}$ there are at least two paths from $r$ to $v$ that do not share internal vertices).
Similarly, a {\em cut-edge} is any edge $(u,v)\in E(D)$ such that $D-(u,v)$ is not connected. 
We say that $D$ is {\em 2-edge-connected} if $D$ has no cut-edge (equivalently, for every vertex $v\in V(D)\setminus\{r\}$ there are at least two edge-disjoint paths from $r$ to $v$).
Note that if $(u,v)$ is a cut-edge then $u$ is a cut-vertex or $u=r$.


Given a cut-vertex $u$, or $u=r$, we define $P(u)$ as the set of {\em private neighbors} of $u$, that is, the set of out-neighbors of $u$ that are not reachable from the root in $D-u$. In particular, all the outneighbors of $r$ are its private neighbors.

By a {\em{contraction}} of edge $(a,b)$ in $D$ we mean the following operation: identify $a$ and $b$ into a newly introduced vertex $v_{(a,b)}$, replace $a$ and $b$ with $v_{(a,b)}$ in every edge of $D$, and remove all the loops and parallel edges created in this manner. Note that if $D$ is $H$-minor-free, then it remains $H$-minor-free after contractions as well. 

Following~\cite{daligault09}, we say that a vertex $v$ of $D$ is {\em special} if $v$ is of in-degree at least $3$ or there is an incoming simple edge, i.e., an edge $(u,v)$ such that $(v,u)\not\in E(D)$.
The set of all special vertices of $D$ is denoted by $\sp(D)$.

A {\em weak bipath} $P$ is a sequence of vertices $u_1,\ldots,u_p$ for some $p\ge 3$, such that for each $i=2,\ldots,p-1$, we have $N^-(u_i)=\{u_{i-1},u_{i+1}\} \subseteq N^+(u_i)$. 
The {\em length} of $P$ is $p-1$.
If additionally $N^+(u_i)=N^-(u_i)=\{u_{i-1},u_{i+1}\}$ for every $i=2,\ldots,p-1$, we say that $P$ is {\em proper bipath} (or shortly a {\em bipath}). $u_1$ and $u_p$ are called the {\em{extremities}} of $P$.

We say that a cut-edge $(u,v)$ is {\em lonely} when there is no other cut-edge with the tail in $u$. We call a cut-edge {\em branching} is there is another cut-edge with the same tail. 
The graph obtained from $D$ by contracting all lonely cut-edges is denoted by $\contractd$ and called the {\em contracted graph}. 
Consider a vertex $v$ of $\contractd$. Then either $v$ was created by contracting some set of cut-edges $Z$ in $D$ or $v\in D$.
In the prior case we define the {\em bag} $B$ of $v$ as the set of vertices incident to edges in $Z$.
Also, for any edge $(x,y)\in Z$ the vertex $x$ is called a {\em tail} of $B$ and $y$ is a {\em head} of $B$.
In the latter case, i.e., when $v\in D$, we define the bag as $B=\{v\}$ and $v$ is both head and tail of $B$.
When $B$ is a bag of $v$ we denote $v_B=v$ and $B_v = B$. If there is exactly one head and exactly one tail of $B$, then they are denoted by $h_B$ and $t_B$, respectively.

We say that bags $A$ and $B$ are {\em linked} if in $D$ there is an edge from $A$ to $B$ and an edge from $B$ to $A$.

\subsection{Our kernelization algorithm}\label{sec:Ok}

In this section we describe our algorithm which outputs a kernel for \probRMLO. 
The algorithm exhaustively applies reduction rules. 
Each reduction rule is a subroutine which finds in polynomial time a certain structure in the graph and replaces it by another structure, so that the resulting instance is equivalent to the original one. 
More precisely, we say that a reduction rule for parameterized graph problem $P$ is {\em correct} when for every instance $(D,k)$ of $P$ it returns an instance $(D',k')$ such that:
\begin{enumerate}[a)]
\item $(D',k')$ is an instance of $P$,
\item $(D,k)$ is a yes-instance of $P$ iff $(D',k')$ is a yes-instance of $P$, and
\item $k'\le k$.
\end{enumerate}

Below we state the rules we use. 
The rules are applied in the given order, i.e., in each rule we assume that the earlier rules do not apply.
We begin with some rules used in the previous works~\cite{daligault09}.

\begin{figure}[h]
 \centering
\begin{tikzpicture}[scale=\scalefactor]
\tikzstyle{whitenode}=[draw,circle,fill=white,minimum size=5pt,inner sep=0pt]
\tikzstyle{blacknode}=[draw,circle,fill=black,minimum size=4pt,inner sep=0pt]
\tikzstyle{texte}=[circle,minimum size=5pt,inner sep=0pt]
\draw (0,0) node[whitenode] (x) [label=90:$x$] {};  
\draw (x)
++ (180-40:1cm) node[blacknode] (y) [label=90:$y$] {};
\draw (x)
++ (180-10:2cm) node[whitenode] (z1) {};
\draw (x)
++ (180:2cm) node[whitenode] (z2) {};
\draw (x)
++ (180+20:2cm) node[whitenode] (z3) {};
       

\draw (z1) edge [post] node {} (x);
\draw (z2) edge [post] node {} (x);
\draw (z3) edge [post] node {} (x);

\draw (z1) edge [post] node {} (y);
\draw (z2) edge [post] node {} (y);
\draw (z3) edge [post] node {} (y);

\draw (y) edge [post] node {} (x);

   \draw[dotted, thick,bend right=20] (z2) edge node {} (z3);
   
   \draw (y)
++ (30:1.5cm) node[whitenode] (y1) {};
\draw (y)
++ (40:1.5cm) node[whitenode] (y2) {};
\draw (y)
++ (70:1.5cm) node[whitenode] (y3) {};

\draw (y) edge [post] node {} (y1);
\draw (y) edge [post] node {} (y2);
\draw (y) edge [post] node {} (y3);

   \draw[dotted, thick,bend right=20] (y2) edge node {} (y3);

\draw (1.5,0.2) node[texte] (k) {\Large $\leadsto$};

\draw (4.5,0) node[whitenode] (x) [label=90:$x$] {};  
\draw (x)
++ (180-40:1cm) node[blacknode] (y) [label=90:$y$] {};
\draw (x)
++ (180-10:2cm) node[whitenode] (z1) {};
\draw (x)
++ (180:2cm) node[whitenode] (z2) {};
\draw (x)
++ (180+20:2cm) node[whitenode] (z3) {};
       

\draw (z1) edge [post] node {} (x);
\draw (z2) edge [post] node {} (x);
\draw (z3) edge [post] node {} (x);

\draw (z1) edge [post] node {} (y);
\draw (z2) edge [post] node {} (y);
\draw (z3) edge [post] node {} (y);


   \draw[dotted, thick,bend right=20] (z2) edge node {} (z3);
   
   \draw (y)
++ (30:1.5cm) node[whitenode] (y1) {};
\draw (y)
++ (40:1.5cm) node[whitenode] (y2) {};
\draw (y)
++ (70:1.5cm) node[whitenode] (y3) {};

\draw (y) edge [post] node {} (y1);
\draw (y) edge [post] node {} (y2);
\draw (y) edge [post] node {} (y3);

   \draw[dotted, thick,bend right=20] (y2) edge node {} (y3);
   
\end{tikzpicture}
\caption{Rule~\ref{r:cutneighbor}}
\label{fig:r4}
\end{figure}

\rrule{r:unreach}{If there exists a vertex not reachable from $r$ in $D$, then reduce to a trivial no-instance.}

\rrule{r:cut-edge}{If there exists a cut-vertex $v$ with exactly one incoming edge $e$ then contract $e$. 
                  Similarly, if there exists a cut-vertex $v$ with exactly one outgoing edge $e$ then contract $e$.}

\rrule{r:bipath}{Let $P$ be a proper bipath of length $4$ in $D$. Contract any edge of $P$.}

\rrule{r:cutneighbor}{Let $x$ be a vertex of $D$. If there exists $y \in N^-(x)$ such that the removal of $N^-(x)\setminus \{y\}$ disconnects $y$ from $r$, then delete the edge $(y,x)$.}

\medskip

The correctness of the above reduction rules was proven in~\cite{daligault09}. 
(In~\cite{daligault09}, Rule~\ref{r:cut-edge} is formulated in a more general way, but we restrict it so that if the input digraph was $H$-minor-free, then so is the resulting reduced graph.)
Let us remark that Rule~\ref{r:cutneighbor} remains true if $r\in N^-(x)\setminus\{y\}$, and in this case it triggers removal of all the incoming edges apart from the one coming from the root.
Below we introduce two simple rules which will make our argument a bit easier.

\rrule{r:two-cut-edges}{If there are two cut-edges $(x_1,y_1)$ and $(x_2,y_2)$ such that $(x_1,x_2),(x_2,x_1)\in E(D)$, then contract $(x_1,x_2)$.}

\rrule{r:cut-double-edge}{If there is a cut-edge $(u,v)$ such that $(v,u)\in E(D)$, then remove $(v,u)$.}

\begin{lemma}\label{cl:r:two-cut-edges}
Rule~\ref{r:two-cut-edges} is correct.
\end{lemma}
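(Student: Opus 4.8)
The plan is to prove that Rule~\ref{r:two-cut-edges} is correct by verifying the three conditions a)--c) from the definition of a correct reduction rule. Conditions a) and c) are immediate: contracting an edge keeps the graph a rooted digraph (the root is untouched since neither $x_1$ nor $x_2$ can be $r$ --- a cut-edge has a non-root tail or... actually $x_i$ could be $r$, but even so the identified vertex can play the role of the root, so we should check this case), and the parameter $k$ stays the same, so $k'=k\le k$. The real content is condition b): that $(D,k)$ is a yes-instance if and only if $(D',k)$ is, where $D'=D/(x_1,x_2)$.

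First I would set up notation: let $v^\star$ be the vertex of $D'$ obtained by identifying $x_1$ and $x_2$. For the forward direction, I would take an outbranching $T$ of $D$ rooted at $r$ with at least $k$ leaves and argue that its image under the contraction is an outbranching of $D'$ with at least as many leaves. The key observation is that since $(x_1,x_2)$ and $(x_2,x_1)$ are both arcs of $D$, at least one of them --- say the one not in $T$ --- is ``free'', so contracting $(x_1,x_2)$ merges two adjacent tree vertices and the resulting structure is still a spanning tree; one just has to check that the number of leaves does not drop, which holds because $x_1$ and $x_2$ cannot both be leaves (they are tails of cut-edges $(x_i,y_i)$, and a cut-edge tail has positive outdegree in \emph{every} outbranching, since $y_i$ is separated from $r$ by removing $(x_i,y_i)$, so $(x_i,y_i)$ must lie on the unique $r$--$y_i$ path in $T$). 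Hence merging $x_1,x_2$ destroys at most one leaf and simultaneously... here I need to be a little careful that we do not \emph{lose} a leaf without compensation; the cleanest route is to observe that the contracted image of $T$ is a spanning connected subgraph of $D'$ with $|V(D')|-1$ edges hence a tree, and its leaf set is exactly the leaf set of $T$ minus possibly $\{x_1\}$ or $\{x_2\}$ but since neither is a leaf of $T$, the leaf set is unchanged.

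For the reverse direction, I would take an outbranching $T'$ of $D'$ rooted at $r$ with at least $k$ leaves and ``uncontract'' it: replace $v^\star$ by the arc $(x_1,x_2)$ or $(x_2,x_1)$ as appropriate, redistributing the edges incident to $v^\star$ in $T'$ between $x_1$ and $x_2$. The subtlety is that an edge of $T'$ incident to $v^\star$ corresponds in $D$ to an edge incident to $x_1$ or to $x_2$ (or possibly both options exist), and we must route things so that the result is still an outbranching --- in particular that both $x_1$ and $x_2$ become reachable from $r$ and that we introduce exactly one extra edge. Here is where I would use that $(x_1,y_1)$ and $(x_2,y_2)$ are cut-edges: this forces a rigid structure on how $x_1,x_2$ sit with respect to $r$, and I expect one needs to argue that in $T'$ the vertex $v^\star$ is reached from $r$ via an edge that lifts to an edge entering (say) $x_1$, and then $(x_1,x_2)$ is added; outgoing edges of $v^\star$ in $T'$ that must originate at $x_2$ in $D$ are then hung under $x_2$, and those that could originate at $x_1$ can stay under $x_1$. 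Again a leaf count: the only vertex whose leaf-status could change is that we split one vertex $v^\star$ into two, $x_1$ and $x_2$, and we need at least as many leaves; since $x_1$ gets the outgoing edge $(x_1,x_2)$ it is internal, but $v^\star$ was also internal in $T'$ (it is the tail of the lift of a cut-edge, so it has an out-neighbor in $T'$), so the leaf count is preserved.

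The main obstacle, I expect, is the reverse direction: making precise the claim that the cut-edge hypothesis lets us lift $T'$ to a genuine outbranching of $D$, i.e., showing that the incident edges of $v^\star$ can be consistently assigned to $x_1$ versus $x_2$. The danger is an edge $(v^\star, w)$ in $T'$ that in $D$ exists only as $(x_2,w)$ while we also need $x_2$ itself to be reached --- one must verify that the natural choice ``enter $x_1$, then go $x_1\to x_2$, then continue'' never creates a conflict, and this is exactly where the asymmetry provided by the cut-edges $(x_i,y_i)$ and the earlier rules (so that, e.g., no redundant parallel structure remains) is used. I would handle this by a careful but routine case analysis on which of $x_1,x_2$ the entering edge of $v^\star$ lifts to, using that both $(x_1,x_2)$ and $(x_2,x_1)$ are present in $D$ so that whichever of $x_1,x_2$ is ``entered'' can forward reachability to the other.
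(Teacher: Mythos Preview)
Your overall strategy matches the paper's, and your reverse direction (lifting an outbranching of $D'$ back to $D$) is essentially the paper's argument. A side remark: your worry that one of the $x_i$ might be $r$ is unfounded, since $r$ has in-degree $0$ and both $(x_1,x_2)$ and $(x_2,x_1)$ are edges, so neither $x_1$ nor $x_2$ can be $r$.

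There is, however, a genuine gap in your forward direction. You assert that ``the contracted image of $T$ is a spanning connected subgraph of $D'$ with $|V(D')|-1$ edges hence a tree''. This is false in general: if neither $(x_1,x_2)$ nor $(x_2,x_1)$ belongs to $T$ (equivalently, $x_1$ and $x_2$ are incomparable in $T$), then identifying $x_1$ and $x_2$ produces a subgraph with $|V(D)|-1=|V(D')|$ edges, and the merged vertex has in-degree $2$; you get a unicyclic graph, not an outbranching. Your phrase ``merges two adjacent tree vertices'' tacitly assumes one of the two arcs is in $T$, which is exactly the unjustified step.

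The paper fixes this by \emph{first modifying $T$} so that $(x_1,x_2)$ becomes a tree edge: since at least one of $x_1,x_2$ is not a descendant of the other (say $x_1$), one removes the in-edge of $x_2$ in $T$ and replaces it by $(x_1,x_2)$. This yields another outbranching of $D$, with at least as many leaves because $x_2$ was already internal (it is the tail of the cut-edge $(x_2,y_2)$, as you correctly observe), and now the contraction genuinely kills one tree edge and produces an outbranching of $D'$. Equivalently, you could contract first and then delete one of the two in-edges of $v^\star$; but either way the incomparable case must be handled explicitly, and your sketch does not do so.
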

\begin{proof}
  Let $D$ and $D'$ denote the graph before and after applying the reduction.
  Let $x$ be the vertex obtained by contracting $(x_1,x_2)$.
  Let $T'$ be an outbranching in $D'$. Then an outbranching of $D$ can be obtained by the following procedure: 
\begin{itemize}
\item remove $x$ and add $x_1$ and $x_2$;
\item replace the edge from the parent $p$ of $x$ by $(p,x_1)$ and $(x_1,x_2)$, or $(p,x_2)$ and $(x_2,x_1)$, depending whether $(p,x_1)\in E(D)$ or $(p,x_2)\in E(D)$;
\item for every child $c$ of $x$ if $(x_1,c)\in E(D)$, add $(x_1,c)$, otherwise add $(x_2,c)$.
\end{itemize} 
  Clearly, the number of leaves does not change.
  
  For the second direction, assume that $T$ is an outbranching of $D$.
  Then $T$ contains both $(x_1,y_1)$ and $(x_2,y_2)$, because they are cut-edges. In particular, $x_1$ and $x_2$ are not leaves in $T$.
  At least one of $x_1$, $x_2$ is not a descendant of the other in $T$, by symmetry assume $x_1$ is not a descendant of $x_2$.
  Then remove the edge from the parent of $x_2$ to $x_2$ and add the edge $(x_1,x_2)$.
  Thus we obtained an outbranching $T'$ of $D$ that contains the edge $(x_1,x_2)$ and has at least as many leaves as $T$.
  By contracting the edge $(x_1,x_2)$ in $T$ we get an outbranching of $D'$ with the same number of leaves.
\end{proof}

\begin{lemma}
Rule~\ref{r:cut-double-edge} is correct.
\end{lemma}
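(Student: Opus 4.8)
The plan is to verify the three conditions a)--c) in the definition of a correct reduction rule for the graph $D' := D - (v,u)$ with parameter $k' := k$. Condition c) is immediate. For a), I would first observe that since $(u,v)$ is a cut-edge, $u$ is a cut-vertex or $u = r$; the latter is impossible here, since then $(v,u)$ would be an edge with head $r$, contradicting that $r$ has in-degree $0$. Hence $u \neq r$, so deleting $(v,u)$ does not create an incoming edge at $r$ and $(D',r)$ is again a rooted digraph; moreover $D'$ is a subgraph of $D$, so it stays $H$-minor-free (and, in the general setting, inside the same hereditary class).

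The heart of the argument is condition b), and the key claim I would establish first is that \emph{no outbranching of $D$ contains the edge $(v,u)$}. Indeed, since $(u,v)$ is a cut-edge there is a vertex $w$ that is not reachable from $r$ in $D - (u,v)$; hence in any outbranching $T$ of $D$ the unique directed path from $r$ to $w$ must traverse $(u,v)$, so $(u,v) \in E(T)$ and $u$ is the parent of $v$ in $T$. If $(v,u)$ were also in $E(T)$, then $v$ would be the parent of $u$, producing the directed cycle $u \to v \to u$ inside the tree $T$ --- impossible. This proves the claim.

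Given the claim, condition b) follows immediately in both directions. If $T$ is an outbranching of $D$ with at least $k$ leaves, then $T$ uses no edge of $E(D) \setminus E(D') = \{(v,u)\}$, so $T$ is an outbranching of $D'$ with the same set of leaves, in particular at least $k$ of them. Conversely, $D'$ is a spanning subgraph of $D$, so every outbranching of $D'$ is also an outbranching of $D$ with the same number of leaves. Hence $(D,k)$ and $(D',k)$ are equivalent, which completes the proof. I do not expect a genuine obstacle here: the only points requiring a moment's care are the observation that $u \neq r$ (so that $D'$ remains a rooted digraph) and the fact that a cut-edge must lie in every outbranching, which is precisely what forbids the $2$-cycle $u \to v \to u$ in a tree.
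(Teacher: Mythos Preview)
Your proof is correct and follows essentially the same approach as the paper: both arguments hinge on the observation that every outbranching of $D$ must contain the cut-edge $(u,v)$ and therefore cannot contain $(v,u)$, so the outbranchings of $D$ and $D'$ coincide. The paper's version is simply more terse; your extra discussion of condition a) and of $u\neq r$ is harmless but unnecessary, since deleting an edge can never raise the in-degree of $r$.
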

\begin{proof}
  Let $D$ and $D'$ denote the graph before and after applying the reduction.
  Since $D'\subseteq D$, any outbranching of $D'$ is also an outbranching of $D$.
  Pick any outbranching $T$ of $D$. Since $(u,v)$ is a cut-edge, $(u,v)\in E(T)$. Then $(v,u)\not\in E(T)$. Hence $T$ is also an outbranching of $D'$.
  It follows that $(D,k)$ is a yes-instance iff $(D',k)$ is a yes-instance.
\end{proof}

To complete the algorithm we need a final accepting rule which is applied when the resulting graph is too big. In Section~\ref{sec:analysis} we prove that Rule~\ref{r:accept} is correct for $H$-minor-free graphs for some constant $c=2^{O(|H|\sqrt{\log|H|})}$.

\rrule{r:accept}{If the graph has more than $c \cdot k$ vertices, return a trivial yes-instance (conclude that there is a rooted outbranching with at least $k$ leaves in $D$).} 
\smallskip

We conclude with the following lemma.

\begin{lemma}
 Let $H$ be a graph.
 If the input is an $H$-minor-free graph, then the output of each of the rules 1--~\ref{r:accept} is a minor of $D$, and hence an $H$-minor-free graph.
 Moreover, each rule can be recognized and applied in polynomial time, and the degree of the polynomial does not depend on $H$.
\end{lemma}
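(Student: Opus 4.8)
The plan is to verify the two stated properties separately: (i) that each rule, applied to an $H$-minor-free digraph, outputs a minor of the input digraph, hence again $H$-minor-free; and (ii) that each rule can be recognised and applied in polynomial time with the degree of the polynomial independent of $H$.

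For property (i), I would go rule by rule. Rule~\ref{r:unreach} (and the failing case of Rule~\ref{r:accept}) outputs a trivial fixed instance, which is clearly $H$-minor-free once $|H|\ge 1$, and in any case we only reach them once the instance has been certified equivalent; so these are degenerate cases. Rules~\ref{r:cut-edge}, \ref{r:bipath}, \ref{r:two-cut-edges} each contract a single edge of $D$, and Rules~\ref{r:cutneighbor} and~\ref{r:cut-double-edge} delete a single edge of $D$; edge contraction and edge deletion are exactly the minor operations, so the output is a minor of $D$ in each case. This is precisely where the restricted form of Rule~\ref{r:cut-edge} (only contracting, rather than the more liberal operation of~\cite{daligault09}) is used: contraction preserves $H$-minor-freeness whereas shortcutting a cut-vertex does not. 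Since minors of $H$-minor-free graphs are $H$-minor-free (noted right after the definition of minor model and again before Rule~\ref{r:accept}), we conclude each output is $H$-minor-free. One should also note that iterating these rules only composes minor operations, so the reduced graph at any intermediate point is still a minor of the original $D$.

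For property (ii), I would observe that each rule asks us to locate a bounded-complexity local structure, and that the witnesses for the various connectivity notions (reachability from $r$, cut-vertices, cut-edges, private neighbours) are all computable by a constant number of graph searches (BFS/DFS or single-source reachability), each running in time $O(|V|+|E|)$, repeated over at most $O(|V|)$ or $O(|E|)$ candidate vertices/edges. Concretely: Rule~\ref{r:unreach} is one reachability computation from $r$; Rule~\ref{r:cut-edge} tests, for each edge $e$, whether $D-e$ is still connected, which identifies cut-edges, then checks the in-/out-degree condition on its tail; Rule~\ref{r:bipath} searches for a proper bipath of length $4$, which is a fixed-size subgraph pattern and can be found by brute force in polynomial time; Rule~\ref{r:cutneighbor} iterates over every vertex $x$ and every $y\in N^-(x)$ and tests reachability of $y$ from $r$ after deleting $N^-(x)\setminus\{y\}$, again a reachability computation; Rules~\ref{r:two-cut-edges} and~\ref{r:cut-double-edge} inspect pairs of cut-edges or single cut-edges together with the presence of a reverse edge, all decidable once the set of cut-edges is known; Rule~\ref{r:accept} just compares $|V(D)|$ with $c\cdot k$. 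In every case the computation is a fixed polynomial in $|V(D)|+|E(D)|$ whose exponent does not mention $H$ (the constant $c$ depends on $H$ but the \emph{degree} of the polynomial does not), and applying the rule (contracting or deleting one edge, or outputting a trivial instance) takes linear time. Exhaustive application adds only a polynomial overhead since each rule application strictly decreases $|V(D)|+|E(D)|$.

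The only point that needs a little care --- and the mild obstacle --- is handling Rule~\ref{r:accept} within this lemma: its output (a trivial yes-instance) is not literally a minor of $D$. I would resolve this by noting that the lemma's first claim is really about the reduction rules that transform the graph structurally (Rules~1--\ref{r:cut-double-edge}), and that the accepting rule is invoked only to terminate; alternatively one phrases ``output is a minor of $D$ or a fixed trivial instance'', which is harmless for the $H$-minor-freeness conclusion since trivial instances are $H$-minor-free. With that caveat dispatched, the lemma follows by assembling the per-rule observations above.
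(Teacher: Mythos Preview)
Your proposal is correct and follows the same approach as the paper: observe that the rules act only by edge deletions and contractions (hence preserve $H$-minor-freeness), and that each rule can be recognised and applied in polynomial time with degree independent of $H$. The paper's own proof is in fact just two sentences to this effect, so your rule-by-rule treatment --- including the careful handling of the trivial instances output by Rules~\ref{r:unreach} and~\ref{r:accept}, which the paper glosses over --- is more thorough than the original.
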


\begin{proof}
 The first claim follows from the fact that the rules modify the graph by means of deletions and contractions only.
 The second claim is straightforward to check.
\end{proof}

\subsection{A few simple properties of the reduced graph}

In this section we state simple auxiliary lemmas, which will be used in the remainder of the paper.

\begin{lemma}\label{lem:indeg}
 Assume reduction rules 1-4 do not apply to $D$.
 Let $u$ be a cut-vertex in $D$, or $u=r$.
 Then every private neighbor $v\in P(u)$ has indegree 1 and $(u,v)$ is a cut-edge. 
 In particular, the head of any cut-edge has indegree 1.
\end{lemma}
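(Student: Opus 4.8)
The statement to prove is Lemma~\ref{lem:indeg}: assuming Rules 1--4 do not apply, every private neighbor $v\in P(u)$ of a cut-vertex $u$ (or of $u=r$) has indegree $1$ and $(u,v)$ is a cut-edge. The plan is to first observe that $(u,v)$ is a cut-edge by definition of private neighbor: since $v$ is not reachable from $r$ in $D-u$, in particular it is not reachable in $D-(u,v)$ unless there is another $u$-to-$v$ route avoiding $(u,v)$; but any such route would be a path from $u$ to $v$ inside $D-(u,v)$, and prepending a root-to-$u$ path (which exists by Rule~\ref{r:unreach}, but careful: if $u$ is a cut-vertex such a path exists; if $u=r$ it is trivial) gives a root-to-$v$ walk avoiding the edge $(u,v)$ only if that route does not pass through $u$ again --- I should argue instead directly: removing the single edge $(u,v)$ cannot give $v$ more connectivity than removing all of $u$, so if $v$ is unreachable in $D-u$ then a fortiori it is unreachable in $D-(u,v)$ after we also note every path to $v$ in $D-(u,v)$ would have to use $u$, hence the edge out of $u$ on that path, and if that edge is not $(u,v)$ we can still reach $v$ --- so the cleanest route is: suppose $v$ is reachable from $r$ in $D-(u,v)$ via path $Q$; if $Q$ avoids $u$ entirely then $v$ is reachable in $D-u$, contradiction; if $Q$ passes through $u$, then the suffix of $Q$ from $u$ to $v$ uses some edge $(u,w)$ with $w\neq v$, but then from that point $Q$ reaches $v$ without revisiting $u$ (as $Q$ is a path), so again $v$ is reachable in $D-u$, contradiction. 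Hence $(u,v)$ is a cut-edge.

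Next, for the indegree claim, I would argue by contradiction. Suppose $v\in P(u)$ has indegree at least $2$, so there is an in-neighbor $w\neq u$ of $v$. Since $v$ is not reachable from $r$ in $D-u$, the vertex $w$ is also not reachable from $r$ in $D-u$ (otherwise we could extend a root-to-$w$ path by the edge $(w,v)$, unless that path already passes through $v$, but a path from $r$ to $w$ in $D-u$ that reaches $v$ on the way would show $v$ reachable in $D-u$ directly --- contradiction; so $w$ unreachable in $D-u$). Now I want to invoke Rule~\ref{r:cutneighbor} with $x=v$ and $y=u$: the rule says if there is $y\in N^-(v)$ such that removing $N^-(v)\setminus\{y\}$ disconnects $y$ from $r$, then we delete $(y,v)$. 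So I need: removing $N^-(v)\setminus\{u\}$ from $D$ disconnects $u$ from $r$. But wait --- $u$ is a cut-vertex, not necessarily separated from $r$ by deleting $v$'s other in-neighbors. This suggests the argument should go the other way. Let me reconsider: the natural application is $x=v$, $y$ = one of the in-neighbors. Actually the right move: since $v\in P(u)$, every path from $r$ to $v$ passes through $u$; so every in-neighbor of $v$ lying on some $r$-$v$ path is either $u$ or a vertex reachable from $r$ only through $u$. Take any $w\in N^-(v)\setminus\{u\}$. I claim $w$ is separated from $r$ by deleting $N^-(v)\setminus\{w\}$ (which includes $u$): indeed $w$ is not reachable from $r$ in $D-u$ as shown, and $D-(N^-(v)\setminus\{w\})$ is a subgraph of $D-u$, so $w$ is not reachable from $r$ there either. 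Hence Rule~\ref{r:cutneighbor} applies with $x=v$ and this $y=w$: we would delete edge $(w,v)$ --- contradicting that Rule~\ref{r:cutneighbor} is not applicable. Therefore $v$ has no in-neighbor other than $u$, i.e. indegree exactly $1$.

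For the ``in particular'' sentence: if $(u,v)$ is an arbitrary cut-edge, then by the remark in the preliminaries $u$ is a cut-vertex or $u=r$, and $v$ is not reachable from $r$ in $D-(u,v)$; I should check $v\in P(u)$, i.e. $v$ is unreachable in $D-u$. This again follows because any $r$-$v$ path in $D-u$ is in particular an $r$-$v$ path in $D-(u,v)$ (it cannot use the edge $(u,v)$ since it avoids $u$), contradicting that $(u,v)$ is a cut-edge. So $v\in P(u)$ and the first part gives indegree $1$. The main obstacle I anticipate is getting the direction of the application of Rule~\ref{r:cutneighbor} exactly right --- identifying which vertex plays the role of $x$ and which plays $y$, and verifying that the ``disconnection'' hypothesis of the rule is met by monotonicity ($D$ minus a larger set is a subgraph of $D$ minus a smaller set) rather than by any delicate connectivity argument. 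Everything else is a routine chase through definitions of reachability and the non-applicability of Rules 1 and 4.
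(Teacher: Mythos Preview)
Your argument for indegree~$1$ via Rule~\ref{r:cutneighbor} (with $x=v$ and $y=w$ an in-neighbor of $v$ other than $u$) is correct and is exactly what the paper does. The ``in particular'' clause is also handled correctly.

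However, there is a genuine gap in your first step, where you try to show that $(u,v)$ is a cut-edge directly from $v\in P(u)$ before establishing indegree~$1$. You argue: if a path $Q$ from $r$ to $v$ in $D-(u,v)$ passes through $u$, then the suffix of $Q$ after $u$ reaches $v$ without revisiting $u$, ``so again $v$ is reachable in $D-u$.'' But that suffix starts at some vertex $w$, not at $r$, and you have not shown that $w$ is reachable from $r$ in $D-u$; in general it is not. Concretely, consider $V=\{r,u,w,v\}$ with edges $(r,u)$, $(u,v)$, $(u,w)$, $(w,v)$: here $v\in P(u)$, yet $(u,v)$ is \emph{not} a cut-edge, since the path $r,u,w,v$ avoids it. So the implication ``$v\in P(u)\Rightarrow (u,v)$ is a cut-edge'' is simply false without using the reduction rules.

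The fix --- and this is what the paper does --- is to reverse the order: first prove $\deg^-(v)=1$ using Rule~\ref{r:cutneighbor} (your argument for this part is fine), and then observe that any edge entering a vertex of indegree~$1$ is trivially a cut-edge. With that reordering your proof is complete and coincides with the paper's.
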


\begin{proof}
 If $v$ has indegree at least 2 then either Rule~\ref{r:unreach} applies, or Rule~\ref{r:cutneighbor} applies to $x=v$ and $y$ being the other inneighbor of $v$. 
 Any edge incoming to a vertex of indegree 1 is a cut-edge, so $(u,v)$ is a cut-edge.
 The head of any cut-edge is a private neighbor of its tail, so the last claim also follows.
\end{proof}

\begin{lemma}\label{lem:head-tail}
 If reduction rules 1-4 do not apply to $D$ then the tail of any cut-edge is not a head of another cut-edge.
\end{lemma}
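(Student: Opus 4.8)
The plan is a short proof by contradiction that leans entirely on Lemma~\ref{lem:indeg} and Rule~\ref{r:cut-edge}. Suppose, towards a contradiction, that the tail $u$ of some cut-edge $(u,v)$ is simultaneously the head of another cut-edge $(w,u)$. First I would invoke Lemma~\ref{lem:indeg}, which guarantees that the head of any cut-edge has indegree $1$; applied to the cut-edge $(w,u)$ this gives $\deg^-(u)=1$, so $(w,u)$ is the unique incoming edge at $u$. Next I would use the standing observation (stated right after the definition of cut-edges) that since $(u,v)$ is a cut-edge, $u$ is either a cut-vertex or equal to $r$. The case $u=r$ is immediately impossible, since the root has indegree $0$ while we have just shown $\deg^-(u)=1$. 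Hence $u$ is a cut-vertex.

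Now $u$ is a cut-vertex with exactly one incoming edge, namely $(w,u)$, so the first part of Rule~\ref{r:cut-edge} applies to $u$ and its unique incoming edge. This contradicts the assumption that reduction rules 1--4 do not apply to $D$, completing the proof. I do not expect any genuine obstacle here: the only subtlety is making sure the degenerate case $u=r$ is dispatched before appealing to Rule~\ref{r:cut-edge} (which is phrased for cut-vertices), and this is handled by the indegree count from Lemma~\ref{lem:indeg}. The whole argument is essentially a one-line deduction once Lemma~\ref{lem:indeg} is in hand.
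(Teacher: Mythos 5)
Your argument is correct and is essentially the paper's own proof: both apply Lemma~\ref{lem:indeg} to conclude that the vertex in question (a head of a cut-edge) has indegree $1$, note it is a cut-vertex since it is the tail of a cut-edge (the root being excluded because $r$ has indegree $0$), and then derive a contradiction with Rule~\ref{r:cut-edge}. The only difference is that you spell out the $u=r$ case explicitly, which the paper leaves implicit.
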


\begin{proof}
 Assume $(x,y)$ and $(y,z)$ are cut-edges.
 By Lemma~\ref{lem:indeg}, $\deg_D^-(y)=1$. 
 It follows that Rule~\ref{r:cut-edge} applies, a contradiction.
\end{proof}

\begin{lemma}
\label{lem:bag-size-2}
 If reduction rules do not apply to $D$ then every bag is of size at most two and contains at most one edge. In particular, every bag has exactly one head and one tail.
\end{lemma}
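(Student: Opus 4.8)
The goal is to show that once no reduction rule applies, every bag of $\contractd$ consists of at most two vertices joined by exactly one edge, hence has a unique head and a unique tail. Recall that a bag $B$ arises either as $\{v\}$ for a vertex $v$ that survives the contraction of lonely cut-edges (trivially of size one, one head equal one tail, zero edges — well, the degenerate case), or as the set of endpoints of a maximal connected set $Z$ of lonely cut-edges that get contracted into a single vertex $v_B$. So the content of the lemma is entirely about the second case: such a $Z$ can contain only a single edge.

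\emph{First step: the contracted edges form a forest oriented as an in-forest/out-forest locally.} I would first observe that by Lemma~\ref{lem:indeg} the head $y$ of any cut-edge $(x,y)$ has $\deg^-_D(y)=1$, and by Lemma~\ref{lem:head-tail} the tail of a cut-edge is never the head of another cut-edge. Combine these: if $(x,y)$ is a (lonely) cut-edge, then $y$ is not the tail of any cut-edge either — indeed if $(y,z)$ were a cut-edge then $y$ would be both head and tail of cut-edges, contradicting Lemma~\ref{lem:head-tail}. Therefore in the subgraph of $D$ formed by all lonely cut-edges, every vertex is incident to cut-edges \emph{either} only as a tail \emph{or} only as a head, and a head-vertex has in-degree exactly $1$ in that subgraph. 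Thus each connected component $Z$ of this subgraph is a star centered at a common tail $x$, with $x \to y_1, x\to y_2,\dots$ — but a vertex with two or more outgoing cut-edges has a \emph{branching} cut-edge, not a lonely one. Since we only contract lonely cut-edges, each such component $Z$ that is actually contracted consists of a \emph{single} edge $(x,y)$.

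\emph{Second step: translate into bag structure.} Therefore every bag $B$ is either a singleton $\{v\}$, in which case by definition $v$ is simultaneously the head and the tail of $B$, or $B=\{x,y\}$ where $(x,y)$ was the unique contracted lonely cut-edge in its component; then $x$ is the unique tail and $y$ is the unique head, and $B$ induces exactly the one edge $(x,y)$ in $D$ (there is no reverse edge $(y,x)$: if there were, $(x,y)$ being a cut-edge with $(y,x)\in E(D)$ would have triggered Rule~\ref{r:cut-double-edge}). Hence $|B|\le 2$, $B$ contains at most one edge, and $B$ has exactly one head and exactly one tail. This is precisely the statement.

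\textbf{Main obstacle.} The only delicate point is the claim that a component of lonely cut-edges is a single edge rather than a longer oriented path or a larger star. The star direction is handled by the definition of \emph{lonely} (a second out-edge from the same tail makes the edge branching). The "path" direction — ruling out $(x,y)$ and $(y,z)$ both being cut-edges — is exactly Lemma~\ref{lem:head-tail}, so I must be careful to invoke it, and also to note that Rule~\ref{r:cut-edge} together with Lemma~\ref{lem:indeg} already forbids this configuration directly ($\deg^-_D(y)=1$ forces Rule~\ref{r:cut-edge} to fire on the cut-vertex $y$ via its unique incoming edge). I expect the write-up to be short; the subtlety is purely in cleanly assembling Lemmas~\ref{lem:indeg} and~\ref{lem:head-tail} with the lonely/branching dichotomy, and in remembering to exclude the anti-parallel edge inside a two-element bag via Rule~\ref{r:cut-double-edge}.
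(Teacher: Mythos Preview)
Your argument is correct and uses exactly the same ingredients as the paper's proof: Lemma~\ref{lem:indeg}, Lemma~\ref{lem:head-tail}, the lonely/branching dichotomy, and Rule~\ref{r:cut-double-edge}. The only cosmetic difference is that the paper argues that the contracted edges must form a directed path (lonely $\Rightarrow$ out-degree $\le 1$, Lemma~\ref{lem:indeg} $\Rightarrow$ in-degree $\le 1$) and then invokes Lemma~\ref{lem:head-tail} to forbid length $\ge 2$, whereas you invoke Lemma~\ref{lem:head-tail} first to get the head/tail bipartition, deduce a star shape, and then use loneliness to collapse it to one edge; these are the same proof in a different order.
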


\begin{proof}
Assume that there is a bag $B$ of size at least three. 
Since the cut-edges that get contracted to $v_B$ are lonely, and their heads have indegrees 1 due to Lemma~\ref{lem:indeg}, then these edges form a directed path, a contradiction with Lemma~\ref{lem:head-tail}. The fact that a bag of size~$2$ cannot contain two edges follows from Rule~\ref{r:cut-double-edge}.
\end{proof}

\begin{lemma}
\label{lem:no-edge-to-head}
 If reduction rules 1-4 do not apply to $D$, then for arbitrary pair of bags $A$ and $B$ every edge from $A$ to $B$ has head in $t_B$.
\end{lemma}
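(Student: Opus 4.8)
The plan is to prove Lemma~\ref{lem:no-edge-to-head} by exploiting the structure of bags in the contracted graph $\contractd$. Recall that by Lemma~\ref{lem:bag-size-2}, every bag $B$ has size at most two, consisting of either a single vertex (which is both its head and tail) or a contracted lonely cut-edge $(t_B, h_B)$. So I must show that every edge of $D$ from a vertex of $A$ to a vertex of $B$ actually has its head equal to $t_B$, i.e.\ no edge of $D$ enters $h_B$ from outside $B$.

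First I would dispose of the trivial case: if $B = \{v\}$ is a singleton bag, then $h_B = t_B = v$, so every edge into $B$ trivially has head $t_B$. The interesting case is when $B = \{t_B, h_B\}$ arises from contracting a lonely cut-edge $(t_B, h_B)$. The key observation here is Lemma~\ref{lem:indeg}: since $(t_B, h_B)$ is a cut-edge, $h_B$ is a private neighbor of $t_B$, hence $\deg^-_D(h_B) = 1$. Therefore the only edge of $D$ with head $h_B$ is the edge $(t_B, h_B)$ itself, whose tail lies \emph{inside} $B$. Consequently, any edge from $A$ to $B$ (whose tail is outside $B$, in $A$) cannot have head $h_B$, so it must have head $t_B$. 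This finishes the argument.

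The only subtlety to double-check is the definition of ``edge from $A$ to $B$'' — since $A$ and $B$ are distinct bags, an edge from $A$ to $B$ has its tail among the (one or two) vertices of $A$ and its head among the vertices of $B$, and in particular its tail is not $t_B$ or $h_B$. So the case analysis above is exhaustive. I do not anticipate a serious obstacle: the whole lemma reduces to the indegree-$1$ property of heads of cut-edges, which is already established. If one wanted to be fully careful, one should also note that $A = B$ is excluded by the phrasing ``arbitrary pair of bags $A$ and $B$'' (or handle it separately, in which case the statement is vacuous as there is no edge from $B$ to $B$ other than possibly the internal cut-edge, whose head is $h_B$, but that edge is not ``from $A$'' with $A$ a different bag). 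Thus the proof is essentially a two-line deduction from Lemmas~\ref{lem:indeg} and~\ref{lem:bag-size-2}.
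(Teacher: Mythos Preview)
Your proposal is correct and follows essentially the same approach as the paper's proof: dispose of the singleton case trivially, and for $|B|=2$ invoke Lemma~\ref{lem:indeg} to conclude $\deg^-_D(h_B)=1$, so no edge from a different bag $A$ can enter $h_B$. The paper's proof is just a terser version of exactly this argument.
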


\begin{proof}
 If $|B|=1$, then the claim is trivial, so assume $|B|\ge 2$.
 By Lemma~\ref{lem:bag-size-2}, $|B|=2$, i.e., $(t_B,h_B)$ is a cut-edge.
 If there is an edge from $A$ to $B$ with head in $h_B$, then $\deg_D^-(h_B)\ge 2$, a contradiction with Lemma~\ref{lem:indeg}.
\end{proof}

\begin{lemma}\label{lem:one-edge}
Assume reduction rules 1-4 do not apply to $D$.
If bags $A$ and $B$ are linked then there is exactly one edge from $A$ to $B$ and exactly one edge from $B$ to $A$.
\end{lemma}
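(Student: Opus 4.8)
\textbf{Proof plan for Lemma~\ref{lem:one-edge}.}

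The plan is to combine Lemma~\ref{lem:no-edge-to-head} with the indegree-$1$ property of cut-edge heads from Lemma~\ref{lem:indeg}, treating the two directions of the link symmetrically. Suppose $A$ and $B$ are linked; by definition there is at least one edge from $A$ to $B$ and at least one edge from $B$ to $A$, so it suffices to prove uniqueness in each direction. Focus first on the edges from $A$ to $B$. By Lemma~\ref{lem:no-edge-to-head}, every such edge has its head in $t_B$, so all these edges point into the single vertex $t_B$. If there were two distinct edges from $A$ to $t_B$, their tails would be two distinct vertices of $A$ (there are no parallel edges after contraction), hence $\deg_D^-(t_B)\ge 2$ from the $A$-side alone. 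Now I want to derive a contradiction with Lemma~\ref{lem:indeg}: I claim $t_B$ is the head of a cut-edge and therefore has indegree $1$. Indeed, since $A$ and $B$ are linked there is also an edge from $B$ to $A$, which by the symmetric application of Lemma~\ref{lem:no-edge-to-head} has head $t_A$; in particular $A$ is incident with an edge coming from $B$, so the edge from $A$ to $B$ is not the only thing touching these bags. More directly: the edge from $A$ to $B$ has head $t_B$, and I must check that this incoming edge to $t_B$ is forced to be the unique one.

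The cleaner route, which I would actually carry out, is this. If $|B|=2$ then $(t_B,h_B)$ is a cut-edge, and by Lemma~\ref{lem:indeg} the head $h_B$ has indegree $1$; but I need to control the indegree of $t_B$, not $h_B$. So instead I argue about whether $t_B$ itself receives edges from outside. Here is the key observation: because the edges contracted to form $v_B$ are \emph{lonely} cut-edges (Rule~\ref{r:cut-edge} and the definition of $\contractd$), and because $A\neq B$, the vertex $t_B$ is a cut-vertex (or the root) whose private neighbor is $h_B$ via the cut-edge $(t_B,h_B)$ when $|B|=2$; when $|B|=1$ we have $t_B=h_B=v_B$. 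In either case, consider an edge $(a,t_B)$ with $a\in A$. Since $A$ and $B$ are linked, there is an edge $(b,t_A)$ with $b\in B$. I now apply Lemma~\ref{lem:indeg} with the roles set up so that the relevant head has indegree $1$: the head of the cut-edge out of whichever of $A,B$ is "upstream" has indegree $1$. Concretely, one of the two edges between $A$ and $B$ is a cut-edge (this needs to be argued from the structure of $\contractd$: an edge between two distinct bags of the contracted graph that are linked cannot itself have been contracted, and the linking forces a $2$-cycle at the bag level which, combined with Rules~\ref{r:two-cut-edges} and~\ref{r:cut-double-edge}, pins down the multiplicities). Once I know the relevant head has indegree $1$, a second edge into it from the same bag is impossible, giving uniqueness; the symmetric argument handles the other direction.

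The main obstacle I anticipate is making precise the claim that "between two linked bags there is a cut-edge in at least one direction, and the head of that cut-edge is exactly $t_B$ (resp.\ $t_A$)," so that Lemma~\ref{lem:indeg}'s indegree-$1$ conclusion applies to the right vertex. This requires unpacking how linked bags sit inside $\contractd$: a bag $B$ with $|B|=2$ has its head $h_B$ of indegree $1$ by Lemma~\ref{lem:indeg}, so any edge into $B$ from $A$ lands on $t_B$ by Lemma~\ref{lem:no-edge-to-head}; the subtlety is ruling out two such edges, which ultimately reduces to showing $\deg_D^-(t_B)$ is forced to be small, and for that I expect to invoke that $t_B$ being a tail of a (lonely) cut-edge, together with Lemma~\ref{lem:head-tail} (its cut-edge tail is not a cut-edge head) and Rule~\ref{r:cutneighbor}, leaves no room for two in-neighbors inside $A$. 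I would write out this case analysis ($|A|,|B|\in\{1,2\}$, four cases, reduced by symmetry to essentially one) carefully, as that is where the real content lies; everything else is bookkeeping already licensed by Lemmas~\ref{lem:indeg}--\ref{lem:no-edge-to-head}.
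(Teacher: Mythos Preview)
Your plan circles around the right ingredients but never lands on the actual mechanism, and several of the routes you sketch are dead ends. Trying to bound $\deg_D^-(t_B)$ globally is hopeless: $t_B$ is not a cut-edge head (indeed Lemma~\ref{lem:head-tail} says the opposite when $|B|=2$), so Lemma~\ref{lem:indeg} gives you nothing about its indegree. Likewise, your claim that ``one of the two edges between $A$ and $B$ is a cut-edge'' is unsupported and in fact generally false; edges between linked bags need not be cut-edges at all.

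The step you are missing is this. After Lemma~\ref{lem:no-edge-to-head} forces both putative edges from $A$ to $B$ to land on $t_B$, their tails are two \emph{distinct} vertices $a_1,a_2\in A$, so $|A|=2$ and one of $(a_1,a_2),(a_2,a_1)$ is the lonely cut-edge \emph{inside} $A$. Say $(a_1,a_2)$ is that cut-edge. Then $a_1\in N^-(t_B)$ and removing $N^-(t_B)\setminus\{a_2\}$ (which contains $a_1$) disconnects $a_2$ from $r$, so Rule~\ref{r:cutneighbor} applies with $x=t_B$ and $y=a_2$ --- contradiction. The relevant cut-edge is internal to $A$, not between $A$ and $B$, and the relevant rule is Rule~\ref{r:cutneighbor} applied at $t_B$; no case analysis on $|A|,|B|$ is needed beyond observing that two tails force $|A|=2$. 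This is exactly the paper's (three-line) argument.
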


\begin{proof}
 It suffices to show that there is exactly one edge from $A$ to $B$, since the other claim is symmetric.
 Assume for the contradiction that there are two edges $(a_1,b_1),(a_2,b_2)\in A\times B$.
 Note that $b_1 = b_2$, for otherwise we get a contradiction with Lemma~\ref{lem:no-edge-to-head}. It follows that $a_1\ne a_2$, since there are no two identical edges in $D$.
 Assume w.l.o.g.\ that $(a_1,a_2)$ is a cut-edge. 
 Then Rule~\ref{r:cutneighbor} applies (with $x=b_1$ and $y=a_2$), a contradiction.
\end{proof}

\begin{lemma}\label{lem:root}
Assume reduction rules do not apply to $D$. Then $\deg_D^+(r)\geq 2$, all the edges going out of $r$ are branching cut-edges, and each of the outneighbors of $r$ is a special vertex in $\contractd$.
\end{lemma}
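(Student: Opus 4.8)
Assume reduction rules do not apply to $D$. Then $\deg_D^+(r)\geq 2$, all the edges going out of $r$ are branching cut-edges, and each of the outneighbors of $r$ is a special vertex in $\contractd$.

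The plan is to dispatch the three claims in order, using the already-established structural lemmas together with the fact that all reduction rules are exhausted.

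\medskip

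\textbf{Step 1: $\deg_D^+(r)\ge 2$.} Suppose toward a contradiction that $r$ has exactly one out-neighbor $v$ (it has at least one, since $D$ is connected and has at least two vertices — note that otherwise Rule~\ref{r:accept} would already have fired, so the graph is nontrivial). Then every vertex of $D-r$ is reachable only through $v$, so either $v$ is a cut-vertex (if $V(D)\setminus\{r,v\}\neq\emptyset$) or $D$ has just the two vertices $r,v$ (in which case the instance is trivial and Rule~\ref{r:accept} applies, contradiction). In the former case, $r$ is a vertex with exactly one outgoing edge whose removal disconnects the graph — but more to the point, $v$ is a cut-vertex with exactly one incoming edge $(r,v)$, so Rule~\ref{r:cut-edge} applies, a contradiction. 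Hence $\deg_D^+(r)\ge 2$.

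\medskip

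\textbf{Step 2: every edge out of $r$ is a branching cut-edge.} Let $(r,v)\in E(D)$. Since $r$ has in-degree $0$, by definition $v\in P(r)$ is a private neighbor of $r$; equivalently, $v$ is not reachable from $r$ in $D-r$, so removing the edge $(r,v)$ disconnects $v$, i.e. $(r,v)$ is a cut-edge. It remains to show it is branching, i.e. that there is another cut-edge with tail $r$. By Step~1 there is a second out-neighbor $v'\neq v$, and by the same argument $(r,v')$ is also a cut-edge with tail $r$; hence both are branching. (Note this uses that $\deg_D^+(r)\ge 2$; with a single out-edge the edge would be lonely, but Step~1 rules that out.)

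\medskip

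\textbf{Step 3: each out-neighbor of $r$ is special in $\contractd$.} Fix $(r,v)\in E(D)$. By Step~2, $(r,v)$ is a branching cut-edge, so it is \emph{not} among the lonely cut-edges contracted to form $\contractd$; thus both $r$ and $v$ survive in $\contractd$ (more precisely, $v$ lies in some bag $B_v$ with $v$ a head of $B_v$, possibly $B_v=\{v\}$, and by Lemma~\ref{lem:bag-size-2} $|B_v|\le 2$ with a unique head and tail; in any case the edge $(r,v)$ — or rather $(r, v_{B_v})$ with $v_{B_v}=v$ by Lemma~\ref{lem:no-edge-to-head} / Lemma~\ref{lem:indeg} — is present in $\contractd$). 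Now I claim this edge witnesses that $v_{B_v}$ is special in $\contractd$: it suffices to check that $(v_{B_v}, r)\notin E(\contractd)$, i.e. there is no back-edge, so that $(r, v_{B_v})$ is an incoming simple edge. But $r$ has in-degree $0$ in $D$, and contractions of lonely cut-edges cannot create an edge into $r$ (contracting $(a,b)$ only creates edges incident to the new vertex $v_{(a,b)}\ne r$, since $r$ is never the head of a cut-edge and, being a source, we may take it as the tail of any bag it belongs to — indeed $r$ lies in its own singleton bag, as $r$ is not the head of a cut-edge by in-degree $0$, and is not the tail of a lonely cut-edge by Step~2). Hence $\deg_{\contractd}^-(r)=0$, so in particular $(v_{B_v},r)\notin E(\contractd)$, and $(r,v_{B_v})$ is an incoming simple edge at $v_{B_v}$; therefore $v_{B_v}\in\sp(\contractd)$.

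\medskip

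\textbf{Main obstacle.} The delicate point is Step~3: one must argue carefully that the relevant edge genuinely survives into $\contractd$ and that no contraction introduces a reverse edge at $r$ or identifies $r$ with one of its out-neighbors. This reduces to observing that $r$ is never the head of a cut-edge (its in-degree is $0$) and never the tail of a \emph{lonely} cut-edge (Step~2 gives at least two branching cut-edges out of $r$), so $r$'s bag is the singleton $\{r\}$ and $r$ is untouched by the contraction; the rest is bookkeeping via Lemmas~\ref{lem:indeg}, \ref{lem:bag-size-2}, and~\ref{lem:no-edge-to-head}.
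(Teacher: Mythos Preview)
Your Step~2 contains a genuine non sequitur. You write: ``$v$ is not reachable from $r$ in $D-r$, so removing the edge $(r,v)$ disconnects $v$, i.e.\ $(r,v)$ is a cut-edge.'' But the premise is vacuous: once $r$ is removed, \emph{nothing} is reachable from $r$, so every out-neighbor of $r$ is trivially in $P(r)$. This does not by itself force $(r,v)$ to be a cut-edge; $v$ could have another in-neighbor $w$ reachable from $r$ by a path avoiding the edge $(r,v)$. The same unjustified assertion (``$v$ is a cut-vertex with exactly one incoming edge $(r,v)$'') appears in Step~1. What is actually needed is the inapplicability of Rule~\ref{r:cutneighbor}: if $v$ had any other in-neighbor $y$, then $N^-(v)\setminus\{y\}$ contains $r$, so its removal disconnects $y$ from $r$, and Rule~\ref{r:cutneighbor} would fire. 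This is exactly the content of Lemma~\ref{lem:indeg} (which you cite only in passing at the very end, and only in connection with Step~3). The paper's proof makes precisely this appeal to Rule~\ref{r:cutneighbor}.

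Two smaller points. First, your treatment of the two-vertex case via Rule~\ref{r:accept} is backwards: that rule accepts when the graph is \emph{too large}, not too small. Second, in Step~3 the paper observes more directly that the head of any branching cut-edge is special in $\contractd$ by Rule~\ref{r:cut-double-edge} (no reverse edge survives); your argument that $r$ retains in-degree~$0$ in $\contractd$ is also correct and arrives at the same conclusion, just specialized to edges out of $r$. Once Step~2 is repaired by invoking Lemma~\ref{lem:indeg} (or Rule~\ref{r:cutneighbor} directly), your overall argument matches the paper's.
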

\begin{proof}
We have that $\deg_D^+(r)\geq 2$ because otherwise Rule~\ref{r:cut-edge} would apply. Therefore, it suffices to show that every edge $(r,u)$ is a cut-edge, because the head of a branching cut-edge is always special in $\contractd$ by Rule~\ref{r:cut-double-edge}. This, however, follows from inapplicability of Rule~\ref{r:cutneighbor} to $u$.
\end{proof}

\subsection{Decomposition into weak bipaths}

The following lemma gives a structural connection between weak bipaths and special vertices.
\begin{lemma}\label{lem:decomposition}
Assume no reduction rule applies to $D$. Let $S\subseteq V(\contractd)$ be any set of vertices that contains the root $r$ and every special vertex of $\contractd$. Then one can find weak bipaths $P_1,P_2,\ldots,P_q$, such that:
\begin{itemize}
\item[(i)] The sets of internal vertices of $P_1,P_2,\ldots,P_q$ form a partition of $V(\contractd)\setminus S$.
\item[(ii)] The extremities of each $P_i$ belong to $S$ and are distinct.
\item[(iii)] The out-neighbors of the internal vertices of each $P_i$ belong to $S$.
\end{itemize}
\end{lemma}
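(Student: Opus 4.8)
The plan is to analyze the structure of $\contractd \setminus S$ directly, using the properties of the reduced graph established in the previous subsection. The key observation is that $S$ contains all special vertices, so every vertex $v \in V(\contractd) \setminus S$ is \emph{not} special: it has in-degree at most $2$ in $\contractd$, and every incoming edge $(u,v)$ is a ``double edge'', i.e. $(v,u) \in E(\contractd)$ as well. First I would rule out in-degree $0$ (this would make $v$ unreachable, contradicting Rule~\ref{r:unreach} applied to $\contractd$ --- one has to be slightly careful that contracting lonely cut-edges preserves reachability, which it does) and in-degree $1$ (a vertex of in-degree $1$ whose unique in-edge is a double edge, with the vertex being non-special hence also having no other structure forcing it; here one argues via Lemma~\ref{lem:indeg} and Lemma~\ref{lem:head-tail} that such a configuration either yields a cut-edge that should have been contracted by Rule~\ref{r:cut-edge}, or is impossible). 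Thus every $v \notin S$ has in-degree exactly $2$ in $\contractd$, with $N^-(v) = \{a,b\}$ and $\{a,b\} \subseteq N^+(v)$; moreover the out-neighbors of $v$ that lie outside $\{a,b\}$ have heads which, by the same non-specialness together with Lemma~\ref{lem:no-edge-to-head}, one shows belong to $S$.

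Next I would set up the decomposition combinatorially. Consider the (undirected) graph on $V(\contractd)\setminus S$ where we connect $v,w$ if they are mutually adjacent (a double edge between them in $\contractd$). By the in-degree-$2$ property, every vertex in this graph has degree exactly $2$ \emph{within} $V(\contractd)\setminus S$ --- because both in-neighbors of $v$ are also out-neighbors, and any further adjacency would push the in-degree above $2$ --- so this graph is a disjoint union of paths and cycles. I would then argue no cycles occur: a cycle entirely inside $V(\contractd)\setminus S$ with no incoming edges from outside (the only vertices that can send an edge into such a $v$ are its two path-neighbors, both inside) would be unreachable from $r$, contradicting Rule~\ref{r:unreach}. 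Hence the components are paths $v_1,\dots,v_m$; extending each by its two endpoint-neighbors $u_0 \in S$ and $u_{m+1}\in S$ (these are in $S$ because the path is a maximal run of non-special vertices) gives sequences $u_0, v_1,\dots,v_m, u_{m+1}$. One checks these are weak bipaths: for each internal $v_i$ we have $N^-(v_i) = \{v_{i-1},v_{i+1}\} \subseteq N^+(v_i)$ by construction (reading $v_0 = u_0$, $v_{m+1}=u_{m+1}$), so property of the weak bipath definition holds, and $p = m+2 \ge 3$. The internal vertices are exactly the $v_i$'s, so (i) holds; the extremities $u_0,u_{m+1}$ lie in $S$, giving (ii) --- distinctness of the two extremities needs a short argument, namely that if $u_0 = u_{m+1}$ then together with the $v_i$'s we would again get a structure unreachable except through $u_0$, but $u_0$ is a cut-vertex and one can invoke that such a long bipath attached at a single vertex contradicts either Rule~\ref{r:bipath} (length-$4$ proper bipath) after noting the path is proper, or is handled directly; (iii) is exactly the statement recorded in the previous paragraph about out-neighbors outside $\{v_{i-1},v_{i+1}\}$.

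The main obstacle I anticipate is the bookkeeping at the endpoints and the distinctness claim (ii). Ensuring the two $S$-endpoints of a component-path are distinct requires excluding a ``theta-like'' or ``lollipop'' configuration where a long run of non-special vertices returns to the same special vertex; this is where one genuinely needs that $r$ is a root and that $\contractd$ is connected, plus possibly Rule~\ref{r:cutneighbor} or Rule~\ref{r:bipath} to forbid the degenerate short cases. A secondary subtlety is translating everything correctly between $D$ and $\contractd$: all the lemmas in the previous subsection are phrased for $D$, but several apply to $\contractd$ as well because contracting lonely cut-edges does not create new reachability issues or new in-edges to heads; I would state a short remark to this effect before the main argument, or alternatively re-derive the needed facts (in-degree $\le 2$ off $S$, incoming edges are double) directly from the definition of special vertex and the fact that $S \supseteq \sp(\contractd)$. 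Everything else is routine case-checking against the reduction rules.
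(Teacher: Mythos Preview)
Your overall architecture matches the paper's proof: show every $v\notin S$ has in-degree exactly $2$ in $\contractd$ with both in-neighbors also out-neighbors, deduce that $\contractd - S$ is a union of bidirectional paths, extend each by its two $S$-neighbors, and finally argue the two extremities are distinct. A couple of points deserve correction or sharpening.

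First, a minor slip: in your auxiliary undirected graph on $V(\contractd)\setminus S$ the degree of each vertex is \emph{at most} $2$, not exactly $2$; one or both of the two in-neighbors of $v$ may already lie in $S$. This is harmless for the conclusion (paths and cycles), but as written your sentence is incorrect and would in fact force only cycles.

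Second, and more substantively, your proposed argument for distinctness in (ii) does not work. You suggest that if $u_0=u_{m+1}$ then the internal run $v_1,\dots,v_m$ is a \emph{proper} bipath and Rule~\ref{r:bipath} applies. But the bipath is only \emph{weak}: internal vertices may have additional out-edges into $S$, so Rule~\ref{r:bipath} need not fire, regardless of length. The paper instead argues via Rule~\ref{r:cutneighbor}: if both extremities coincide at $u\in S$ (with $u\neq r$ by Lemma~\ref{lem:root}), then the tail $x$ of the bag of $u$ is a cut-vertex in $D$ whose removal disconnects all bags of the internal vertices; yet by Lemma~\ref{lem:no-edge-to-head} the vertex $x$ has an in-neighbor inside the bag of $v_1$ (coming from the edge $v_1\to u$ in $\contractd$), so Rule~\ref{r:cutneighbor} applies to $x$, a contradiction. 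You do mention Rule~\ref{r:cutneighbor} as a fallback in your closing paragraph, and that is indeed the right tool.

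Third, your handling of the in-degree-$1$ case is too sketchy. Lemmas~\ref{lem:indeg} and~\ref{lem:head-tail} concern $D$, not $\contractd$, and the passage between the two requires exactly the bag-level case analysis the paper carries out: distinguishing whether $u,v$ are original vertices of $D$ or arise from contracted lonely cut-edges, and in each case locating a concrete edge of $D$ to which Rule~\ref{r:cut-double-edge} or Lemma~\ref{lem:head-tail} applies. This is where the proof earns its keep; your plan identifies the right lemmas but you should expect several subcases rather than a one-line reduction.
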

\begin{proof}
Consider any vertex $v\in V(D_c)$ such that $v\notin S$.
Assume first that $\deg_{\contractd}^-(v)=1$ and $N^-_{\contractd}(v)=\{u\}$. Since $v$ is not special in $\contractd$, we have that also $(v,u)\in E(\contractd)$. If $u,v\in D$, then $(u,v)$ would be a cut-edge in $D$ and Rule~\ref{r:cut-double-edge} would apply, a contradiction. Otherwise, the bags of $u$ and $v$ are linked and by Lemmas~\ref{lem:no-edge-to-head} and~\ref{lem:one-edge}, there is one edge from the bag of $u$ to the tail of the bag of $v$; clearly, this edge is a cut-edge in $D$. If $v$ was obtained from the contraction of a lonely cut-edge $(v_1,v_2)$, then this would be a contradiction with Lemma~\ref{lem:head-tail}. Hence assume $v\in D$. From Lemma~\ref{lem:no-edge-to-head} we infer that in $D$ there is an edge from $v$ to $t_{B_u}$. 
However, the edge from $B_u$ to $v$ has tail in $t_{B_u}$ by Lemma~\ref{lem:head-tail}. Then again Rule~\ref{r:cut-double-edge} would apply, a contradiction.

It follows that $\deg_{\contractd}^-(v)\ge 2$ for each $v\notin S$.
Since $v$ is not special, we get that $\deg_{\contractd}^-(v) = 2$, and the two of its in-neighbors are also its out-neighbors. Since $r\in S$ and $D_c$ is connected, we have that $D_c-S$ is a set of bidirectional paths, with each endpoint connected by two edges with opposite directions with a vertex of $S$. Thus we immediately obtain weak bipaths $P_1,P_2,\ldots,P_q$ that satisfy (i), (iii), as well as (ii) apart from the claim that the extremities are distinct. Suppose there is a weak bipath $P_i=u,v_2,v_3,\ldots,v_{p-1},u$ such that both its extremities are in fact one vertex $u\in S$. By Lemma~\ref{lem:root}, $u\neq r$. Regardless whether $u\in D$ or $u$ is obtained by contracting some lonely cut-edge in $D$, we have that $x$, the tail of the bag of $u$, is a cut-vertex in $D$ whose removal disconnects all the bags of the internal vertices of $P_i$ from $r$. However, by Lemma~\ref{lem:no-edge-to-head} and the definition of a bipath we have that $x$ has an inneighbor in the bag of $v_2$. Then Rule~\ref{r:cutneighbor} would apply to $x$, a contradiction.
\end{proof}

Weak bipaths $P_1,\ldots,P_q$ given by Lemma~\ref{lem:decomposition} are called {\em{maximal bipaths}}. Note that for every such maximal bipath $P=v_1,v_2,\ldots,v_p$ and every $j=2,\ldots,p-1$, bag $B_{v_j}$ is linked to $B_{v_{j-1}}$ and $B_{v_{j+1}}$, and to no other bag.

\subsection{New lower bounds on the number of leaves}

In this section our goal is to establish a number of lower bounds on the number of leaves.
Each of the lower bounds is a linear function of a number of some type of vertices or structures in $D$.
These bounds will help us prove that Rule~\ref{r:accept} is correct.
Indeed, to this end it suffices to focus on a no-instance and prove that it has at most $ck$ vertices.
Hence, if we know that $\ml(D)$ is large when there are many vertices of some kind A, then we know that in our no-instance there are few vertices of kind A.
In other words vertices of type A are ``easy''.
In the next section we will show that because of sparsity arguments the number of the remaining vertices (not corresponding to an ``easy type'') is linear in the number of ``easy'' vertices.

In fact, instead of looking for ``easy'' vertices in $D$, we focus of $\contractd$. 
 This is justified by the fact that by Lemma~\ref{lem:bag-size-2} we have $|V(D)| \le 2|V(\contractd)|$, so if we prove that $|V(\contractd)|=O(k)$ then also $|V(D)|=O(k)$. 
 Moreover, the following lemma shows that a lower bound on $\ml(\contractd)$ imply the same lower bound on $\ml(D)$.

\begin{lemma}
\label{lem:contract}
Let $D$ be a connected digraph, and let $D'$ be the digraph obtained from $D$ by contracting a cut-edge. 
Then $\ml(D)\ge\ml(D')$.
\end{lemma}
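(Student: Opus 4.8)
The plan is to show that any outbranching $T'$ of $D'$ can be lifted to an outbranching $T$ of $D$ with at least as many leaves. Let $(a,b)$ be the cut-edge contracted, and let $v$ be the vertex of $D'$ resulting from identifying $a$ and $b$. Since $(a,b)$ is a cut-edge of $D$, every vertex of $D$ that becomes unreachable from $r$ after deleting $(a,b)$ was reachable only through this edge; in particular $b$ is reachable in $D$ only via $(a,b)$, so in \emph{any} outbranching of $D$ the arc $(a,b)$ is present and $a$ is the parent of $b$. This rigidity is what makes the lifting canonical.

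First I would set up the correspondence on edges: every edge of $D'$ other than those incident to $v$ corresponds to a unique edge of $D$, and an edge incident to $v$ in $D'$ corresponds either to an edge incident to $a$ in $D$ or to an edge incident to $b$ in $D$ (if both options exist we will fix a choice below, but note that since $(a,b)$ is a cut-edge, $b$ has in-degree $1$ in $D$, namely from $a$, so in-edges of $v$ in $D'$ all come from in-edges of $a$; out-edges of $v$ may come from out-edges of either $a$ or $b$). Given $T'$, define $T$ as follows: take the image in $D$ of every edge of $T'$, choosing for each out-edge of $v$ used by $T'$ a corresponding out-edge of $a$ or of $b$ in $D$ (say, preferring $b$ whenever $(b,\cdot)\in E(D)$), then add the arc $(a,b)$.

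Next I would verify $T$ is an outbranching of $D$. It spans $V(D)$: all vertices other than $a,b$ are covered as in $T'$, and $a,b$ are covered by the edges incident to $v$ in $T'$ together with $(a,b)$. Each vertex other than $b$ has exactly one incoming edge (the image of its unique incoming edge in $T'$, with $a$ receiving the image of the incoming edge of $v$), and $b$ receives exactly the arc $(a,b)$; so in-degrees are all $1$ except the root which has $0$. Acyclicity and reachability from $r$ follow because $T'$ is an out-tree rooted at $r$ and we only ``split'' $v$ into the arc $a\to b$, inserting $b$ as a child of $a$ below whatever position $v$ occupied. Finally, the leaf count: a leaf of $T'$ different from $v$ remains a leaf of $T$; the vertex $v$, if it was a leaf of $T'$, becomes $b$ with out-degree $0$ in $T$ while $a$ gains the single child $b$ — so we lose at most one leaf ($v$) but gain $b$ as a leaf, hence $\ml(D)\ge\ml(D')$. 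The only subtlety, and the main thing to check carefully, is the case distribution for out-edges of $v$ and making sure the chosen lift keeps $T$ acyclic and keeps every non-root in-degree equal to $1$; once the rigidity $(a,b)\in E(T)$ is in place this is routine.
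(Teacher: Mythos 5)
There is a genuine gap at the one place where the lemma actually requires an argument. Your claim that ``since $(a,b)$ is a cut-edge, $b$ has in-degree $1$ in $D$, namely from $a$, so in-edges of $v$ in $D'$ all come from in-edges of $a$'' is false for a general connected digraph: a cut-edge only guarantees that $b$ is unreachable from $r$ in $D-(a,b)$, not that $b$ has no other in-neighbours. For instance, take $V(D)=\{r,a,b,w\}$ with edges $(r,a),(a,b),(b,w),(w,b)$; here $(a,b)$ is a cut-edge, yet $\deg^-(b)=2$, and the in-edge $(w,v)$ of the contracted vertex $v$ comes from an in-edge of $b$, not of $a$. (In the paper, the in-degree-$1$ property of heads of cut-edges is Lemma~\ref{lem:indeg} and holds only after Rules~1--4 have been applied; Lemma~\ref{lem:contract} is stated and used for arbitrary connected digraphs.) Consequently, your construction step ``$a$ receives the image of the incoming edge of $v$'' is unjustified: a priori the parent $p$ of $v$ in $T'$ might satisfy $(p,b)\in E(D)$ but $(p,a)\notin E(D)$, in which case attaching $p$ to $a$ is not an edge of $D$ and the lift breaks.

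What you actually need is the weaker statement that for the \emph{particular} parent edge $(p,v)$ used by $T'$, the edge $(p,a)$ is present in $D$, and this is exactly the nontrivial part of the paper's proof: if $(p,a)\notin E(D)$ then $(p,b)\in E(D)$; the path from $r$ to $p$ in $T'$ avoids $v$ (as $v$ is a child of $p$ in the tree), hence consists of edges not incident to $a$ or $b$ and is present in $D$; extending it by $(p,b)$ shows that $b$ is reachable in $D-(a,b)$, contradicting that $(a,b)$ is a cut-edge. With that argument substituted for the false in-degree claim, the rest of your proposal (distributing the out-edges of $v$ between $a$ and $b$, adding the arc $(a,b)$, and the leaf count, including the case where $v$ is a leaf) is correct and coincides with the paper's construction; you should also note the degenerate case $a=r$, where $v$ has no parent and the issue disappears.
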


\begin{proof}
Let $(u,v)$ be the contracted cut-edge and let $x$ be the resulting vertex in $D'$.
Consider any outbranching $T'$ of $D'$.
Then let $T$ be obtained by the following procedure: 1) remove $x$ from $T'$, 2) add vertices $u$ and $v$, 3) add edge $(u,v)$, 4) if $p$ is the parent of $x$ in $T'$, add edge $(p,u)$, 5) for every edge $(x,y)\in E(T')$, add $(u,y)$ to $T$ if $(u,y)\in E(D)$ and add $(v,y)$ otherwise.
Then clearly $T$ is an outbranching with at least the same number of leaves as $T'$. 
Hence it suffices to show that $T$ is a subgraph of $D$.
Otherwise, $(p,u)\not\in E(D)$. 
However, then $(p,v)\in E(D)$.
Also, in $T'$ there is a path from the root to $p$ that avoids $x$. It follows that this path, extended by the edge $(p,v)$ is contained also in $D$, $(u,v)$ is not a cut-edge, a contradiction.
\end{proof}

Since all heads of cut-edges have indegree $1$, and contraction of lonely cut-edges cannot spoil this property for other cut-edges, we infer that every cut-edge of $D$ remains a cut-edge in the process of obtaining $\contractd$ from $D$ by contracting lonely cut-edges one by one. This yields the following.

\begin{corollary}
\label{cor:leaves-D'}
 $\ml(D)\ge \ml(\contractd)$.
\end{corollary}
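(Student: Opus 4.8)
\textbf{Proof plan for Corollary~\ref{cor:leaves-D'}.}

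The plan is to obtain this as a direct iteration of Lemma~\ref{lem:contract} along the sequence of lonely cut-edge contractions that transform $D$ into $\contractd$. First I would fix an arbitrary order $e_1, e_2, \ldots, e_m$ in which the lonely cut-edges of $D$ are contracted to produce $\contractd$, and let $D = D_0, D_1, \ldots, D_m = \contractd$ be the intermediate digraphs, where $D_i$ is obtained from $D_{i-1}$ by contracting $e_i$. The key point is that each $e_i$ is a cut-edge \emph{in $D_{i-1}$}, not merely in $D$, so that Lemma~\ref{lem:contract} applies at every step and gives $\ml(D_{i-1}) \ge \ml(D_i)$; chaining these inequalities yields $\ml(D) = \ml(D_0) \ge \ml(D_m) = \ml(\contractd)$.

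The one thing that actually needs an argument — and the sentence immediately preceding the corollary is exactly this — is that a cut-edge of $D$ remains a cut-edge after contracting some other lonely cut-edge. I would argue as follows: by Lemma~\ref{lem:indeg}, the head of any cut-edge has indegree $1$ in $D$. Contracting a lonely cut-edge $(a,b)$ identifies $a$ and $b$ into a single vertex; the only way this could raise the indegree of some vertex $v$ is if $v$ had incoming edges from both $a$ and $b$, but then after contraction the multiple edges are removed, so the indegree of $v$ does not increase, and in fact the indegree of the new vertex $v_{(a,b)}$ equals $\deg^-(a)$ (since $\deg^-(b)=1$ with $(a,b)$ the unique in-edge). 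Hence every vertex that had indegree $1$ and was not equal to $b$ still has indegree $1$; in particular the head of any remaining cut-edge still has indegree $1$, so the edge into it is still a cut-edge. Since at each step $D_i$ is connected (contraction of a cut-edge preserves connectivity, as the endpoints are reachable and everything reachable through them stays reachable), Lemma~\ref{lem:contract} is applicable throughout.

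I do not anticipate a serious obstacle here: the corollary is essentially bookkeeping on top of Lemma~\ref{lem:contract}. The only mild subtlety is making precise that ``lonely cut-edge'' status and the indegree-$1$ property of heads are preserved along the contraction process so that one is genuinely contracting cut-edges of the current graph at each step; the indegree argument above handles this cleanly. One should also note that the notion of ``lonely'' may in principle change as edges are contracted, but this is irrelevant: all we need for invoking Lemma~\ref{lem:contract} is that $e_i$ is a cut-edge of $D_{i-1}$, and the fact that $\contractd$ is \emph{defined} as the result of contracting all lonely cut-edges of $D$ fixes the set $\{e_1,\ldots,e_m\}$; each such edge, being the in-edge of an indegree-$1$ vertex that persists until the edge is contracted, is indeed a cut-edge at the moment it is processed.
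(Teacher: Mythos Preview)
Your proposal is correct and follows essentially the same approach as the paper: the paper's entire proof is the sentence preceding the corollary, which observes that heads of cut-edges have indegree~$1$ (Lemma~\ref{lem:indeg}), that this property survives the contraction of lonely cut-edges, and hence that Lemma~\ref{lem:contract} may be applied iteratively. Your write-up simply spells out these steps in more detail, including the indegree bookkeeping under contraction; there is no meaningful difference in strategy.
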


\heading{A bound on special vertices} Daligault and Thomass\'{e}~\cite{daligault09} show the following lower bound.

\begin{theorem}[\cite{daligault09}]\label{th:2conSpec}
Let $D$ be a $2$-connected rooted digraph. Then $\ml(D)\geq \frac{|\sp(D)|}{30}$.
\end{theorem}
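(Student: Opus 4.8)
The plan is to argue by local search: pick an out-branching $T$ of $D$ — one exists since $D$ is connected — with the maximum number of leaves $\ell=\ml(D)$, and then bound $|\sp(D)|$ from above by $30\ell$ by charging every special vertex to a leaf of $T$ so that no leaf receives more than $30$ charges.

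\textbf{Step 1: reduce to special path vertices.} Since $T$ has $\ell$ leaves, at most $\ell-1$ vertices have $T$-out-degree at least $2$, and every remaining vertex, a \emph{path vertex}, has $T$-out-degree exactly $1$. Suppressing the path vertices turns $T$ into a topological tree on at most $2\ell-1$ vertices, so the path vertices decompose into at most $2\ell$ maximal directed subpaths of $T$. As the special vertices that are leaves or branching number at most $2\ell$, it suffices to charge the special \emph{path} vertices, at most $28$ to a leaf. The elementary exchange move is: if $a$ is a path vertex with $T$-child $b$ and $(w,b)\in E(D)$ with $w\neq a$ and $w\notin V(\mathrm{subtree}_T(b))$, then $T-(a,b)+(w,b)$ is again an out-branching, and it has strictly more leaves than $T$ unless $w$ is a leaf of $T$ (note that $a$ becomes a leaf). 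Hence, by optimality of $T$, every in-neighbour of such a $b$ lying outside $\mathrm{subtree}_T(b)$, other than its parent, is a leaf of $T$. This is where $2$-connectivity enters: for every $T$-edge $(a,b)$, removing $b$ from $D$ leaves $\mathrm{subtree}_T(b)\setminus\{b\}$ reachable from $r$, so some edge enters $\mathrm{subtree}_T(b)\setminus\{b\}$ from outside; in particular a directed path $r\to v$ avoiding a chosen vertex always exists, giving us an alternative in-edge to re-route along.

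\textbf{Step 2: the charging.} The heart of the argument is to show that every special path vertex $x$ has a leaf $\lambda(x)$ of $T$ that is a $D$-neighbour of $x$ and, crucially, lies within a bounded-length window of $x$ along $T$, after which a window-counting argument bounds the number of $x$'s mapped to a common leaf by an absolute constant. To get there one exploits that non-special path vertices behave like interiors of weak bipaths: each has in-degree $2$ and the reverse of its incoming tree edge is present in $D$, so along a non-special stretch of $T$ the exchange move can be iterated (``shifted'') down the path. Combining these shifts with the $2$-connectivity fact above, I would argue that if a special path vertex had no neighbouring leaf in a short window, one could propagate the alternative in-edge far enough to produce a genuinely improving move, contradicting optimality. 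Tracking the multiplicities over these windows, and the at most $2\ell$ maximal paths, then yields $|\sp(D)|\le 30\ell$, the constant coming out of a bookkeeping in the spirit of the region-decomposition counts of Alber, Fellows and Niedermeier.

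\textbf{Main obstacle.} The real difficulty is precisely this multiplicity control. The exchange move alone only certifies that \emph{some} in-neighbours of special vertices are leaves, whereas a single leaf can a priori be an in-neighbour of unboundedly many special vertices; worse, all of a special vertex's ``extra'' in-edges may be back-edges landing inside its own subtree, so the naive move produces no leaf at all. To neutralise this I would first replace $T$ by a DFS out-branching, so that non-tree edges have restricted structure (back edges to ancestors, and cross edges oriented consistently with the DFS order), then use the weak-bipath description of the non-special stretches together with $2$-connectivity to ``straighten'' the out-branching — killing off the back-edges that block the move — so that the witnessing leaf of each special vertex is confined to a short stretch of $T$. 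Making this straightening rigorous while keeping control of the number of leaves is, I expect, the technically demanding part; an alternative, possibly cleaner route is an induction on $|V(D)|$ in which long bidirectional stretches are contracted (this cannot decrease $\ml$ and cannot destroy $2$-connectivity or lose special vertices, since bipath interiors are never special), reducing to the case where $V(D)$ is itself $O(|\sp(D)|)$.
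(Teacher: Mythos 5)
First, a point of reference: this theorem is not proved in the paper at all --- it is quoted from Daligault and Thomass\'e \cite{daligault09} and used as a black box --- so the comparison can only be against that original proof, which proceeds quite differently from your plan: they derive the constant $30$ from a chain of structural lemmas about $2$-connected rooted digraphs (treating vertices of in-degree at least $3$ and simple in-edges separately, after reductions of the bipath type), not from a local-search charging argument on a maximum-leaf outbranching.

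Judged on its own terms, your proposal has a genuine gap, and it is exactly the one you flag yourself: the heart of the argument (Step 2) is never carried out. The exchange move only yields that, for a path vertex $a$ with $T$-child $b$, every in-neighbour of $b$ lying outside $\mathrm{subtree}_T(b)$ and different from $a$ is a leaf of $T$. This establishes neither of the two facts your charging needs: (i) that a special path vertex $x$ has a witnessing leaf at all --- all of $x$'s extra in-edges may come from inside $\mathrm{subtree}_T(x)$, and the proposed fix of replacing $T$ by a DFS out-branching is in tension with the argument itself, since the exchange/optimality reasoning requires $T$ to be a \emph{maximum-leaf} outbranching, and a maximum-leaf outbranching need not be realizable as a DFS tree, nor does any ``straightening'' obviously preserve maximality; and (ii) bounded multiplicity --- nothing prevents a single leaf from being an in-neighbour of arbitrarily many special vertices, and $2$-connectivity only guarantees \emph{some} edge entering each subtree, not one whose tail lies in a short window of $x$ along $T$, so the claimed window/propagation argument is an unproved assertion rather than a step. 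The fallback route via contracting long bidirectional stretches does not close the gap either: contracting interior edges of \emph{weak} bipaths (whose internal vertices may have extra out-edges) is not obviously $\ml$-preserving (the known safe rule concerns proper bipaths), and even granting it, reducing to $|V(D)|=O(|\sp(D)|)$ leaves you needing a theorem of the form ``reduced $2$-connected rooted digraphs admit outbranchings with $\Omega(n)$ leaves,'' which is precisely the nontrivial content of the cited result, not a shortcut around it.
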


Unfortunately, $\contractd$ is not necessarily $2$-connected so we cannot use the above bound.
However, we can generalize Theorem~\ref{th:2conSpec} as follows.

\begin{theorem}\label{th:2edgeconSpec}
Let $D$ be a connected rooted digraph such that every cut-edge is branching. Then $\ml(D)\geq \frac{|\sp(D)|}{30}-\cv(D)$ and $\ml(D) \geq \frac{|\sp(D)|}{60}$.
\end{theorem}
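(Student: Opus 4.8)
\textbf{Proof proposal for Theorem~\ref{th:2edgeconSpec}.}

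The plan is to reduce the connected, every-cut-edge-branching case to the $2$-connected case handled by Theorem~\ref{th:2conSpec}, losing only an additive $\cv(D)$ term. First I would observe that since every cut-edge of $D$ is branching, each cut-vertex $u$ has at least two private neighbours, i.e. $|P(u)|\ge 2$ (here I use that the head of a branching cut-edge is a private neighbour of its tail, and that two distinct cut-edges with tail $u$ give two distinct such heads). The key structural move is to \emph{shortcut} the cut-vertices: process the cut-vertices of $D$ and, for each one, remove it and reconnect its in-neighbours to its out-neighbours, obtaining a digraph $D^\star$. As Daligault and Thomass\'e observe, shortcutting a cut-vertex does not decrease the maximum number of leaves; moreover it is a standard fact that after shortcutting all cut-vertices one arrives at a $2$-connected rooted digraph $D^\star$. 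Hence $\ml(D)\ge\ml(D^\star)$ by repeatedly applying the shortcut safeness claim, and $\ml(D^\star)\ge\frac{|\sp(D^\star)|}{30}$ by Theorem~\ref{th:2conSpec}.

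The second ingredient is to compare $|\sp(D^\star)|$ with $|\sp(D)|$. The point is that shortcutting a single cut-vertex $u$ destroys at most a bounded number of special vertices, and in fact at most the special vertices "charged" to $u$: when $u$ is removed, its private neighbours (which had indegree $1$ coming from $u$, hence were in $\sp(D)$ only via a simple edge from $u$) may lose their special status, but $u$ itself is removed, and newly created arcs only \emph{add} simple edges, so cannot remove special vertices elsewhere. More carefully, I would argue that each shortcut operation decreases $|\sp|$ by at most one more than it decreases the vertex count; iterating over all $\cv(D)$ cut-vertices gives $|\sp(D^\star)|\ge|\sp(D)|-O(\cv(D))$. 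To get the clean constant in the statement I would want the sharper bookkeeping $|\sp(D^\star)|\ge|\sp(D)|-\cv(D)$: every vertex that is special in $D$ but not in $D^\star$ is either a cut-vertex of $D$ that got shortcut away, or is accounted for against one such cut-vertex (e.g. a private neighbour of a cut-vertex, of which — using that cut-edges are branching — the counts can be matched up). Combining, $\ml(D)\ge\frac{|\sp(D)|-\cv(D)}{30}\ge\frac{|\sp(D)|}{30}-\cv(D)$, which is the first inequality.

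For the second inequality $\ml(D)\ge\frac{|\sp(D)|}{60}$, I would combine the first bound with a direct leaf count in terms of cut-vertices: every cut-vertex $u$ of $D$, being branching, forces at least two leaves among its private neighbours in \emph{any} outbranching, or more usefully, the heads of cut-edges contribute to the leaf count. Concretely, since every cut-edge is branching, one shows $\ml(D)\ge \cv(D)$ — for instance, in an optimal outbranching every branching cut-edge $(u,v)$ is present, and a counting/exchange argument over the $\ge 2$ private neighbours of each cut-vertex yields at least one leaf per cut-vertex that can be charged injectively. Given $\ml(D)\ge\frac{|\sp(D)|}{30}-\cv(D)$ and $\ml(D)\ge\cv(D)$, adding $30$ times the first to the second (or taking the better of the two regimes $\cv(D)\le\frac{|\sp(D)|}{60}$ versus $\cv(D)>\frac{|\sp(D)|}{60}$) gives $\ml(D)\ge\frac{|\sp(D)|}{60}$.

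The main obstacle I anticipate is the precise bookkeeping in the comparison $|\sp(D^\star)|\ge|\sp(D)|-\cv(D)$: one must verify that shortcutting genuinely cannot \emph{create} problems — newly added arcs $(x,w)$ from an in-neighbour $x$ of $u$ to an out-neighbour $w$ of $u$ must not cause double counting, and one must check that vertices which stop being special are injectively chargeable to the $\cv(D)$ removed cut-vertices, using the branching hypothesis to rule out the degenerate case $|P(u)|=1$. The safeness of shortcutting ($\ml(D)\ge\ml(D^\star)$) and the fact that the result is $2$-connected are taken from~\cite{daligault09}, so the only genuinely new work is this counting argument and the leaf bound $\ml(D)\ge\cv(D)$ under the branching assumption.
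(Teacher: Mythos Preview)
Your high-level plan---pass to a $2$-connected digraph, invoke Theorem~\ref{th:2conSpec}, separately show $\ml(D)\ge\cv(D)$, and combine---is exactly what the paper does, and your derivation of the second inequality from the first plus $\ml(D)\ge\cv(D)$ is correct (the paper obtains $\ml(D)\ge\cv(D)+1$ by a BFS-tree argument, Lemma~\ref{cl:2edgecon-c}, and then simply adds the two inequalities). But the paper does \emph{not} reach $2$-connectivity by shortcutting cut-vertices; it \emph{duplicates} each cut-vertex (adds a twin). Duplication can never destroy a special vertex---in-degrees only increase and simple in-edges remain simple---so one gets $|\sp(D')|\ge|\sp(D)|$ for free (Lemma~\ref{prop:double}). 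The $\cv(D)$ loss is paid on the \emph{leaf} side instead: undoing one duplication costs at most one leaf, hence $\ml(D)\ge\ml(D')-\cv(D)$.

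Your shortcutting route shifts the entire loss onto the claim $|\sp(D^\star)|\ge|\sp(D)|-\cv(D)$, and the per-step argument you sketch for it is wrong. Shortcutting a single cut-vertex $u$ can destroy arbitrarily many special vertices. Take $u$ with $N^-(u)=\{a\}$, two private neighbours $p_1,p_2$ (so that $u$ is a cut-vertex with two branching cut-edges), and further out-neighbours $w_1,\dots,w_k$ where each $w_i$ satisfies $N^-(w_i)=\{u,a\}$, $(w_i,a)\in E(D)$, $(w_i,u)\notin E(D)$. Each $w_i$ is special in $D$ solely via the simple in-edge $(u,w_i)$; after shortcutting $u$ its only in-edge is the non-simple $(a,w_i)$, so all $k$ of them drop out of $\sp$ at once. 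Your assertion that ``newly created arcs only add simple edges'' is also false: if both $(x,u)$ and $(u,x)$ are present, shortcutting $u$ may add both $(x,w)$ and $(w,x)$, turning a formerly simple in-edge of $w$ into a non-simple one. Finally, the remark that private neighbours of $u$ have in-degree~$1$ relies on reduction Rules~1--4 (Lemma~\ref{lem:indeg}), which are not part of the hypotheses of Theorem~\ref{th:2edgeconSpec}. It is conceivable that after shortcutting \emph{all} cut-vertices the losses are globally recouped (in the example above, later shortcutting $a$ gives the $w_i$ large in-degree again), but that would need a genuinely different, global argument; the paper's duplication trick bypasses the issue entirely.
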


To prove Theorem~\ref{th:2edgeconSpec}, we first need two lemmas.
By {\em duplicating} a vertex $v$ in a digraph $D$ we mean creating a new digraph $D'$ with $V(D')=V(D)\cup\{v'\}$ and $E(D')=E(D)\cup\{(x,v')\ :\ (x,v)\in E(D)\}\cup\{(v',x)\ :\ (v,x)\in E(D)\}$.

\begin{lemma}\label{prop:double}
Let $D$ be a digraph, and let $D'$ be the digraph obtained from $D$ by duplicating a vertex $v$. 
Then $|\sp(D')|\geq |\sp(D)|$, and $\ml(D)\geq \ml(D')-1$.
\end{lemma}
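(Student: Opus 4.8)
Looking at Lemma~\ref{prop:double}, I need to prove two claims about duplicating a vertex $v$: first that $|\sp(D')| \geq |\sp(D)|$, and second that $\ml(D) \geq \ml(D') - 1$.

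\textbf{Plan for the special vertices bound.} The plan is to argue that duplication cannot destroy special vertices. Let $D'$ be obtained from $D$ by duplicating $v$, creating a copy $v'$ with $N^-_{D'}(v') = N^-_D(v)$ and $N^+_{D'}(v') = N^+_D(v)$. First I would observe that for every vertex $w \neq v$, its in-neighborhood and the set of simple incoming edges can only grow (or stay the same) when we pass to $D'$: if $w$ had an incoming edge from $v$ then in $D'$ it still has that edge from $v$, plus possibly a new edge from $v'$; and if $(u,w)$ was a simple edge in $D$ (meaning $(w,u)\notin E(D)$) it remains simple in $D'$ unless $u = v$, in which case $(v,w)\in E(D)$ and $(w,v)\notin E(D)$, so now $(v',w)\in E(D')$ and $(w,v')\notin E(D')$, i.e.\ we still have a simple incoming edge at $w$. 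Hence every special vertex of $D$ other than $v$ stays special in $D'$. It remains to handle $v$ itself: if $v\in\sp(D)$, I need either $v$ or $v'$ to be special in $D'$; but since $N^-_{D'}(v)=N^-_D(v)$ (duplication adds no edge from $v'$ to $v$ unless $(v,v)\in E(D)$, which is impossible in a digraph without loops here) and likewise the simple-edge status of $v$ is preserved, $v$ itself remains special. This gives an injection from $\sp(D)$ into $\sp(D')$, so $|\sp(D')|\geq|\sp(D)|$.

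\textbf{Plan for the maxleaf bound.} For $\ml(D)\geq \ml(D')-1$, take an outbranching $T'$ of $D'$ with $\ell$ leaves; I want to build an outbranching $T$ of $D$ with at least $\ell-1$ leaves. The idea is to delete one of the two copies, say $v'$, and reattach its subtree. Since $v'$ is a vertex of $D'$, in $T'$ it has a parent $p$ with $(p,v')\in E(T')$, and $(p,v)\in E(D)$ as well (because $N^-_{D'}(v')=N^-_D(v)$). The children of $v'$ in $T'$ are all out-neighbors of $v'$ in $D'$, hence out-neighbors of $v$ in $D$. So I form $T$ on vertex set $V(D)$ by: taking $T' - v'$, identifying the copy; more precisely, make $v$ the parent of all former children of $v'$ (each such edge $(v',y)$ becomes $(v,y)\in E(D)$), while keeping $v$'s existing parent edge and existing children in $T'$. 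The only subtlety is that $v$ already sits somewhere in $T'-v'$; merging $v'$'s subtree under $v$ could create a cycle if $v$ was a descendant of $v'$ in $T'$. I expect this to be the main obstacle. To handle it: if $v$ is a descendant of $v'$ in $T'$, then consider the path from $v'$ down to $v$ in $T'$; when we contract $v'$ into $v$ this path becomes a closed walk, so instead I would root the merged structure appropriately — concretely, remove the edge into $v$ from its current parent in $T'$ (this is allowed: $v$ gains a new parent, namely $p$, via $(p,v)$), attach $v$ under $p$, and reattach $v'$'s other children under $v$. One loses at most one leaf in this surgery (the vertex $p$ might have become a leaf by losing its child $v'$, but it regains $v$ as a child unless $v$ was already its child — in any case at most one vertex changes from internal to leaf or vice versa in a way that decreases the count). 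So $\ml(D)\geq \ell - 1$, and taking $\ell = \ml(D')$ finishes.

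\textbf{Main obstacle.} The delicate point is the tree-surgery in the second part: ensuring that merging $v'$'s subtree into $v$ yields an acyclic spanning structure regardless of the relative position of $v$ and $v'$ in $T'$, while losing at most one leaf. I would organize this by a careful case distinction on whether $v$ lies in the subtree of $v'$ in $T'$ or not, and in each case explicitly exhibit the reattachment, verifying that every new edge is present in $D$ (which follows from $N^\pm_{D'}(v')=N^\pm_D(v)$) and that the leaf count drops by at most one (the only vertex whose out-degree can decrease is the old parent of the detached child, and it decreases by exactly one, so at most one new leaf is created, i.e.\ at most one leaf is lost when passing back — wait, creating a leaf increases the count, so actually we must check the reverse: the vertex $v$ absorbs children and stays internal, $v'$ disappears, and the only risk is $p$ losing $v'$ as a child; if $p$ thereby becomes a leaf we have not lost a leaf but gained one, so the genuine loss comes only from $v'$ itself no longer being counted — if $v'$ was a leaf in $T'$ we lose exactly that one leaf, matching the $-1$).
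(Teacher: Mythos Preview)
Your approach is essentially the paper's, and the first part (preservation of special vertices) is fine and matches the paper's one-line argument. For the second part you are also on the right track, but there are two points worth flagging.

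First, you are working harder than necessary on the ``$v$ is a descendant of $v'$'' case. Since $v$ and $v'$ are twins in $D'$, the paper simply observes that \emph{by symmetry} one may assume $v$ is not a descendant of $v'$ (at least one of the two is not a descendant of the other). Under that assumption, deleting $v'$ and reattaching each child $w$ of $v'$ to $v$ via the edge $(v,w)\in E(D)$ immediately yields an outbranching of $D$, with no cycle to worry about. Your more elaborate surgery in the descendant case is then unnecessary.

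Second, your leaf accounting has a genuine gap. You conclude that ``the genuine loss comes only from $v'$ itself no longer being counted --- if $v'$ was a leaf in $T'$ we lose exactly that one leaf.'' But this misses the case where $v'$ is \emph{not} a leaf while $v$ \emph{is}: then $v$ absorbs $v'$'s children and ceases to be a leaf, so the leaf $v$ is lost even though $v'$ was never a leaf. The clean way to see that at most one leaf is lost is: every vertex other than $v$, $v'$, and the parent $p$ of $v'$ keeps its out-degree; $p$ can only gain leaf status; and among $\{v,v'\}$ at most one leaf is lost, since if $v'$ was a leaf it transfers no children and $v$'s status is unchanged, while if $v'$ was internal then $v$ becomes internal and $v'$ was not a leaf to lose. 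This is exactly the paper's (terse) case split.
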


\begin{proof}
Every special vertex in $D$ is still a special vertex in $D'$: duplicating can neither decrease the in-degree of a vertex, nor remove a simple in-edge. 
Hence $|\sp(D')|\geq |\sp(D)|$.
Take a rooted maximum leaf outbranching $T'$ in $D'$. By symmetry, suppose that $v$ is not a descendant of $v'$.
Let $T = T' \setminus \{v'\} \cup \{(v,w)\ :\ (v',w)\in E(T')\}$.
Note that $T$ is an outbranching in $D$.
If both $v$ and $v'$ were leaves in $T'$ then $v$ is a leaf in $T$, so $T$ has one leaf less than $T'$.
Otherwise even if $v$ is not a leaf in $T$ the number of leaves drops by at most one. 
This finishes the proof.
\end{proof}

\begin{lemma}\label{cl:2edgecon-c}
In any digraph $D$ such that rules 1--4 do not apply and every cut-edge is branching we have $\ml(D)\geq \cv(D)+1$.
\end{lemma}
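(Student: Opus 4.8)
The claim is that if rules 1--4 do not apply and every cut-edge is branching, then $\ml(D) \geq \cv(D) + 1$. The strategy is to build a single outbranching that simultaneously makes every cut-vertex into an \emph{internal} vertex with two private neighbors as leaves --- or rather, to exhibit enough leaves directly. First I would recall that by Lemma~\ref{lem:indeg}, every private neighbor of a cut-vertex $u$ (or of $r$) has indegree $1$, so in \emph{any} outbranching $T$ of $D$, every vertex of $P(u)$ is a child of $u$. Since every cut-edge is branching, each cut-vertex $u$ has at least two cut-edges leaving it, hence $|P(u)| \geq 2$; the same holds for $r$ by Lemma~\ref{lem:root}. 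The natural idea is: in a well-chosen outbranching, for each cut-vertex $u$ we can arrange that at least one vertex of $P(u)$ is a leaf, and these leaves are distinct across different cut-vertices (and distinct from a leaf we produce at $r$), giving $\cv(D)$ leaves from the cut-vertices plus one more.

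\textbf{Main steps.} I would proceed as follows. (1) Root a BFS/DFS-style outbranching $T$ of $D$ and consider the ``block-cut'' structure: the cut-vertices, together with $r$, organize $D$ into pieces. (2) For a leaf-maximal outbranching, observe that whenever $u$ is a cut-vertex with private neighbors $P(u)$, all of $P(u)$ are children of $u$ in $T$; among these $\geq 2$ children, at most one can be an ancestor-path toward ``the rest of the graph below $u$'', so I want to argue that at least $|P(u)| - 1 \geq 1$ of them are leaves, unless some private neighbor itself has out-edges used to reach further vertices. (3) The subtle point: a private neighbor $v \in P(u)$ might be needed as an internal vertex to reach vertices only accessible through $v$ --- but then $v$ would itself be a cut-vertex, and we recurse. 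So the right formulation is an induction on $\cv(D)$, or equivalently on the number of vertices: take a ``deepest'' cut-vertex $u$ (one such that no cut-vertex lies strictly below it in the cut-structure), note that all but possibly one vertex of $P(u)$ must be leaves in \emph{every} outbranching, pick up $\geq 1$ guaranteed leaf, delete the corresponding subtree / contract appropriately to reduce $\cv(D)$ by one, and apply induction. Finally (4) handle the base: when $\cv(D) = 0$, $D$ is $2$-edge-connected, $\deg^+(r) \geq 2$ by Lemma~\ref{lem:root}, both out-neighbors of $r$ are private, and at least one is a leaf in a suitable outbranching, giving $\ml(D) \geq 1 = \cv(D) + 1$.

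\textbf{The obstacle.} The hard part will be the bookkeeping that makes the leaves \emph{disjoint} across cut-vertices and ensures the inductive deletion does not destroy the hypotheses (``rules 1--4 do not apply'' and ``every cut-edge is branching'' need not be literally preserved under deletion, so I would phrase the induction in terms of a purely structural statement about outbranchings rather than about reduced instances). Concretely: given a connected rooted digraph $D$ in which every cut-vertex has at least two private neighbors each of indegree $1$, one has $\ml(D) \geq \cv(D) + 1$. This abstracted statement is closed enough under the relevant operations (taking the subgraph reachable after rerouting at a deepest cut-vertex) to run the induction cleanly. I expect the proof to construct the outbranching greedily from $r$ downward: whenever the construction reaches a cut-vertex $u$, it attaches all of $P(u)$ as children, continues the spanning process using exactly one of them (or another out-neighbor of $u$) to reach $u$'s ``downstream'' component, and leaves the remaining $\geq 1$ private neighbors as leaves; a counting argument over the cut-vertices, plus the one extra leaf harvested at the ``bottom'' of the last component, yields the bound. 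The only real care needed is to confirm that ``exactly one private neighbor suffices to continue'' --- which follows because the private neighbors of $u$ that are \emph{not} themselves cut-vertices have no vertices uniquely below them, and the ones that are cut-vertices are handled one level deeper in the induction.
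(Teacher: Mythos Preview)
Your setup is exactly right, and in fact already contains everything needed. By Lemma~\ref{lem:indeg} every $v\in P(u)$ has indegree $1$, so $(u,v)$ is a cut-edge and hence lies in \emph{every} outbranching of $D$; and since all cut-edges are branching, $|P(u)|\geq 2$ for each cut-vertex $u$. But from here the proof is one line, and you are walking past it: in any outbranching $T$ whatsoever, every cut-vertex thus has out-degree at least $2$, and the tree identity
\[
\#\text{leaves}(T)\;=\;1+\sum_{v\in V(T)}\max\bigl(\deg^+_T(v)-1,\,0\bigr)
\]
immediately yields $\#\text{leaves}(T)\geq 1+\cv(D)$. There is no need for induction, no ``deepest'' cut-vertex, no disjointness bookkeeping: you are not assigning a specific leaf to each cut-vertex, you are just counting branching points.

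This is precisely the paper's proof: take a BFS tree, observe that every cut-vertex has at least two children in it (because all its private-neighbor edges are cut-edges), and conclude. Your inductive plan might be completable, but the ``obstacle'' you flag---preserving the hypotheses under deletion, keeping harvested leaves distinct---is entirely an artifact of the detour. In particular, your intermediate claim that ``at most one private neighbor can be an ancestor-path toward the rest of the graph below $u$'' is not true in general (several private neighbors may each lead to their own piece), which is why you were pushed toward recursion; the counting argument sidesteps this entirely.
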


\begin{proof}
Let $T$ be the spanning tree of $D$ obtained through a Breadth-First-Search started in $r$. 
Consider any cut-vertex $u$. 
Since $u$ is a cut-vertex, we have $|P(u)|\geq 1$. 
If $|P(u)|=1$, let $v$ be the only private neighbor of $u$. 
The edge $(u,v)$ is then a lonely cut-edge, a contradiction. 
Therefore $|P(u)|\geq 2$. 
By Lemma~\ref{lem:indeg} all the edges from $u$ to $P(u)$ are cut-edges.
It follows that every cut-vertex in $D$ has at least two out-neighbors in $T$. 
Hence $T$ has at least $\cv(D)+1$ leaves.
\end{proof}

We are now ready to prove Theorem~\ref{th:2edgeconSpec}.

\begin{proof}[Proof of Theorem~\ref{th:2edgeconSpec}]
Let $S$ be the set of cut-vertices in $D$. 
Take the digraph $D'$ obtained from $D$ by duplicating each vertex in $S$ (in any order). 
We claim that $D'$ is 2-connected. 
Indeed, assume that $D'$ contains a cut-vertex $u$; since a vertex and its duplicate are twins, we can assume that $u\in D$.
Since $D$ is a subgraph of $D'$, it follows that $u$ is a cut-vertex in $D$. 
Now $D'$ contains a duplicate $u'$ of $u$, so every vertex reachable from $r$ in $D'$ is still reachable in $D'-u$, a contradiction. 
Therefore $D'$ is 2-connected and by Theorem~\ref{th:2conSpec} we get $\ml(D')\geq \frac{|\sp(D')|}{30}$. 
By Lemma~\ref{prop:double}, we have $\ml(D) \geq \ml(D')-\cv(D)$ and $|\sp(D')| \geq |\sp(D)|$. 
Thus
\begin{equation}\label{eq1}
\ml(D) \geq \frac{|\sp(D)|}{30}- \cv(D).
\end{equation}
From Lemma~\ref{cl:2edgecon-c} we have
\begin{equation}\label{eq2}
\ml(D) \geq \cv(D)+1.
\end{equation}
The claim now follows from adding \eqref{eq1} and \eqref{eq2}.
\end{proof}

Now it suffices to show that Theorem~\ref{th:2edgeconSpec} can be applied to graph $\contractd$.

\begin{lemma}\label{lem:twopaths}
Suppose $D$ is a rooted digraph that is connected. Then for any vertex $u\neq r$ that is not the head of a cut-edge, one can find two simple paths $P_1,P_2$ from $r$ to $u$ that end with different edges.
\end{lemma}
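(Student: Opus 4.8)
The statement asks: if $D$ is connected (every vertex reachable from $r$) and $u \neq r$ is not the head of a cut-edge, then there are two simple paths from $r$ to $u$ whose last edges differ. The plan is a direct argument by contradiction. Suppose that every simple path from $r$ to $u$ ends with the same edge $(w,u)$. I would first observe that this means every walk from $r$ to $u$ that does not revisit $u$ must pass through $w$ as its second-to-last vertex just before reaching $u$; more to the point, I claim that the edge $(w,u)$ is then a cut-edge, contradicting the hypothesis on $u$. To see this, consider $D - (w,u)$: I want to show $u$ is no longer reachable from $r$. If it were, there would be a walk from $r$ to $u$ in $D$ not using $(w,u)$; extracting a simple path from this walk gives a simple $r$–$u$ path in $D$ whose last edge is not $(w,u)$ (the last edge of a simple path extracted from a walk avoiding a given arc still avoids that arc, since the extracted path is a subset of the walk's arcs), contradicting the assumption that all simple $r$–$u$ paths end with $(w,u)$.

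So the crux is: if all simple $r$–$u$ paths in $D$ share the final edge $(w,u)$, then $(w,u)$ is a cut-edge, hence $u$ is the head of a cut-edge — contradiction. Running this contrapositive, we get that not all simple $r$–$u$ paths end with the same edge, so there exist two simple $r$–$u$ paths $P_1, P_2$ ending with different edges. One subtlety to handle carefully is that $u$ must have in-degree at least $1$ in the first place (so that "the last edge" makes sense): this is immediate since $D$ is connected and $u \neq r$, so $u$ is reachable from $r$ and hence has an incoming edge.

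The step I expect to require the most care is the passage from "there is a walk avoiding $(w,u)$" to "there is a simple path avoiding $(w,u)$ whose last edge is not $(w,u)$". The standard fact is that from any walk from $r$ to $u$ one can extract a simple path from $r$ to $u$ using only arcs that appear on the walk; since the walk avoids $(w,u)$, so does the extracted simple path, and in particular the path's final edge is some arc $(w',u)$ with $w' \neq w$. This is routine but is the one place where one must be precise about which arcs survive the extraction. Everything else — the reachability reformulation of "cut-edge" and the contradiction with the hypothesis — is a direct unwinding of the definitions given earlier in the section (a cut-edge is an edge whose removal makes some vertex unreachable from $r$; here that vertex is $u$ itself).
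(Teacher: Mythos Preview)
Your argument is correct and is essentially the contrapositive of the paper's proof: the paper considers the set $R$ of in-neighbors of $u$ reachable from $r$ in $D-u$, observes $R\neq\emptyset$, and shows that $|R|=1$ would force the unique edge into $u$ to be a cut-edge, so $|R|\ge 2$ and two paths can be built directly. Your version assumes all simple $r$--$u$ paths share a last edge and derives that this edge is a cut-edge; this is the same idea, only phrased as a proof by contradiction rather than as a direct construction via $R$.
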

\begin{proof}
Let $R$ be the set of inneighbors of $u$ that are reachable from $r$ in $D-u$. Since $D$ is connected we have $R\neq \emptyset$, and if $|R|\geq 2$ then we would be done. Suppose therefore that $R=\{v\}$ for some vertex $v$ such that $(v,u)\in E(D)$. Then $(v,u)$ would be a cut-edge, a contradiction.
\end{proof}

Lemma~\ref{lem:twopaths} will be most often used in the following setting. Suppose that we know that in $D$ the head of every cut-edge has indegree $1$. Then if we know that some edge $(v,u)$ is not a cut-edge, then $u$ is not the head of any cut-edge, and hence we can apply Lemma~\ref{lem:twopaths} to it.

\begin{lemma}
 \label{lem:contract-no-new-cut-edges}
 Assume that rules 1--4 do not apply to $D$.
 Let $S$ be the set of lonely cut-edges in $D$.
 Consider any subset $S' \subseteq S$. 
 Let $D_1$ be the graph obtained from $D$ by contracting all edges of $S'$.
 Then $D_1$ does not contain a new cut-edge.
\end{lemma}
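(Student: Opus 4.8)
The plan is to prove the contrapositive-flavoured statement: every cut-edge of $D_1$ is the image, under the contraction map, of a cut-edge of $D$. Write $\pi\colon V(D)\to V(D_1)$ for this map. First I would observe that $S'$ is actually a \emph{matching} in $D$: no vertex is the head of two edges of $S'$ because the head of a cut-edge has in-degree $1$ by Lemma~\ref{lem:indeg}; no vertex is the tail of two edges of $S'$ because the edges of $S'$ are \emph{lonely} cut-edges; and no vertex is simultaneously a tail and a head of edges of $S'$ by Lemma~\ref{lem:head-tail}. Hence contracting $S'$ just merges a family of pairwise disjoint pairs $\{u_i,v_i\}$, where $(u_i,v_i)$ ranges over $S'$, into vertices $w_i$, and the order of contractions is irrelevant.

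The key structural point I would isolate next is that, since $v_i$ has in-degree $1$ in $D$ with $N^-_D(v_i)=\{u_i\}$ (Lemma~\ref{lem:indeg}), every edge of $D$ entering $\{u_i,v_i\}$ from outside enters $u_i$. Consequently, for each vertex $z$ of $D_1$ there is a canonical preimage $\hat z$ ($\hat z=z$ if $z$ is not a merged vertex, and $\hat z=u_i$ if $z=w_i$) such that \emph{all} edges of $D_1$ with head $z$ are images of $D$-edges with head $\hat z$. Now fix a cut-edge $e_1=(a,b)$ of $D_1$ and set $E^\ast=\{e\in E(D):\pi(e)=e_1\}$; this set is nonempty, all its edges have head $\hat b$, and from the description of $\pi$ one sees that $|E^\ast|\le 2$, with equality only if $a$ is a merged vertex $w_i$ and $E^\ast=\{(u_i,\hat b),(v_i,\hat b)\}$.

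The heart of the argument is a projection/lifting observation for walks: if $W$ is an $r$-to-$\hat b$ walk in $D$ avoiding every edge of $E^\ast$, then $\pi(W)$ is an $r$-to-$b$ walk in $D_1$ avoiding $e_1$ (projecting a walk uses only images of its edges, and $D_1$ is connected since $D$ is), contradicting that $e_1$ is a cut-edge — the head of a cut-edge is unreachable once the edge is removed. So no such $W$ exists. If $|E^\ast|=1$, say $E^\ast=\{\tilde e\}$, this says precisely that $\hat b$ is unreachable from $r$ in $D-\tilde e$, i.e.\ $\tilde e$ is a cut-edge of $D$ with $\pi(\tilde e)=e_1$, and we are done. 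To rule out $|E^\ast|=2$: there $\hat b$ would have in-degree $\ge 2$ in $D$; but $(u_i,v_i)\in S'$ is a cut-edge, so $u_i$ is a cut-vertex of $D$ or $u_i=r$, and $\hat b\notin\{u_i,v_i\}$. If $\hat b$ were reachable from $r$ in $D-u_i$, such a walk would avoid $(u_i,\hat b)$ and also $(v_i,\hat b)$ (reaching $v_i$ forces traversing its unique in-edge $(u_i,v_i)$, hence visiting $u_i$), again contradicting the projection observation; so $\hat b\in P(u_i)$, and then Lemma~\ref{lem:indeg} forces $\hat b$ to have in-degree $1$ — a contradiction. Hence $|E^\ast|=1$ always, which finishes the proof.

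The step I expect to be the main obstacle is precisely the possibility that an edge of $D_1$ arises from \emph{two} parallel $D$-edges collapsing (the $|E^\ast|=2$ case); handling it cleanly is exactly where the interplay between ``$(u_i,v_i)$ is a lonely cut-edge'', the resulting cut-vertex $u_i$, and the in-degree-$1$ property of private neighbours (Lemma~\ref{lem:indeg}) is needed. A secondary technical care point is making the projection of a walk genuinely avoid $e_1$ — which works out because the matching structure from Step~1 guarantees that $\pi$ only identifies the two endpoints of each edge of $S'$ and nothing else, so an edge not in $E^\ast$ never maps to $e_1$.
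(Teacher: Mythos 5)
Your proof is correct, and it takes a genuinely different route from the paper's. The paper proves the lemma by induction on $|S'|$, contracting one lonely cut-edge at a time: for an alleged new cut-edge $(u,v)$ in the contracted graph it distinguishes three cases according to whether $u$, $v$, or neither is the freshly merged vertex, and in each case invokes Lemma~\ref{lem:twopaths} to produce two $r$-to-$v$ paths ending in different edges which survive the contraction (or derives that Rule~\ref{r:cutneighbor} would apply); the induction hypothesis is needed to keep the ``heads of cut-edges have indegree $1$'' property available in the intermediate graphs. You instead contract all of $S'$ at once, after the (correct) observation that $S'$ is a matching by Lemmas~\ref{lem:indeg} and~\ref{lem:head-tail} together with loneliness, and you argue globally: every cut-edge $e_1$ of $D_1$ has a preimage set $E^\ast$ of size $1$ or $2$ with all edges entering the canonical preimage $\hat b$, the walk-projection argument shows no $r$-to-$\hat b$ walk in $D$ avoids $E^\ast$, the size-$1$ case immediately exhibits a cut-edge of $D$ mapping to $e_1$, and the size-$2$ case is killed because $\hat b$ would be a private neighbour of the cut-vertex (or root) $u_i$ with indegree at least $2$, contradicting Lemma~\ref{lem:indeg}. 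This matches the way the lemma is actually used later (``cut-edges of the contracted graph are cut-edges of $D$''), so the conclusion you prove is the right one. Your version avoids the induction, the three-case analysis and Lemma~\ref{lem:twopaths} entirely, at the price of the small bookkeeping about preimages of edges under the contraction map; as a side remark, the $|E^\ast|=2$ case can be dispatched even faster by noting that $(u_i,\hat b),(v_i,\hat b)\in E(D)$ would make Rule~\ref{r:cutneighbor} applicable with $x=\hat b$ and $y=v_i$, since removing $u_i$ already disconnects $v_i$ from $r$.
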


\begin{proof}
  Induction on $|S'|$. The claim is trivially true for $|S'|=0$. Assume $|S'|>0$.
  Pick any cut-edge $(x,y)\in S'$ and let $D_0$ be the graph obtained from $D$ by contracting all edges of $S'\setminus\{(x,y)\}$; obviously $D_0$ is connected.
  From the induction hypothesis we have that the set of cut-edges of $D_0$ is a subset of the set of cut-edges of $D$, and hence from the fact that in $D$ all the heads of cut-edges have indegrees equal to $1$, the same conclusion follows for $D_0$ as well. Hence, whenever in $D_0$ we conclude that an edge $(u,v)$ is not a cut-edge, then all the edges incoming to $v$ are also not cut-edges.
  We will show that contracting $(x,y)$ in $D_0$ does not create a new cut-edge in $D_1$.

 Assume for a contradiction that $(u,v)$ is a new cut-edge in $D_1$, i.e., either $(u,v)\not\in E(D_0)$ or $(u,v)\in E(D_0)$ and is not a cut-edge in $D_0$.
 In the former case we have two subcases: contracting $(x,y)$ creates vertex $u$ or $v$.
 
 \mycase{1} $v$ is obtained by contracting $(x,y)$.
 Then there is an edge $(u,x)$ or $(u,y)$ in $D_0$. However, the latter situation is impossible because then an edge enters $y$ in $D$, a contradiction with Lemma~\ref{lem:indeg}.
 Hence $(u,x)\in D_0$, and in particular $x\ne r$.
 Edges entering $x$ in $D$ are not cut-edges by Lemma~\ref{lem:head-tail}, and hence by induction hypothesis no cut-edge enters $x$ in $D_0$.
 By Lemma~\ref{lem:twopaths} it follows that in $D_0$ there are two paths $P_1$, $P_2$ from $r$ to $x$, each entering $x$ via a different edge, say $P_1$ by $(a_1,x)$ and $P_2$ by $(a_2,x)$, with $a_1\ne a_2$. Note that $a_1\ne y$ and $a_2\ne y$ because by Rule~\ref{r:cut-double-edge} we have that $(y,x)\notin E(D)$.
 By replacing $(a_1,x)$ with $(a_1,v)$ in $P_1$ and $(a_2,x)$ with $(a_2,v)$ in $P_2$ we get two paths $P_1'$ and $P_2'$ from $r$ to $v$ in $D_1$ that end with different edges.
 It follows that $(u,v)$ is not a cut-edge in $D_1$, a contradiction.
 
 \mycase{2} $u$ is obtained by contracting $(x,y)$.
 Then $D_0$ contains $(x,v)$ or $(y,v)$.
 No other edge leaving $x$ is a cut-edge in $D$ because $(x,y)$ is lonely in $D$.
 Also no edge leaving $y$ is a cut-edge in $D$ by Lemma~\ref{lem:head-tail}.
 Hence by induction hypothesis neither $(x,v)$ nor $(y,v)$ can be a cut-edge in $D_0$.
 Since $v$ has an incoming edge that is not a cut-edge in $D_0$, as explained before we infer that no edge incoming to $v$ in $D_0$ is a cut-edge. 

 From Lemma~\ref{lem:twopaths} it follows that in $D_0$ there are two paths $P_1$, $P_2$ from $r$ to $v$, each entering $v$ via a different edge.
 If $(u,v)$ is a cut-edge in $D_1$, then it means that $P_1$ ends with $(x,v)$ and $P_2$ ends with edges $(x,y),(y,v)$, because $(x,y)$ is a cut-edge.
 If $v\in D$ then Rule~\ref{r:cutneighbor} would apply to $D$ (with $v$ as $x$), a contradiction.
 Otherwise $v$ is obtained by contracting a cut-edge $(v_1,v_2)$.
 However, by Lemma~\ref{lem:no-edge-to-head}, no other edge enters $v_2$ in $D$, so $D$ contains both edges $(x,v_1)$ and $(y,v_1)$.
 Again, we see that Rule~\ref{r:cutneighbor} applies to $D$ (with $v_1$ as $x$), a contradiction.
 
 \mycase{3} Neither $u$ nor $v$ is obtained by contracting $(x,y)$. Since $(u,v)$ is not a cut-edge in $D_0$, as in the previous case we infer that in fact no edge incoming to $v$ is a cut-edge in $D_0$.
 By Lemma~\ref{lem:twopaths}, in $D_0$ there are two paths $P_1$ and $P_2$ from $r$ to $v$, each ending with a different edge.
 Let us assume that $P_1$ ends by $(a,v)$ and $P_2$ ends by $(b,v)$, for some $a\ne b$.
 Let $P_1'$ and $P_2'$ be the paths in $D_1$ obtained from $P_1$ and $P_2$ by contracting edge $(x,y)$, and possibly omitting a loop in case both $x$ and $y$ were traversed by $P_1$ or $P_2$.
 Then $P_1'$ and $P_2'$ end with different edges unless $\{x,y\}=\{a,b\}$. By symmetry suppose that $(x,y)=(a,b)$. However, $(x,y)$ is a cut-edge. Hence if $v\in D$ then Rule~\ref{r:cutneighbor} would apply to $D$ (with $v$ as $x$), a contradiction. Otherwise $v$ is obtained by contracting a cut-edge $(v_1,v_2)$, and the same reasoning as in the previous case also gives a contradiction.
\end{proof}

\begin{lemma}
\label{lem:D'-2-edge-conn}
 If rules 1--4 do not apply to $D$, then in graph $\contractd$ all cut-edges are branching.
\end{lemma}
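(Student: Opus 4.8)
The goal is to show that every cut-edge of $\contractd$ is branching, i.e., for every cut-edge $(x,y)$ of $\contractd$ there is another cut-edge of $\contractd$ with tail $x$. The plan is to argue by contradiction: suppose $(x,y)$ is a \emph{lonely} cut-edge in $\contractd$. By Lemma~\ref{lem:contract-no-new-cut-edges}, every cut-edge of $\contractd$ is already a cut-edge of $D$, so $(x,y)$ corresponds to a cut-edge of $D$ in the following sense: since $\contractd$ is obtained from $D$ by contracting all lonely cut-edges, the bags $B_x$ and $B_y$ are defined, and there is an edge of $D$ from $B_x$ to $B_y$ that is a cut-edge; by Lemma~\ref{lem:no-edge-to-head} this edge enters $t_{B_y}$, and by Lemma~\ref{lem:head-tail} it leaves $t_{B_x}$. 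Call this edge $(t_{B_x},t_{B_y})$; I will derive a contradiction with the fact that the rules do not apply to $D$.

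The key step is to analyze why $(t_{B_x}, t_{B_y})$ was not contracted during the construction of $\contractd$. It was not contracted precisely because it is a \emph{branching} cut-edge in $D$: there is another cut-edge $(t_{B_x}, y')$ of $D$ with the same tail $t_{B_x}$, with $y' \neq t_{B_y}$ (also $y'$ is not in $B_y$, since $B_y$ has size at most $2$ and its head $h_{B_y}$ has indegree $1$ by Lemma~\ref{lem:indeg} so cannot receive the edge $(t_{B_x},y')$ in addition to $(t_{B_y}, h_{B_y})$). Now I track the edge $(t_{B_x}, y')$ through the contraction process: since its head $y'$ has indegree $1$ in $D$ (Lemma~\ref{lem:indeg}) and contractions of lonely cut-edges preserve this, $y'$ ends up in some bag $B'$ with $y' = t_{B'}$, and by Lemma~\ref{lem:no-edge-to-head} the edge $(t_{B_x}, y')$ survives into $\contractd$ as an edge from $B_x$ to $B'$ entering $t_{B'}$; one checks it is still a cut-edge using Lemma~\ref{lem:contract-no-new-cut-edges} (no \emph{new} cut-edges appear, but old ones may — actually we need that it remains a cut-edge, which follows because the vertices separated by it in $D$ are still separated in $\contractd$). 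The point is that this yields a second cut-edge of $\contractd$ with tail $x = v_{B_x}$, unless $B' = B_y$, which we excluded. Hence $(x,y)$ is branching in $\contractd$, a contradiction with our assumption.

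The main obstacle I expect is the careful bookkeeping between $D$ and $\contractd$: making sure that an edge identified as a (branching) cut-edge in $D$ genuinely descends to a cut-edge of $\contractd$ with the correct tail and head, and that the second cut-edge witnessing ``branching'' in $D$ does not collapse onto the first one in $\contractd$ (i.e., that its head lands in a bag distinct from $B_y$). The size-$2$ bound on bags (Lemma~\ref{lem:bag-size-2}), the indegree-$1$ property of cut-edge heads (Lemma~\ref{lem:indeg}), and Lemma~\ref{lem:no-edge-to-head} together pin down exactly how edges between bags behave, so the separation-preservation argument should go through; the only genuinely delicate point is the degenerate case where $x$ itself was created by a contraction, but since only \emph{lonely} cut-edges get contracted and the two cut-edges $(t_{B_x},t_{B_y})$, $(t_{B_x},y')$ leave the \emph{same} endpoint $t_{B_x}$ of $B_x$ (by Lemma~\ref{lem:head-tail}, $t_{B_x}$ is not a head of a contracted cut-edge), no interaction with the internal structure of $B_x$ occurs, and the argument is unaffected.
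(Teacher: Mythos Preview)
Your proof is correct and follows essentially the same approach as the paper: use Lemma~\ref{lem:contract-no-new-cut-edges} to conclude that every cut-edge of $\contractd$ was a cut-edge of $D$, observe it must have been branching in $D$ (else it would have been contracted), and then check that its companion cut-edge survives into $\contractd$ with the same tail. The paper's version is simply more terse---it notes directly that heads of branching cut-edges have indegree~$1$ and hence these edges remain cut-edges (and branching) in $\contractd$, without the contradiction framing or explicit bag-tracking; in particular your ``delicate case'' $|B_x|=2$ is actually vacuous, since $t_{B_x}$ would then be the tail of both the contracted lonely cut-edge $(t_{B_x},h_{B_x})$ and of $(t_{B_x},t_{B_y})$, contradicting loneliness.
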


\begin{proof}
 By Lemma~\ref{lem:contract-no-new-cut-edges} applied to all lonely cut-edges, in $\contractd$ all lonely cut-edges are contracted and no new cut-edges appear.
 Moreover all the cut-edges that are branching in $D$ are also cut-edges in $\contractd$ (since indegrees of their heads are 1), so they are also branching in $\contractd$.
 This finishes the proof.
\end{proof}

By Lemma~\ref{lem:D'-2-edge-conn} and Corollary~\ref{cor:leaves-D'} we get the following lower bound.

\begin{lemma}
\label{lem:lower-bound-special}
 $\displaystyle \ml(D)\ge \frac{|\sp(\contractd)|}{60}.$
\end{lemma}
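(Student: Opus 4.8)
The plan is to derive the bound purely by assembling facts already proved above: transfer the lower bound from $\contractd$ back to $D$ using Corollary~\ref{cor:leaves-D'}, and bound $\ml(\contractd)$ from below using the generalized Daligault--Thomass\'e estimate of Theorem~\ref{th:2edgeconSpec}.

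First I would check that $\contractd$ satisfies the hypotheses of Theorem~\ref{th:2edgeconSpec}, namely that it is a connected rooted digraph all of whose cut-edges are branching. Connectivity is immediate: Rule~\ref{r:unreach} does not apply, so $D$ is connected, and contracting edges cannot disconnect a graph; the root of $\contractd$ is the image of $r$ and still has in-degree~$0$. The statement that every cut-edge of $\contractd$ is branching is exactly Lemma~\ref{lem:D'-2-edge-conn}.

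Then I would apply the second inequality of Theorem~\ref{th:2edgeconSpec} to $\contractd$ to get $\ml(\contractd)\ge |\sp(\contractd)|/60$, and combine it with $\ml(D)\ge\ml(\contractd)$ from Corollary~\ref{cor:leaves-D'} to conclude $\ml(D)\ge |\sp(\contractd)|/60$.

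I do not expect any real obstacle, since the two substantive facts --- that contracting lonely cut-edges introduces no new cut-edges and preserves the ``every cut-edge is branching'' property (Lemma~\ref{lem:D'-2-edge-conn}, built on Lemma~\ref{lem:contract-no-new-cut-edges}), and that the bound $\ml\ge|\sp|/30$ for $2$-connected digraphs extends to digraphs all of whose cut-edges are branching (Theorem~\ref{th:2edgeconSpec}) --- are already in hand. The one point worth double-checking is that the proof of Theorem~\ref{th:2edgeconSpec} goes through Lemma~\ref{cl:2edgecon-c}, whose hypothesis additionally requires that Rules~1--4 do not apply; so one should also observe that these rules still do not apply to $\contractd$ (in particular, heads of cut-edges of $\contractd$ retain in-degree~$1$ by construction, so Rule~\ref{r:cut-edge} cannot trigger), or simply re-run the short Breadth-First-Search argument of Lemma~\ref{cl:2edgecon-c} directly on $\contractd$.
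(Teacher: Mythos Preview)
Your proposal is correct and mirrors the paper's proof exactly: the paper also simply combines Lemma~\ref{lem:D'-2-edge-conn} (every cut-edge of $\contractd$ is branching) with Theorem~\ref{th:2edgeconSpec} applied to $\contractd$, and then transfers the bound back to $D$ via Corollary~\ref{cor:leaves-D'}. Your final caveat about Lemma~\ref{cl:2edgecon-c} is unnecessary here, since Theorem~\ref{th:2edgeconSpec} is stated with only the hypothesis ``connected rooted digraph with every cut-edge branching'' and can be invoked as a black box; any mismatch between that hypothesis and the use of Lemma~\ref{cl:2edgecon-c} would be an issue internal to the proof of Theorem~\ref{th:2edgeconSpec}, not to the present lemma.
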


\heading{A bound on isolated vertices}
We say that a bag $B$ is {\em special} when $v_B$ is special in $\contractd$.
We say that a bag $B$ is {\em isolated} when $B$ is a non-special bag of size 2 and there is no edge from $t_B$ to a special bag.
Vertex $v\in V(\contractd)$ is {\em isolated} if $v=v_B$ for some isolated bag $B$. 

The set of all isolated vertices in $\contractd$ is denoted by $\iso(\contractd)$.

By {\em shortcutting} a vertex $v\ne r$ in a digraph $D$ we mean creating a new digraph $D'$ obtained from $D$ by removing $v$ and adding an edge $(x,y)$ for every directed path $(x,v,y)$ in $D$.

Let $\shortcutd$ be the graph obtained from $D$ by (i) contracting all lonely cut-edges that form a non-isolated bag, and then (ii) shortcutting every tail of an isolated bag. Note that $\shortcutd$ is not necessarily $H$-minor-free, but we will use it only as an auxiliary construction when establishing a lower bound on $\ml(D)$ in terms of $\iso(\contractd)$.

The proof of following lemma can be found in~\cite{daligault09}:

\begin{lemma}\label{lem:shortcut}
Let $D$ be a digraph, and let $D'$ be the digraph obtained from $D$ by shortcutting a cut-vertex $v$. 
Then $\ml(D)=\ml(D')$.
\end{lemma}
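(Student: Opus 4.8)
The plan is to establish the two inequalities $\ml(D')\ge \ml(D)$ and $\ml(D)\ge \ml(D')$ separately, in both cases by transforming a given outbranching into an outbranching of the other graph without losing leaves. Throughout we use that, by the definition of shortcutting, $E(D')$ differs from $E(D-v)$ only by arcs $(x,y)$ for which $(x,v),(v,y)\in E(D)$; in particular every arc of $E(D')\setminus E(D)$ has its tail in $N^-(v)$ and its head in $N^+(v)$.

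For $\ml(D')\ge \ml(D)$, take an outbranching $T$ of $D$ with $\ml(D)$ leaves. Since $v\ne r$, it has a parent $p$ in $T$, and since $v$ is a cut-vertex it has a private neighbor, which must be a child of $v$ in any outbranching; thus $v$ has children $c_1,\ldots,c_m$ with $m\ge 1$. Let $T'=(T-v)\cup\{(p,c_i)\ :\ i=1,\ldots,m\}$. Each arc $(p,c_i)$ lies in $D'$ because $p\in N^-(v)$ and $c_i\in N^+(v)$, and $T'$ is an outbranching of $D'$: everything formerly below some $c_i$ is now reached through $p$, while $p$ is reached as before. The only vertex whose leaf status could change is $p$, which had child $v$ in $T$ and has children $c_1,\ldots,c_m$ in $T'$, hence stays internal, while the internal vertex $v$ disappears; so $T'$ has exactly $\ml(D)$ leaves.

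For the reverse inequality $\ml(D)\ge \ml(D')$, fix an outbranching $T'$ of $D'$ and let $F$ be the set of its arcs that do not belong to $D$. By the opening observation, every $(x,y)\in F$ satisfies $x\in N^-(v)$ and $y\in N^+(v)$. If $F=\emptyset$ then $T'$ is already an outbranching of $D$; otherwise pick $(x_0,y_0)\in F$ with $y_0$ of minimum depth in $T'$ among the heads of arcs of $F$. Build $T$ from $T'$ by adding $v$, deleting $(x_0,y_0)$ and adding $(x_0,v)$ and $(v,y_0)$, and for every other arc $(x,y)\in F$ deleting it and adding $(v,y)$. All new arcs lie in $D$ by the observation above. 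A routine check shows $T$ is an outbranching of $D$: it has $|V(D)|$ vertices and $|V(D)|-1$ arcs, and every vertex is still reachable from $r$ (heads of $F$ now via $v$, everything else as before), where the depth-minimality of $y_0$ guarantees that no head of an arc of $F$ is an ancestor of $x_0$ in $T'$, which both preserves reachability and rules out a directed cycle through the unique in-arc $(x_0,v)$ of $v$. Finally no leaf of $T'$ becomes internal in $T$, since the only vertices that gain a child are the new vertex $v$ and the vertex $x_0$, which was already internal; hence $T$ has at least as many leaves as $T'$. Combining the two directions gives $\ml(D)=\ml(D')$.

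The main obstacle is the second direction: a priori $T'$ may use many arcs realizable in $D$ only by passing through the single vertex $v$, whereas $v$ can be the immediate out-neighbour of just one tail. The fix is to route only the ``topmost'' such arc $(x_0,y_0)$ through $v$ (giving $v$ its own parent $x_0$) and to hang the heads of all other offending arcs directly below $v$; the minimality of the depth of $y_0$ is precisely what keeps the result acyclic and connected. Note that the cut-vertex hypothesis is needed only in the first direction, to ensure $v$ is internal in every outbranching of $D$.
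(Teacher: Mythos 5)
Your overall strategy is sound, and in fact the paper does not prove this lemma at all --- it only cites Daligault and Thomass\'e~\cite{daligault09} --- so yours is a self-contained argument in the same exchange spirit as the cited proof. The heart of your second direction (rerouting the deepest-level-minimal shortcut arc $(x_0,y_0)$ through $v$, hanging the other offending heads below $v$, and using depth-minimality of $y_0$ to guarantee that neither $x_0$ nor any of its ancestors in $T'$ is the head of an arc of $F$, which gives acyclicity and reachability) is correct, as is the leaf accounting in both directions.

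Two of your auxiliary claims are, however, false as stated and need one-line repairs. First, a private neighbour of $v$ need \emph{not} be a child of $v$ in every outbranching: with $r\to v$, $v\to x$, $v\to u$, $x\to u$, the vertex $u$ is a private neighbour of $v$, yet in the outbranching $r\to v\to x\to u$ its parent is $x$. What you actually need --- that $v$ is internal in every outbranching $T$ of $D$ --- is true for the simpler reason that any vertex $w$ not reachable from $r$ in $D-v$ has its tree path from $r$ in $T$ passing through $v$, so $v$ has a child in $T$. Second, if $F=\emptyset$ then $T'$ is \emph{not} an outbranching of $D$, since it does not span $v$. In fact this case cannot occur: if all arcs of $T'$ lay in $D$, then $T'$ would be contained in $D-v$ and would certify that every vertex of $D-v$ is reachable from $r$ there, contradicting that $v$ is a cut-vertex. (Alternatively, one could attach $v$ as a leaf below any in-neighbour without decreasing the leaf count.) With these two corrections your proof is complete.
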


Lemmas~\ref{lem:shortcut} and~\ref{lem:contract} imply the following

\begin{equation}
\label{eq:leaves-tilde-g}
 \ml(D)\ge\ml(\shortcutd).
\end{equation}

We also observe the following property.

\begin{lemma}
 \label{lem:shortcut-no-new-cut-edges}
 Suppose $D$ is a connected rooted digraph where every head of a cut-edge has indegree $1$. Let $u$ be a vertex and suppose that $r\notin N^-(u)$ and there is no vertex $v\in N^-(u)$ such that $u$ becomes disconnected from $r$ after removing $v$.
 Then after shortcutting $u$ no new cut-edges appear in $D$.
\end{lemma}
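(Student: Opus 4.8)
The plan is to argue by contradiction: suppose that after shortcutting $u$ some edge $(a,b)$ becomes a cut-edge in the resulting digraph $D'$ even though it was not a cut-edge in $D$ (or did not exist in $D$). First I would record the structural consequences of the hypotheses. Since $r\notin N^-(u)$ and no single in-neighbor of $u$ separates $u$ from $r$, Lemma~\ref{lem:twopaths} (applied after noting that $u$ is not the head of a cut-edge, because every head of a cut-edge has indegree $1$ while $u$ has at least two in-neighbors that each fail to disconnect it) gives two simple paths $P_1,P_2$ from $r$ to $u$ in $D$ entering $u$ through distinct edges $(p_1,u)$ and $(p_2,u)$. Moreover, since shortcutting replaces each path $(x,u,y)$ by the edge $(x,y)$, any path in $D'$ can be lifted to a walk in $D$ by reinserting $u$ in the middle of each shortcut edge it uses, and conversely any $r$--$v$ path in $D$ that does not pass through $u$ survives verbatim in $D'$.

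Next I would analyse the putative new cut-edge $(a,b)$ in $D'$. The head $b$ cannot be $u$ (as $u\notin V(D')$), so $b\in V(D)$, and $b$ has an incoming edge in $D'$ that is not a cut-edge — namely $(a,b)$ is \emph{new}, so if $(a,b)\in E(D)$ it was non-cut there, and if $(a,b)\notin E(D)$ it arose from a path $(a,u,b)$, in which case $(u,b)\in E(D)$ is an edge into $b$ coming from $u$, which is not the head of a cut-edge; in either case, using the standing property that all heads of cut-edges in $D$ have indegree $1$, I can conclude that \emph{no} edge into $b$ in $D$ is a cut-edge, and therefore (again by Lemma~\ref{lem:twopaths}, and observing this property is inherited in $D'$ as long as no new cut-edge targets $b$) there are two paths $Q_1,Q_2$ from $r$ to $b$ in $D$ entering $b$ via distinct edges. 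Lift these to $D'$: each $Q_i$ either avoids $u$, in which case it stays an $r$--$b$ path in $D'$, or it passes through $u$ as $(x_i,u,y_i)$ for unique $x_i,y_i$, in which case replacing this subpath by the shortcut edge $(x_i,y_i)$ yields an $r$--$b$ walk, hence an $r$--$b$ path, in $D'$.

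The remaining work is to show these two lifted paths can be chosen to enter $b$ through different edges, which contradicts $(a,b)$ being a cut-edge in $D'$. The only way lifting could collapse the distinction is if both $Q_1$ and $Q_2$ end with the edge into $u$, i.e. $y_1=y_2=b$ and $Q_1,Q_2$ both enter $b$ via the single edge $(u,b)$ after passing through $u$ — but then the two \emph{distinct} last edges of $Q_1,Q_2$ in $D$ were both $(u,b)$, impossible, so at most one of them ends this way; the other gives an $r$--$b$ path in $D'$ not using the edge $(a,b)$ unless it also ends with $(u,b)$-via-shortcut or literally with $(a,b)$ — and here I would use that $b$ has two $D$-paths entering through \emph{two} edges $\neq (a,b)$-image simultaneously, so after deleting $(a,b)$ from $D'$ at least one survives. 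The main obstacle I anticipate is this last bookkeeping: carefully tracking, over the few cases of whether $a$ or $b$ or neither equals $u$ or a neighbour of $u$, that the two independent witnesses to $b$'s reachability in $D$ are not both destroyed by the shortcut and by the deletion of $(a,b)$ simultaneously — essentially a case analysis parallel to the one in Lemma~\ref{lem:contract-no-new-cut-edges}, but with "contracting $(x,y)$" replaced by "shortcutting $u$". I would mirror that proof's Case~1/2/3 structure (according to which endpoints of $(a,b)$ are created or affected by the shortcut) and in each case invoke Rule~\ref{r:cutneighbor} or the hypothesis on $u$ to rule out the bad configuration.
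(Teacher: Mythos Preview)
Your overall strategy---contradiction, two-case split according to whether the putative new cut-edge $(a,b)$ was already in $E(D)$ or arose from shortcutting, and use of Lemma~\ref{lem:twopaths}---matches the paper's proof. However, two steps in your execution are genuinely wrong, and both concern exactly the places where the hypothesis on $u$ does the work.

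First, in the case $(a,b)\notin E(D)$ (so $(a,u),(u,b)\in E(D)$), you claim that ``no edge into $b$ in $D$ is a cut-edge'' and then apply Lemma~\ref{lem:twopaths} to $b$. This is not justified: nothing prevents $(u,b)$ from being the \emph{unique} in-edge of $b$, in which case $(u,b)$ is a cut-edge and Lemma~\ref{lem:twopaths} does not apply to $b$. The paper avoids this by applying Lemma~\ref{lem:twopaths} to $u$ (which you correctly set up at the start but then abandon), obtaining two paths $P_1,P_2$ from $r$ to $u$ ending in distinct edges $(p_1,u),(p_2,u)$; after shortcutting these become paths to $b$ ending in $(p_1,b),(p_2,b)$, unless one of the $P_i$ already passes through $b$, in which case its prefix reaches $b$ in $D'$ without using the new edge $(a,b)$ (since $(a,b)\notin E(D)$).

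Second, in the case $(a,b)\in E(D)$, you assert that the only way the two lifted paths $Q_1',Q_2'$ can enter $b$ through the same edge is if both $Q_1,Q_2$ ended with $(u,b)$, which you correctly rule out. But this misses the actual collapse: if $Q_1$ ends with $(w,u),(u,b)$ and $Q_2$ ends with $(w,b)$ for the same $w$, then after shortcutting both $Q_1'$ and $Q_2'$ end with $(w,b)$. This is precisely where the paper invokes the hypothesis on $u$: since removing $w\in N^-(u)$ does not disconnect $u$ from $r$, there is a path $Q$ from $r$ to $u$ avoiding $w$, and extending $Q$ by $(u,b)$ (or taking its prefix to $b$ if it visits $b$) yields, after shortcutting, a path in $D'$ entering $b$ from a vertex other than $w$. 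Note also that Rule~\ref{r:cutneighbor} is not among the hypotheses of this lemma and cannot be invoked; only the stated assumption on $u$ is available.
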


\begin{proof}
 Note that the assumption of the lemma implies that in $D$ there is no cut-edge that enters $u$, so we can apply Lemma~\ref{lem:twopaths} to $u$.
 Let $D$ and $D'$ denote the graph before and after shortcutting $u$.
 Assume that a new cut-edge $(x,y)$ appears in $D'$.
 
 \mycase{1} $(x,y)\in E(D)$ and $(x,y)$ is not a cut-edge in $D$. Since every head of a cut-edge has indegree $1$, we infer that no cut-edge enters $y$.
 By Lemma~\ref{lem:twopaths}, in $D$ there are two paths $P_1$ and $P_2$ from $r$ to $y$, ending with different edges $e_1$ and $e_2$.
 Let $P_1'$ and $P_2'$ be the paths obtained from $P_1$ and $P_2$ by shortcutting $u$.
 If $P_1'$ and $P_2'$ end with the same edges as $P_1$ and $P_2$, then $(x,y)$ is not a cut-edge in $D'$, a contradiction.
 Otherwise observe that exactly one of $P_1'$ and $P_2'$ has changed the last edge, because otherwise $e_1=e_2=(u,y)$.
 By symmetry assume $e_1=(u,y)$ and $e_2=(w,y)$, for some $w\ne u$. 
 Then $P_1'$ ends with $(w,y)$ and $(w,u)$ is the second last edge of $P_1$, or otherwise we are done.
 By the assumption of the lemma, removal of $w$ does not disconnect $u$ from $r$, so there is a path $Q$ from $r$ to $u$ that avoids $w$. If this path traverses $y$, then its prefix is a path from $r$ to $y$ in $D'$ that enters $y$ from a different vertex than $w$. Otherwise after prolonging $Q$ with $(u,y)$ and shortcutting $u$ we obtain a path from $r$ to $y$ in $D'$ that enters $u$ from a different vertex than $w$. In both cases we obtained two paths from $r$ to $y$ in $D'$ that end with different edges, which means that no edge incoming to $y$ can be a cut-edge. This is a contradiction with $(x,y)$ being a cut-edge.

 \mycase{2} $(x,y)\not\in E(D)$, i.e., $(x,y)$ is obtained by shortcutting $u$ and $(x,y)$ was not present in $D$.
 By Lemma~\ref{lem:twopaths}, in $D$ there are two simple paths $P_1$ and $P_2$ from $r$ to $u$, ending with different edges $(a,u)$ and $(b,u)$, for some $a\ne b$. If any of these paths traverses $y$, then some its prefix is a path in $D'$ from $r$ to $y$ that avoids the new edge $(x,y)$, due to $(x,y)$ being not present in $D$. This is a contradiction with $(x,y)$ being a cut-edge. Suppose then that neither $P_1$ nor $P_2$ traverses $y$; in particular $a\ne y$ and $b\ne y$. Then by replacing $(a,u)$ by $(a,y)$ and $(b,u)$ by $(b,y)$ we get two paths in $D'$ from $r$ to $y$ ending by different edges, so $(x,y)$ is not a cut-edge, a contradiction. 
\end{proof}

Let $S$ be the set comprising $r$ and all the special vertices of $\contractd$. Let us invoke Lemma~\ref{lem:decomposition} on the set $S$, and thus obtain a family of maximal weak bipaths $P_1,P_2,\ldots,P_q$ with properties as in this lemma.

Consider the process of creating $\shortcutd$. 
After contracting all lonely cut-edges corresponding to non-isolated bags, by Lemma~\ref{lem:contract-no-new-cut-edges}, no new cut-edges appear. We would like to derive the same conclusion for $\shortcutd$ as well, however we must be careful due to the non-trivial prerequisites of Lemma~\ref{lem:shortcut-no-new-cut-edges}.

\begin{lemma}
\label{lem:2-edge-conn}
If reduction rules do not apply to $D$ then every cut-edge in graph $\shortcutd$ is branching.
\end{lemma}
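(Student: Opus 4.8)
The plan is to analyze the two stages in the construction of $\shortcutd$ from $D$ separately, using the machinery already developed. Recall $\shortcutd$ is obtained by first contracting all lonely cut-edges that form a non-isolated bag, and then shortcutting every tail of an isolated bag. Let $D_1$ denote the intermediate graph after stage~(i). By Lemma~\ref{lem:contract-no-new-cut-edges} applied to the set $S'$ of those lonely cut-edges (a subset of all lonely cut-edges of $D$), the graph $D_1$ has no new cut-edge; in particular every cut-edge of $D_1$ is a cut-edge of $D$, and since in $D$ all cut-edges have heads of indegree~$1$, the same holds in $D_1$. Crucially, the cut-edges of $D$ that get contracted are exactly the lonely ones forming non-isolated bags, so the cut-edges surviving in $D_1$ are precisely the branching cut-edges of $D$ together with the lonely cut-edges forming isolated bags (the edges $(t_B,h_B)$ for isolated bags $B$). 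The branching ones are still branching in $D_1$; we will need to see what happens to the isolated ones in stage~(ii).

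Next I would handle stage~(ii): shortcutting the tails $t_B$ of isolated bags, one by one. The key point is that Lemma~\ref{lem:shortcut-no-new-cut-edges} applies to each such shortcut. Indeed, let $x = t_B$ for an isolated bag $B$. Since $B$ is non-special and $B$ is an internal bag of some maximal weak bipath $P_i$ (by Lemma~\ref{lem:decomposition}), the vertex $x$ has exactly two in-neighbors in $\contractd$, namely the tails of its neighboring bags, and these are also its out-neighbors; in particular $r\notin N^-(x)$ by Lemma~\ref{lem:root}, and removing either in-neighbor of $x$ does not disconnect $x$ from $r$ because the other in-neighbor still provides a route (the bipath structure plus $2$-edge-connectivity of the relevant part from Lemma~\ref{lem:D'-2-edge-conn}). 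So the hypotheses of Lemma~\ref{lem:shortcut-no-new-cut-edges} are met and no new cut-edge is created by shortcutting $x$. One must be slightly careful that these hypotheses persist through successive shortcuttings of distinct isolated tails; but shortcutting $x$ only deletes $x$ and adds edges between its former in- and out-neighbors, and since distinct isolated bags along a bipath are separated, shortcutting one isolated tail does not destroy the ``two internally disjoint routes'' property for another isolated tail, and the invariant ``heads of cut-edges have indegree~$1$'' is preserved because no new cut-edges appear.

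Finally I would observe that every isolated cut-edge $(t_B,h_B)$ of $D_1$ disappears in stage~(ii): its tail $t_B$ gets shortcut, so the edge is no longer present in $\shortcutd$. Hence the only cut-edges remaining in $\shortcutd$ are (descendants of) the branching cut-edges of $D$, possibly with their tails relocated by shortcutting; each such cut-edge had a sibling cut-edge with the same tail in $D$, and since both siblings have heads of indegree~$1$ they both survive contraction and shortcutting (their tails are never shortcut, as tails of branching cut-edges are not tails of isolated bags), so the sibling is still present in $\shortcutd$. Therefore every cut-edge of $\shortcutd$ is branching. The main obstacle I anticipate is the bookkeeping in stage~(ii): verifying carefully that the prerequisites of Lemma~\ref{lem:shortcut-no-new-cut-edges} genuinely hold for each isolated tail and remain valid as we iterate, and that no branching cut-edge's tail is ever accidentally shortcut (which is why the definition of ``isolated bag'' explicitly requires size~$2$, non-speciality, and no edge from $t_B$ to a special bag).
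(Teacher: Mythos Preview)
Your overall two-stage plan matches the paper's proof exactly: first invoke Lemma~\ref{lem:contract-no-new-cut-edges} for stage~(i), then apply Lemma~\ref{lem:shortcut-no-new-cut-edges} iteratively for stage~(ii), and conclude that only the original branching cut-edges of $D$ survive. Your final paragraph's observation about sibling cut-edges is also correct.

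However, the justification you give for the key prerequisite of Lemma~\ref{lem:shortcut-no-new-cut-edges} --- that no in-neighbor of $t_B$ is a cut-vertex separating $t_B$ from $r$ --- is where the real work lies, and your argument there has a genuine gap. You appeal to ``the bipath structure plus $2$-edge-connectivity of the relevant part from Lemma~\ref{lem:D'-2-edge-conn}'', but Lemma~\ref{lem:D'-2-edge-conn} is about $\contractd$, not about $D_1$ or the intermediate graphs during shortcutting, and in any case it concerns cut-\emph{edges}, not cut-\emph{vertices}. Knowing that $t_B$ has two in-neighbors $a,b$ and that neither $(a,t_B)$ nor $(b,t_B)$ is a cut-edge does not by itself give a path from $r$ to $t_B$ avoiding the \emph{vertex} $a$: the route via $b$ might well pass through $a$ on its way in from outside the bipath. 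Likewise, your claim that ``distinct isolated bags along a bipath are separated'' is not precise enough to guarantee the hypothesis persists through iterated shortcutting, since shortcutting one tail can change the in-neighbor set of an adjacent isolated tail.

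The paper fills this gap by explicitly constructing, for each isolated tail $t_v$ on a bipath $P_\alpha=v_1,\ldots,v_p$, two paths $M_1,M_2$ from $r$ to $t_v$. First it shows (using that the edge from $v_p$ into the bipath interior is not a cut-edge in $D'$, since such an edge was not a cut-edge in $D$) that there exist paths $Q_1$ from $r$ to $v_1$ and $Q_2$ from $r$ to $v_p$ both avoiding the entire bipath interior $W$. Then it takes $R_1$ from $v_1$ to $t_v$ and $R_2$ from $v_p$ to $t_v$ running through consecutive bags on either side; since $v_1\ne v_p$ (Lemma~\ref{lem:decomposition}(ii)), these suffixes are vertex-disjoint except at $t_v$. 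The concatenations $M_i=Q_i\cdot R_i$ witness the hypothesis in $D'$, and the paper then tracks how each shortcutting merely shortens $M_1,M_2$ while keeping the suffixes (beyond $v_1,v_p$) vertex-disjoint and keeping $v_1,v_p$ themselves intact, so the hypothesis persists throughout the iteration. This explicit path-tracking is precisely the ``bookkeeping'' you anticipated as the main obstacle but did not actually supply; without it, the argument is incomplete.
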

\begin{proof}
Let $D'$ be the graph after contracting the lonely cut-edges corresponding to non-isolated bags. As argued above, from Lemma~\ref{lem:contract-no-new-cut-edges} if follows that $D'$ has no new cut-edge, i.e., all cut-edges of $D'$ are either original branching cut-edges of $D$, or original lonely cut-edges of $D$ that correspond to isolated bags.

In $\contractd$, every isolated vertex is some internal vertex on one of the bipaths $P_i$. We can view the construction of $\shortcutd$ from $D'$ as follows: We iterate through the bipaths $P_1,P_2,\ldots,P_q$ one by one. For each of them, we iterate through the internal vertices $w$ of the bipath from left to right, and in $D'$ we shortcut the tail of the bag corresponding to $w$ provided this bag is isolated. We prove now that during this process we maintain the following invariant: 
\begin{itemize}
\item[(1)] No new cut-edge has been created, and in particular all the heads of cut-edges in the current digraph have indegree $1$.
\item[(2)] For every $v\in \contractd$ that is an isolated vertex on some weak bipath, and $t_v$ is the tail of its bag, the following holds: as long as $t_v$ is not yet shortcutted, in $D'$ there is no inneighbor of $t_v$ which is a cut-vertex whose removal disconnects $t_v$ from the root.
\end{itemize}

We now show that invariant (2) holds for every such $v$ throughout the process, up to the point when $t_v$ is shortcutted. Let us fix $v$, and suppose $v=v_i$ lies on a maximal weak bipath $P_\alpha=v_1,v_2,\ldots,v_p$ in $\contractd$, for some $\alpha\in \{1,2,\ldots,q\}$. For simplicity, denote $B_j=B_{v_j}$. By Lemmas~\ref{lem:root} and~\ref{lem:decomposition}, $v_1,v_p\in S$, $r\notin\{v_1,v_p\}$, and $v_1\neq v_p$. Note that vertices $v_1,v_p$ are already present in $D'$. Let $W$ be the set of (a) all vertices of $D'$ that are contained in bags $B_i$, for $i=2,\ldots,p-1$, and (b) all vertices $v_i$, for $i=1,2,\ldots,p$, for which $v_i\in D'$.


We now claim that in $D'$ there is a path $Q_1$ from $r$ to $v_1$ that avoids the vertices of $W$. Indeed, if $v_1$ was disconnected from $r$ in $D'-W$, then any path from $r$ to $v_1$ would need to use the unique edge from $v_p$ to $B_{p-1}$ (or $v_{p-1}$), so this edge would be a cut-edge in $D'$. This is a contradiction, because this edge was not a cut-edge in $D$, since cut-edges of $D$ not residing in one bag must be branching and the head of each branching cut-edge in $D$ is special in $\contractd$. Similarly, there is a path $Q_2$ from $r$ to $v_p$ that avoids $W$.

From Lemmas~\ref{lem:no-edge-to-head} and~\ref{lem:one-edge} it follows that in $D'$ there is a path $R_1$ from $v_1$ to $t_v$ that traverses consecutive bags $B_1,B_2,\ldots,B_{i-1},B_i$ (possibly contracted when constructing $D'$), and in each it visits either only the tail, or first the tail and then the head. Similarly, there is a path $R_2$ from $v_p$ to $t_v$ that traverses consecutive bags $B_p,B_{p-1},\ldots,B_{i+1},B_i$ (possibly contracted when constructing $D'$), and in each it visits either only the tail, or first the tail and then the head. In particular, since $v_1\neq v_p$, we have that $R_1$ and $R_2$ are vertex-disjoint apart from the last vertex $t_v$. 

Let $M_1$ be the concatenation of $Q_1$ and $R_1$, and similarly define $M_2$. We now examine what happens with paths $M_1$ and $M_2$ during the process of obtaining $\shortcutd$ from $D'$. Every shortcutting of a vertex gives rise to a natural transformation of simple paths in $D'$, where the traversal of the shortcutted vertex is replaced by the usage of a newly introduced edge. Observe that the prefix $Q_1$ can only get shortcutted during the process, and similarly holds for the prefix $Q_2$. However, $v_1$ and $v_p$ are not being shortcutted. Finally, the internal vertices of both $R_1$ and $R_2$ also can get shortcutted, but we maintain the invariant that these suffixes remain vertex-disjoint.

Concluding, during the process of obtaining $\shortcutd$ from $D'$, $M_1$ and $M_2$ are always two paths from $r$ to $t_v$, and their suffixes beginning from $v_1$ and $v_p$ are always vertex-disjoint apart from the last vertex. Moreover, the vertices appearing before $v_1$ on $M_1$ cannot become the inneighbors of $t_v$ during the shortcutting process due to not belonging to $W\cup \{v_1,v_p\}$, and the symmetrical claim holds for $M_2$ as well. We conclude that at any moment of the process, the removal of any inneighbor of $t_v$ cannot affect both paths $M_1$ and $M_2$ at the same time. 

Hence invariant (2) holds throughout the process. Invariants (1) and (2) are exactly the prerequisites of Lemma~\ref{lem:shortcut-no-new-cut-edges} when applied to shortcutting $t_v$. Hence, by iteratively applying Lemma~\ref{lem:shortcut-no-new-cut-edges} we conclude that no new cut-edge appears in $\shortcutd$, and in particular every application shows that invariant (1) is maintained in the next step. Therefore, the cut-edges of $\shortcutd$ are simply the branching cut-edges of the original digraph $D$.
\end{proof}


Motivated by Lemma~\ref{lem:2-edge-conn} and Equation~\eqref{eq:leaves-tilde-g} we are going to show that if there are many isolated vertices in $\contractd$, then 
there are many special vertices in $\shortcutd$, which, together with Theorem~\ref{th:2edgeconSpec}, implies the desired lower bound. Note that every non-special (in particular, every isolated) vertex in $\contractd$ is an internal vertex of some weak bipath $P_i$, and hence a non-special bag is linked to exactly two other bags --- neighbors on the bipath.

\begin{lemma}\label{lem:removable-bags}
Assume reduction rules do not apply to $D$.
Suppose bag $A$ is isolated.
Then $h_A$ is special in $\shortcutd$ or there is a non-special bag $B$ linked to $A$ such that $h_B$, or $v_B$ if $B$ gets contracted, is special in $\shortcutd$ .
\end{lemma}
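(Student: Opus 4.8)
Let me unpack the definitions to see what needs to be shown. Bag $A$ is isolated, so it has size $2$, i.e. $(t_A,h_A)$ is a lonely cut-edge of $D$, and there is no edge from $t_A$ to a special bag. Since $A$ is non-special, $v_A$ is an internal vertex of some maximal weak bipath $P_\alpha = v_1,\dots,v_p$; say $v_A = v_i$ with $2 \le i \le p-1$. When we construct $\shortcutd$ we contract the non-isolated lonely cut-edges and then shortcut the tail of every isolated bag. So $t_A$ itself gets shortcut in $\shortcutd$; the vertex $h_A$ survives (it is the head of a cut-edge, indegree $1$). After shortcutting $t_A$, every in-neighbor of $t_A$ gets an edge to every out-neighbor of $t_A$, in particular to $h_A$. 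The two neighboring bags of $A$ on the bipath are $B_{v_{i-1}}$ and $B_{v_{i+1}}$; by Lemmas~\ref{lem:no-edge-to-head} and~\ref{lem:one-edge}, one of them (say $B := B_{v_{i-1}}$, the ``backward'' neighbor, oriented toward $v_1$) sends an edge into $t_A$, and in $\contractd$ we also have $(v_A, v_{i-1}) \in E(\contractd)$, so $A$ sends an edge to $B$. The goal: show $h_A$ is special in $\shortcutd$, or else $B$ witnesses specialness as in the statement.

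**Key steps.** First I would check: if $B$ gets contracted in the construction of $\shortcutd$ (i.e.\ $B$ is non-isolated so its defining cut-edge is contracted), look at $v_B$; if $B$ stays (i.e.\ $B$ is also isolated, so $t_B$ gets shortcut), look at $h_B$. In either case call this vertex $w$. After $t_A$ is shortcut, $w$ acquires an edge $(w, h_A)$ (since $w$, or a vertex it was contracted with, is an in-neighbor of $t_A$, and $h_A \in N^+(t_A)$). Now I would argue about the edge $(w, h_A)$ in $\shortcutd$: is $(h_A, w) \in E(\shortcutd)$? If not, then $(w,h_A)$ is a simple in-edge of $h_A$, making $h_A$ special — done. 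If yes, then $h_A$ has in-degree at least $2$ in $\shortcutd$ unless its unique in-edge comes from $w$, but then we'd need to see whether $h_A$ even stays... wait, $h_A$ always stays. Let me reconsider: the cleaner route is to track the original cut-edge $(t_A, h_A)$, which by Lemma~\ref{lem:2-edge-conn} remains a branching cut-edge in $\shortcutd$. So in $\shortcutd$ there is another cut-edge with tail $t_A$ — but $t_A$ is shortcut, contradiction! So actually $(t_A,h_A)$ does \emph{not} survive as an edge of $\shortcutd$; instead shortcutting $t_A$ replaces it by edges $(z, h_A)$ for each in-neighbor $z$ of $t_A$. The point is that $h_A$ still has in-degree equal to the in-degree $t_A$ had, which is $\ge 2$ (since $t_A$ is a cut-vertex with a branching cut-edge out of it... hmm, no — I need $t_A$'s \emph{in}-degree). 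Let me instead directly argue: by Lemma~\ref{lem:twopaths}-style reasoning, before shortcutting $t_A$ there are two in-neighbors of $t_A$ from which $t_A$ is reachable from $r$, so $h_A$ ends up with two distinct in-neighbors $z_1, z_2$ in $\shortcutd$; at least one of them, say $z_1$, is not $v_1, v_p$-blocked and comes from the ``backward'' side, hence is (a vertex contracted into, or equal to) $w$. If $(h_A, z_1) \notin E(\shortcutd)$ we get a simple in-edge of $h_A$, so $h_A$ is special. Otherwise $h_A$ has in-degree $\ge 2$ unless $z_2$ also satisfies $(h_A,z_2)\in E$, in which case $h_A$ is special by the in-degree $\ge 3$ or simple-edge criterion — wait, in-degree $2$ alone isn't enough. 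So the real content is showing one of these two in-edges of $h_A$ is simple, \emph{or} transferring specialness to $w = h_B$ / $v_B$.

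**The main obstacle.** The delicate part is the case where $h_A$ is \emph{not} special in $\shortcutd$ — then I must produce the claimed $w$ (namely $h_B$, or $v_B$) and show \emph{it} is special. This will require understanding what happens on the backward side: the edge from $B$ (from $w$) into $t_A$, after shortcutting $t_A$, becomes $(w, h_A)$, and I want to show $w$ now has in-degree $\ge 3$ or a simple in-edge in $\shortcutd$. The engine here is that the forward direction along the bipath, together with the branching cut-edge $(t_A, h_A)$, channels all paths reaching the later bags through $t_A$, but the definition of ``isolated'' (no edge from $t_A$ to a special bag) constrains where $h_A$'s out-edges go, and the non-specialness of $h_A$ forces $(h_A, w) \in E$, which then makes $w$ pick up an extra in-neighbor once $t_A$ is shortcut, or creates a simple in-edge at $w$. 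I expect the bulk of the work — and the place where the ``no edge from $t_A$ to a special bag'' hypothesis is genuinely used — to be this transfer argument, carefully casing on whether $B$ is contracted and on the orientation of the edges between $A$ and $B$, invoking Lemmas~\ref{lem:no-edge-to-head}, \ref{lem:one-edge}, \ref{lem:bag-size-2}, and the inapplicability of Rules~\ref{r:cutneighbor} and~\ref{r:cut-double-edge} to rule out the parallel/anti-parallel edges that would otherwise block the conclusion.
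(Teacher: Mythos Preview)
Your proposal does not reach a proof, and the strategy as laid out has a structural gap. You pick $B$ as ``the backward neighbor that sends an edge into $t_A$,'' but by Lemmas~\ref{lem:no-edge-to-head} and~\ref{lem:one-edge} \emph{both} neighboring bags on the bipath send an edge into $t_A$ (that is what it means for $v_A$ to be an internal vertex of a weak bipath in $\contractd$). So your choice of $B$ carries no information. The paper instead selects $B$ using the \emph{out}-edges of $t_A$: since Rule~\ref{r:cut-edge} is inapplicable, $\deg^+(t_A)\ge 2$; one out-edge is $(t_A,h_A)$, and by isolation the other must go to a neighboring (hence non-special) bag $B$. This choice is what makes the argument work: the edge $(t_A,t_B)$ together with Rule~\ref{r:two-cut-edges} forces the back-edge from $B$ to $A$ to be $(h_B,t_A)$ (not $(t_B,t_A)$) when $|B|=2$, and this asymmetry is precisely what later produces a simple in-edge at $h_B$ in $\shortcutd$.

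You also correctly sense that ``the bulk of the work'' is the transfer case where $h_A$ is not special, but you leave it as a black box. In the paper this is handled by a chain argument you do not anticipate: one walks backward along the bipath through a maximal run of isolated bags $B_{t+1},\dots,B_a$ all satisfying $(t_{B_j},t_{B_{j+1}})\in E(D)$, and at the boundary bag $B_t$ (either non-isolated, or with the edge into $B_{t+1}$ leaving from $h_{B_t}$) one obtains in $\shortcutd$ an in-edge to $h_B$ whose reverse can be ruled out, making $h_B$ special; if that reverse edge does exist, a short local analysis using Rules~\ref{r:two-cut-edges} and Lemma~\ref{lem:one-edge} forces $h_A$ to be special instead. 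Your in-degree-counting idea for $h_A$ stalls exactly where you noticed (in-degree $2$ is not enough), and nothing in your outline substitutes for this chain/boundary analysis. Finally, the case $|B|=1$ is genuinely different (there the edges between $A$ and $B$ are $(t_A,t_B)$ and $(t_B,t_A)$, and one must pass to the \emph{other} neighbor $C=B_{a-1}$ to finish), and your sketch does not distinguish it.
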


\begin{proof}
 Since $A$ is isolated, there is some bipath $P_i=v_1,v_2,\ldots,v_p$ such that $v_A=v_a$ for some $2\leq a\leq p-1$. Denote $B_i=B_{v_i}$. Since Rule~\ref{r:cut-edge} does not apply, we infer that $d^+(t_A) \ge 2$. One of these edges goes to $h_A$, whereas the second needs to go to one of the two neighboring bags on $P$, because $A$ is isolated. By symmetry, suppose that there is an edge from $t_A$ to $B=B_{a+1}$. Of course, $B$ is linked to $A$ and $B$ is not special, because there is an edge from $t_A$ to $B$ and $A$ is isolated. We consider two cases regarding the size of $B$.
 
 \mycase{1} $|B|=2$. We will show that at least one of $h_A$, $h_B$ is special in $\shortcutd$ .
 By Lemma~\ref{lem:no-edge-to-head}, the edge from $A$ to $B$ is $(t_A,t_B)$.
 Then by Rule~\ref{r:two-cut-edges}, $(t_B,t_A) \notin E(D)$. 
 By Lemma~\ref{lem:no-edge-to-head} it follows that $(h_B,t_A)\in E(D)$.
 By Lemma~\ref{lem:one-edge}, $(t_A,t_B)$ and $(h_B,t_A)$ are the only edges between $A$ and $B$.
 
 Let $t$ be the minimum index $i<a$ such that $B_{i+1}, \ldots, B_a$ are all isolated and there is an edge from $t_{B_j}$ to $t_{B_{j+1}}$ for each $i<j<a$.
 By the minimality of $t$ and Lemma~\ref{lem:one-edge}, it follows that either $B_t$ is not isolated or there is an edge from $h_{B_t}$ to $t_{B_{t+1}}$.
 In either case, $\shortcutd$ has an edge $e$ incoming to $h_B$ (or $v_B$, if $B$ gets contracted) from a vertex corresponding to bag $B_t$ (i.e., either from $v_{B_t}$ or $h_{B_t}$).
 If in $\shortcutd$ there is no edge from $h_B$ (or $v_B$) to a vertex that corresponds to $B_t$, then $h_B$ ($v_B$) is special in $\shortcutd$, and we are done.
 So assume that there is such an edge.
 It means that in $D$ there must be an edge from $t_A$ to $t_{B_{a-1}}$. Then $t=a-1$, because otherwise Rule~\ref{r:two-cut-edges} would apply.
 By Lemma~\ref{lem:one-edge} in $D$ there is no edge from $h_A$ to $B_{a-1}$. We argued earlier that there is also no edge in $D$ from $h_A$ to $B_{a+1}=B$.
 It follows that $(h_A,h_B)\notin E(\shortcutd)$. However, after shortcutting $A$ we get $(h_B,h_A)\in E(\shortcutd)$.
 Hence, $h_A$ is special in $\shortcutd$.
 
 \mycase{2} $|B|=1$.
 By Lemmas~\ref{lem:no-edge-to-head} and \ref{lem:one-edge}, the only edges between $A$ and $B$ are $(t_A,t_B)$ and $(t_B,t_A)$.
 Then $\shortcutd$ contains edge $(t_B,h_A)$.
 If $(h_A,t_B) \notin\shortcutd$, then $h_A$ is special in $\shortcutd$ and we are done.
 Otherwise, denoting $C=B_{a-1}$, it must hold that $C$ is isolated (so in particular non-special) and there must be edges $(h_A,t_{C}),(t_{C},t_A)$ in $D$.
 Then by the same argument as in Case 1 (with $C$ playing the role of $A$ and $A$ playing the role of $B$), $h_A$ or $h_C$ is special in $\shortcutd$.
 This ends the proof.
\end{proof}

\begin{lemma}
 \label{lem:lower-bound-isolated}
 If reduction rules do not apply to $D$ then $\ml(D)\ge\tfrac{|\iso(\contractd)|}{180}$.
\end{lemma}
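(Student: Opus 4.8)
The plan is to chain together the structural results of this section. First I would pass from $D$ to $\shortcutd$: by Equation~\eqref{eq:leaves-tilde-g} we have $\ml(D)\ge\ml(\shortcutd)$, so it suffices to prove $\ml(\shortcutd)\ge|\iso(\contractd)|/180$. I would check that $\shortcutd$ is a connected rooted digraph: connectivity is preserved both by the contractions (Lemma~\ref{lem:contract}) and by the shortcuttings (Lemma~\ref{lem:shortcut}), and the root $r$ is never touched, since by Lemma~\ref{lem:root} all edges leaving $r$ are branching cut-edges, so $r$ is neither an endpoint of a lonely cut-edge nor the tail of an isolated bag. As by Lemma~\ref{lem:2-edge-conn} every cut-edge of $\shortcutd$ is branching, Theorem~\ref{th:2edgeconSpec} applies to $\shortcutd$ and yields $\ml(\shortcutd)\ge|\sp(\shortcutd)|/60$.

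It then remains to show $|\sp(\shortcutd)|\ge|\iso(\contractd)|/3$, which combined with the previous inequality finishes the proof. Recall that $|\iso(\contractd)|$ is exactly the number of isolated bags, and that every vertex of $\shortcutd$ corresponds to a unique bag of $\contractd$: the vertices of $\shortcutd$ are the contracted vertices $v_B$ of non-isolated size-$2$ bags, the heads $h_A$ of isolated bags (whose tails have been shortcutted away), and the vertices of singleton bags; write $\beta(s)$ for the bag corresponding to a vertex $s$. For each isolated bag $A$, Lemma~\ref{lem:removable-bags} hands us a vertex $\phi(A)\in\sp(\shortcutd)$: either $h_A$ itself, or the vertex ($h_B$, or $v_B$ if $B$ is contracted) corresponding to some non-special bag $B$ of $\contractd$ linked to $A$. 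This defines a map $\phi$ from the isolated bags to $\sp(\shortcutd)$.

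Next I would bound the fibers of $\phi$. Fix $s\in\sp(\shortcutd)$ and an isolated bag $A$ with $\phi(A)=s$. If $\phi(A)=h_A$ then $A=\beta(s)$; otherwise $\phi(A)$ is the vertex of a non-special bag $B$ linked to $A$, so $B=\beta(s)$ and $A$ is linked to $\beta(s)$. In either case $\beta(s)$ is a non-special bag (isolated bags are non-special by definition), so by Lemma~\ref{lem:decomposition} and the remark following it, $\beta(s)$ is an internal vertex of a maximal weak bipath and is therefore linked to exactly two bags. Hence $\phi^{-1}(s)\subseteq\{\beta(s)\}\cup\{A:A\text{ is linked to }\beta(s)\}$, a set of size at most $1+2=3$, so $|\iso(\contractd)|\le 3\,|\sp(\shortcutd)|$. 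Putting everything together gives $\ml(D)\ge\ml(\shortcutd)\ge|\sp(\shortcutd)|/60\ge|\iso(\contractd)|/180$.

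I expect the only delicate point to be the bookkeeping around $\phi$: making sure the vertex produced by Lemma~\ref{lem:removable-bags} always corresponds to a bag of $\contractd$ that is non-special (so that the ``linked to exactly two bags'' argument is available), and that an isolated bag whose own head is special in $\shortcutd$ is not double-counted when it simultaneously serves as the witness bag $B$ for a neighbor on its weak bipath. Everything else is a routine assembly of the already-established lemmas, together with the observation that shortcuttings and contractions keep $r$ as a root of in-degree $0$.
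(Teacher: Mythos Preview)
Your proposal is correct and follows essentially the same route as the paper: pass to $\shortcutd$ via \eqref{eq:leaves-tilde-g}, apply Theorem~\ref{th:2edgeconSpec} using Lemma~\ref{lem:2-edge-conn}, and then use Lemma~\ref{lem:removable-bags} to show $|\sp(\shortcutd)|\ge|\iso(\contractd)|/3$ by arguing that each witness vertex corresponds to a non-special bag, which is linked to at most two other bags. Your bookkeeping with the maps $\phi$ and $\beta$ makes the fiber bound more explicit than the paper's one-line version, and your verification that $\shortcutd$ is a connected rooted digraph is a detail the paper leaves implicit; the concern you flag about double-counting is already absorbed by the inclusion $\phi^{-1}(s)\subseteq\{\beta(s)\}\cup\{A:A\text{ linked to }\beta(s)\}$.
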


\begin{proof}
 By Lemmas~\ref{lem:contract} and~\ref{lem:shortcut}, $\ml(D)\ge\ml(\shortcutd)$.
 By Lemma~\ref{lem:2-edge-conn} and Theorem~\ref{th:2edgeconSpec} we get $\ml(\shortcutd)\ge\frac{|\sp(\shortcutd)|}{60}$.
 By Lemma~\ref{lem:removable-bags}, to every isolated bag $A$ we can assign a non-special bag $B$, such that $h_B$ is special in $\shortcutd$ and either $B=A$ or $B$ is linked to $A$. 
 By the definition, there are at most two bags linked to a non-special bag (corresponding to the neighbors of $v_B$ on a weak bipath in $\contractd$). 
 It follows that $|\sp(\shortcutd)|\ge\frac{|\iso(\contractd)|}3$.
 Together with the previous inequalities this implies $\ml(D)\ge\tfrac{|\iso(\contractd)|}{180}$.
\end{proof}

We will say that a vertex $v$ of $\contractd$ is {\em easy} when $v=r$, or $v$ is special, or $v$ is isolated in $\contractd$.
A vertex that is not easy is called {\em hard}. We now invoke once more Lemma~\ref{lem:decomposition}, but this time instead of $S$ we take the set of all the easy vertices. Every maximal bipath obtained in this decomposition will be called a {\em{maximal hard bipath}}. In other words, a weak bipath in $\contractd$ is {\em hard} if all its internal vertices are hard.
The sets of all easy and hard vertices in $\contractd$ are denoted by $\ea(\contractd)$ and $\hd(\contractd)$, respectively.
For any maximal hard bipath $P'$ in $\contractd$ we define $O(P') = N^+_{\contractd}(V(P')\setminus\{u,v\})$, where $u$ and $v$ are the extremities of $P'$.

\heading{A bound on slaves} 
For every pair of easy vertices $u,v\in\ea(\contractd)$ and a subset $S\subseteq V(\contractd)$ with $\{u,v\}\subseteq S$, if there is a hard bipath $P'$ between $u$ and $v$ such that $O(P')=S$,
we choose arbitrarily two such paths (or one, if only one exists) and we call them {\em masters}, while all the remaining hard bipaths $P''$ between $u$ and $v$ with $O(P'')=S$ are called {\em slaves} of respective masters, or just {\em{slaves}}.
The number of all slaves in $\contractd$ is denoted by $\sl(\contractd)$.

\begin{lemma}
 \label{lem:slaves}
$\ml(D) \ge \sl(\contractd)$.
\end{lemma}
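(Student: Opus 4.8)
By Corollary~\ref{cor:leaves-D'} it suffices to show $\ml(\contractd)\ge\sl(\contractd)$. Write $N=\sl(\contractd)$ and let $Q_1,\dots,Q_N$ be the slaves of $\contractd$; by Lemma~\ref{lem:decomposition}(i) their interiors are pairwise disjoint subsets of $\hd(\contractd)$. The plan is to build \emph{one} outbranching $T^{*}$ of $\contractd$ in which, for every slave $Q_i$, some internal vertex of $Q_i$ is a leaf. As the interiors of the slaves are disjoint, these $N$ leaves are then distinct, so $\ml(\contractd)\ge N$, as required.

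First I would contract the interior of every maximal hard bipath: let $\widehat{D}$ be obtained from $\contractd$ by replacing the interior of each maximal hard bipath $P$ with a single vertex $p_P$ (easy vertices are left untouched). By Lemma~\ref{lem:decomposition} and the definition of a weak bipath, each $p_P$ satisfies $N^-_{\widehat{D}}(p_P)=\{u,v\}$, where $u,v$ are the two (distinct, easy) extremities of $P$, and $N^+_{\widehat{D}}(p_P)=O(P)$. Recall that every slave $Q$ comes equipped with a master $Q'$: a distinct maximal hard bipath with the same pair of extremities $\{u,v\}$ and with $O(Q')=O(Q)$.

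The heart of the argument is the following. Let $G$ be $\widehat{D}$ with all out-edges of slave-vertices deleted (all in-edges are kept). I claim $r$ still reaches every vertex of $G$. Take a walk from $r$ in $\widehat{D}$; each time it traverses a deleted edge, it does so in a fragment $x\to p_Q\to y$ with $Q$ a slave, and then $x\in\{u,v\}$ (being an in-neighbour of $p_Q$) and $y\in O(Q)$. With $Q'$ the master of $Q$ we have $x\in\{u,v\}=N^-_{\widehat{D}}(p_{Q'})$ and $y\in O(Q)=O(Q')=N^+_{\widehat{D}}(p_{Q'})$, and since master-vertices retain their out-edges in $G$, we may replace the fragment by $x\to p_{Q'}\to y$, which lives in $G$. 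Doing this everywhere shows that everything reachable from $r$ in $\widehat{D}$ is reachable from $r$ in $G$; since $\contractd$ is connected, so is $\widehat{D}$, hence so is $G$. Fix an outbranching $T^{\circ}$ of $G$ rooted at $r$. Every slave-vertex is a leaf of $T^{\circ}$, and---because $N^-_{\widehat{D}}(p_P)\subseteq\ea(\contractd)$---the $T^{\circ}$-parent of every bipath-vertex is an easy vertex. This rerouting step is where the definition of a slave is used essentially, and I expect it to be the main obstacle; the rest is bookkeeping.

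Finally I would lift $T^{\circ}$ back to $\contractd$. Keep every edge of $T^{\circ}$ between two easy vertices. For a maximal hard bipath $P$ with extremities $u,v$, say $\mathrm{parent}_{T^{\circ}}(p_P)=u$ (the other case is symmetric); list the interior of $P$ as $a_1,\dots,a_s$ with $(u,a_1)\in E(\contractd)$, put the path $u\to a_1\to\cdots\to a_s$ into $T^{*}$, and, if $P$ is not a slave, re-attach each $T^{\circ}$-child $c$ of $p_P$ as a child of some $a_{i(c)}$ with $(a_{i(c)},c)\in E(\contractd)$ (possible since $c\in O(P)$). To see that $T^{*}$ is an outbranching of $\contractd$, consider the projection $\phi$ that collapses each bipath interior back to $p_P$: following $T^{*}$-parent pointers, the $T^{\circ}$-depth of the $\phi$-image never increases and drops strictly whenever one exits a bipath interior, so $T^{*}$ is acyclic, and it clearly spans $V(\contractd)$. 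For each slave $P$, the final vertex $a_s$ of its lifted path gets no child in $T^{*}$, hence is a leaf; these $N$ leaves sit in the pairwise disjoint slave interiors and are therefore distinct. Thus $\ml(\contractd)\ge N=\sl(\contractd)$, and with Corollary~\ref{cor:leaves-D'} we conclude $\ml(D)\ge\ml(\contractd)\ge\sl(\contractd)$.
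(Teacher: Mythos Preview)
Your proof is correct and takes a genuinely different route from the paper's. The paper proves a stronger, \emph{universal} statement: every outbranching of $\contractd$ already has at least $\sl(\contractd)$ leaves. It does so by a short degree-counting argument: for each equivalence class with extremities $v_1,v_\ell$, the two masters together with all slaves force $\deg_T^+(v_1)+\deg_T^+(v_\ell)$ to be at least the number of slaves in that class plus~$2$; summing over classes and using $\#\text{leaves}=1+\sum_u\max(\deg_T^+(u)-1,0)$ gives the bound. This is where having \emph{two} masters per class is essential, since the ``$+2$'' cancels the ``$-1$'' contributed by each of the two extremities.

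Your argument is instead \emph{existential} and constructive: you contract bipath interiors, delete out-edges of slave vertices, observe that any walk through a slave can be rerouted through its master (because $N^-(p_Q)=N^-(p_{Q'})$ and $O(Q)=O(Q')$), take an outbranching in which all slave vertices are leaves, and lift it back carefully. Your acyclicity check via the depth of the projection $\phi$ is clean and works. Two remarks: (i) your proof uses only \emph{one} master per class, so it actually shows the lemma would remain true with a more parsimonious definition of slaves; (ii) what you lose compared to the paper is the stronger ``every outbranching'' statement, though the lemma as stated does not need it. Both approaches are valid; the paper's is shorter and yields the universal bound, while yours makes the role of the master (as a reroute target) completely explicit.
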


\begin{proof}
By Lemma~\ref{lem:contract} it suffices to show that $\ml(\contractd) \ge \sl(\contractd)$.
We will show that in fact every outbranching $T$ of $\contractd$ has at least $\sl(\contractd)$ leaves.

Fix an arbitrary outbranching $T$ of $\contractd$.
It is easy to see that in any outbranching $T$, the number of leaves is equal to $1+\sum_{u\in V(T)} \max(\deg^+_T(u)-1,0)$.
Consider a slave $Z=v_1,\ldots,v_{\ell}$ with $O(Z)=S$ and extremities $v_1,v_\ell\in S$, and let $M_1,M_2$ be its masters. Then either $(v_1,v_2)\in E(T)$, or $(v_{\ell},v_{\ell-1})\in E(T)$. Let slave $Z$ charge vertex $v_1$ in the former case, and charge vertex $v_\ell$ in the latter case. Also on $M_1$ and $M_2$ at least one edge outgoing from $v_1$ and one edge outgoing from $v_\ell$ is present in $T$.  We conclude that the total contribution to the outdegrees in $T$ of $v_1$ and $v_\ell$ from $M_1,M_2$ and their slaves is at least the number of times $v_1$ and $v_\ell$ are charged by the slaves of $M_1,M_2$, plus $2$ for $M_1$ and $M_2$.

Let $X\subseteq \ea(\contractd)$ be the set of easy vertices that are the extremities of some slave. Then $1+\sum_{u\in V(T)} \max(\deg^+_T(u)-1,0)\geq (\sum_{u\in X} \deg^+_T(u))-|X|$. On the other hand, from what we argued in the previous paragraph it follows that $\sum_{u\in X} \deg^+_T(u)\geq \sl(\contractd)+2|F|$, where $F$ is the set of equivalence classes of slaves partitioned according to their masters. However, since every bipath has two extremities, it follows that $|X|\leq 2|F|$. Hence $\sum_{u\in X} \deg^+_T(u)-|X|\geq \sl(\contractd)$ and $T$ has at least $\sl(\contractd)$ leaves.

\end{proof}

\subsection{The size bound}
\label{sec:analysis}

In this section we prove the following theorem which imply the correctness of Rule~\ref{r:accept}.

\begin{theorem}
\label{thm:size-bound}
Let $H$ be a graph. Let $D$ be an $H$-minor-free digraph such that rules 1--\ref{r:cut-double-edge} do not apply. If $\ml(D)< k$, then $|V(D)|= 2^{O(|H|\sqrt{\log|H|})}k$.
\end{theorem}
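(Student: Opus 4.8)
The plan is to bound $|V(D)|$ by first passing to the contracted graph $\contractd$, where $|V(D)|\le 2|V(\contractd)|$ by Lemma~\ref{lem:bag-size-2}, and then to bound $|V(\contractd)|$ using the three lower bounds on $\ml(D)$ already established together with a sparsity (neighborhood-diversity) argument. Concretely, recall that $V(\contractd)$ splits into easy vertices $\ea(\contractd)$ --- the root, the special vertices, and the isolated vertices --- and hard vertices $\hd(\contractd)$. By Lemma~\ref{lem:lower-bound-special} and Lemma~\ref{lem:lower-bound-isolated}, if $\ml(D)<k$ then $|\sp(\contractd)|=O(k)$ and $|\iso(\contractd)|=O(k)$, so $|\ea(\contractd)|=O(k)$; here the constants are of the form $2^{O(|H|\sqrt{\log|H|})}$ only through the use of $H$-minor-freeness in the last phase (for the special/isolated bounds the constants are absolute). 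It remains to bound $|\hd(\contractd)|$, i.e., the total length of the maximal hard bipaths, by $O(k)$ times $|\ea(\contractd)|$ (with an $H$-dependent constant), and this is where sparsity enters.

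The key step is to form an auxiliary bipartite-like graph recording, for each maximal hard bipath $P'$, its interface to the easy vertices. Let $S$ be the set of easy vertices and let $\Pi$ be the collection of maximal hard bipaths obtained from Lemma~\ref{lem:decomposition} applied to $S$; their internal vertices partition $V(\contractd)\setminus S$. Contract the interior of each $P'\in\Pi$ to a single vertex $p_{P'}$, obtaining a graph $G$ on vertex set $S\cup\{p_{P'}:P'\in\Pi\}$ which, being a minor of $\contractd$, is still $H$-minor-free. The neighborhood $N_G(p_{P'})\cap S$ contains the two extremities of $P'$ together with $O(P')=N^+_{\contractd}(\text{interior of }P')\subseteq S$. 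Apply Proposition~\ref{prop:bipartite-general} / Corollary~\ref{cor:boundedsmallneighborhood} with $X=S$: there are only $O(|S|)=O(k)$ bipaths $P'$ with $|N_G(p_{P'})\cap S|\ge 2p$ for $p=\lceil\grad_1(G)\rceil$, and the bipaths with $|N_G(p_{P'})\cap S|<2p$ fall into at most $c_H\cdot|S|=2^{O(|H|\sqrt{\log|H|})}\cdot O(k)$ classes with identical neighborhoods in $S$. Two hard bipaths with the same pair of extremities $u,v$ and the same set $O(P')=O(P'')$ are master/slave by definition, so each such neighborhood class contains at most two masters, and all other members are slaves. Since $\sl(\contractd)\le \ml(D)<k$ by Lemma~\ref{lem:slaves}, the total number of maximal hard bipaths is $O(k)+O(c_H k)+\sl(\contractd)=2^{O(|H|\sqrt{\log|H|})}k$.

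It then remains to bound the total \emph{length} of the hard bipaths, not just their number: a single long hard bipath still has many vertices. Here I would argue that a hard bipath whose internal vertices are all hard (neither special nor isolated) cannot be too long, because of Rule~\ref{r:bipath} together with the structure forced by the earlier lemmas --- a hard internal vertex has an out-neighbor in $S$ (so it is not ``isolated'' in the bipath sense) and every second internal vertex along a hard bipath is, in a reduced instance, constrained so that proper bipaths of length $4$ are already contracted. More carefully, one should show that after Rules~1--\ref{r:cut-double-edge} each maximal hard bipath has length $O(1)$, or alternatively charge the extra length against distinct easy vertices reached by the outgoing edges; combined with the count of bipaths above this gives $|\hd(\contractd)|=2^{O(|H|\sqrt{\log|H|})}k$ and hence $|V(D)|\le 2|V(\contractd)|=2(|\ea(\contractd)|+|\hd(\contractd)|)=2^{O(|H|\sqrt{\log|H|})}k$. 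The main obstacle I anticipate is precisely this last point: controlling the length of each hard bipath, since the inability to shortcut cutvertices means a hard bipath can carry many contracted vertical edges, and one must use Rule~\ref{r:bipath} (and possibly a further accounting of how many vertices can have their unique out-neighbor land in the same small set $O(P')$) rather than a naive local-replacement argument.
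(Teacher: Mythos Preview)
Your overall strategy matches the paper's, but there are two concrete gaps in the final step, the first of which you yourself flag as the obstacle.

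\textbf{Length of a hard bipath.} Of your two alternatives, the first --- that each maximal hard bipath has length $O(1)$ after Rules~1--\ref{r:cut-double-edge} --- is simply false; a hard bipath can be arbitrarily long. The correct route is your second alternative, but it needs a sharper accounting than ``charge against distinct easy vertices''. The paper proves (Lemma~\ref{lem:many-neighbors}) that every maximal hard bipath $P'$ satisfies $|\hd(\contractd)\cap V(P')|\le 10|O(P')|+6$. The two ingredients are: (i) by Rule~\ref{r:bipath} and non-isolation, among any five consecutive internal bags at least one sends an edge in $D$ outside the bipath (Lemma~\ref{lem:no-long-proper}); and (ii) by Rule~\ref{r:cutneighbor}, no vertex outside the bipath can receive edges from three distinct tails $t_{B_a},t_{B_b},t_{B_c}$ with $a<b<c$ along the same bipath (otherwise $N^-(u)\setminus\{t_{B_b}\}$ separates $t_{B_b}$ from $r$). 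Combining (i) and (ii) yields the linear bound in $|O(P')|$.

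\textbf{Summing over bipaths.} Once the length of $P'$ is $\Theta(|O(P')|)$ rather than $O(1)$, bounding the \emph{number} of bipaths is not enough; you must bound $\sum_{P'}|O(P')|=\sum_{p\in B}\deg_G(p)$ in the bipath minor $G$. The paper splits this sum: for high-degree bipaths ($\deg_G>2d_H$) it uses the degeneracy bound $\sum_{\deg>2d_H}\deg\le 2d_H|S|$ (Lemma~\ref{lem:big-degrees}, a sharpening of the first item of Proposition~\ref{prop:bipartite-general}); for low-degree slaves it uses $2d_H\cdot\sl(\contractd)$; for low-degree masters it uses Corollary~\ref{cor:boundedsmallneighborhood}. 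Your argument only counts bipaths, which does not control this degree sum.

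\textbf{A smaller slip.} Two bipaths with the same neighborhood $N_G(p_{P'})\cap S=O(P')$ need \emph{not} share the same pair of extremities (the extremities are just two of the vertices of $O(P')$, and different pairs can occur). Hence a single neighborhood class of size $s\le 2d_H$ contains up to $2\binom{s}{2}=O(d_H^2)$ masters, not just two. This is harmless for the asymptotics but your stated bound of ``at most two masters per class'' is incorrect.
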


Throughout the section we assume that rules 1--\ref{r:cut-double-edge} do not apply to $D$.
The results from the previous section give a bound of $O(k)$ on the number of easy vertices.
Our plan in this section is to show a linear bound on the number of hard vertices in terms of $|\ea(\contractd)|+\sl(\contractd)$ and next get a bound on $|V(D)|$ as a corollary.

It follows that our task is to show that the total length of hard weak bipaths in $\contractd$ is not too large.
Let us state a few useful properties of such bipaths.

\begin{lemma}
\label{lem:no-long-proper}
 Let $\ell\ge 9$ and let $P'=v_1,\ldots,v_{\ell}$ be a hard bipath in $\contractd$ such that $v_1$ and $v_{\ell}$ are easy.
 For every $i=3,\ldots,\ell-6$ there is at least one edge in $D$ from $t_{B_{v_j}}$, for some $j=i,\ldots,i+4$, to a vertex outside $\cup_{j'=2}^{\ell-1}B_{v_{j'}}$.
\end{lemma}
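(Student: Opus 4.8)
The claim is essentially a "no long run of locally-reducible internal vertices" statement for hard bipaths in $\contractd$. The plan is to argue by contradiction: suppose that for some index $i\in\{3,\ldots,\ell-6\}$ every bag $B_{v_j}$ with $j\in\{i,\ldots,i+4\}$ has all its outgoing edges landing inside $\bigcup_{j'=2}^{\ell-1}B_{v_{j'}}$. Combined with the structure of hard bipaths --- internal vertices have in-degree exactly $2$ in $\contractd$, with both in-neighbors also out-neighbors (Lemma~\ref{lem:decomposition} and the analysis preceding the definition of maximal hard bipaths), plus Lemmas~\ref{lem:no-edge-to-head} and~\ref{lem:one-edge} controlling edges between consecutive linked bags --- this should pin down the local structure of $D$ on the bags $B_{v_{i-1}},\ldots,B_{v_{i+5}}$ almost completely: each $t_{B_{v_j}}$ ($j=i,\ldots,i+4$) has its out-edges only to the neighboring bags $B_{v_{j-1}}, B_{v_{j+1}}$ and possibly to $h_{B_{v_j}}$, and each head $h_{B_{v_j}}$ (if the bag has size $2$) has its unique out-edge by Lemma~\ref{lem:no-edge-to-head} into the tail of a neighboring bag. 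The point is that such a configuration forces a stretch of the contracted graph that, after unfolding bags, contains a long proper bipath (length $\ge 4$) in $D$, or a structure triggering one of Rules~\ref{r:bipath}, \ref{r:cut-double-edge}, \ref{r:two-cut-edges}, or a cut-vertex to which Rule~\ref{r:cutneighbor} or Rule~\ref{r:cut-edge} applies.

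**The main case analysis.** First I would observe that since $v_1,\ldots,v_\ell$ is a (proper) bipath in $\contractd$ and all $v_j$ for $2\le j\le \ell-1$ are hard (hence non-special, non-isolated, in-degree $2$ in $\contractd$), the only way $B_{v_j}$ has size $2$ is via a lonely cut-edge $(t_{B_{v_j}},h_{B_{v_j}})$ that does not form an isolated bag; by the definition of isolated bags, a non-isolated size-$2$ bag $B$ has an edge from $t_B$ to a special bag --- but that edge would leave $\bigcup B_{v_{j'}}$, contradicting our assumption for the five indices $i,\ldots,i+4$. Hence every bag $B_{v_j}$, $j=i,\ldots,i+4$, has size $1$, i.e. these vertices of $\contractd$ are honest vertices of $D$. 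Now Lemma~\ref{lem:one-edge} gives that between consecutive bags there is exactly one edge each way, so on $v_{i-1},v_i,\ldots,v_{i+5}$ the digraph $D$ (restricted to out-edges of $v_i,\ldots,v_{i+4}$) is exactly a bidirectional path segment --- every $v_j$ with $j=i,\ldots,i+4$ has $N^+_D(v_j)\subseteq\{v_{j-1},v_{j+1}\}$ and, being non-isolated with in-degree $2$, actually $N^+_D(v_j)=N^-_D(v_j)=\{v_{j-1},v_{j+1}\}$. That is precisely a proper bipath of length $\ge 4$ inside $D$ on $v_i,\ldots,v_{i+4}$ (plus one more vertex on each side as extremity), so Rule~\ref{r:bipath} applies, contradiction. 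I would need to be careful about the edge cases where $v_{i-1}$ or $v_{i+5}$ might be contracted bags rather than single vertices, and where $v_{i-1}$ or $v_{i+5}$ is one of the easy extremities or adjacent to one, which is why the statement uses the window $j=i,\ldots,i+4$ with margins $i\ge 3$ and $i+4\le\ell-2$ --- giving two bags of slack on each side so that the five bags in question together with their immediate neighbors all lie strictly inside the bipath and the neighboring-bag lemmas apply cleanly.

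**The expected obstacle.** The delicate part is handling the possibility that some of the bags $B_{v_j}$ have size $2$ despite the assumption, or that the "extra" out-edge of a tail $t_{B_{v_j}}$ goes to $h_{B_{v_j}}$ within the same (larger) run of bags rather than genuinely outside --- one must verify that $h_{B_{v_j}}$ together with its own out-edge (forced by Lemma~\ref{lem:no-edge-to-head} into the tail of a neighbor) cannot be reabsorbed into a longer proper bipath without hitting a reduction rule, and in particular that Rules~\ref{r:cut-double-edge} and~\ref{r:two-cut-edges} already forbid the problematic local patterns (a cut-edge with its reverse present, or two cut-edges on adjacent tails). I expect the bookkeeping over which of $v_{i-1},\ldots,v_{i+5}$ are single vertices versus contracted bags, and invoking Lemma~\ref{lem:head-tail} to rule out chains of cut-edges, to be the main source of tedium; the conceptual content, though, is just "five consecutive internal hard bags all of whose out-edges stay internal $\Rightarrow$ a long proper bipath $\Rightarrow$ Rule~\ref{r:bipath} fires."
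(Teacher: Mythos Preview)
Your proposal is correct and follows essentially the same argument as the paper: either some bag $B_{v_j}$ among the five has size~$2$, in which case hardness (hence non-isolatedness) immediately yields an edge from $t_{B_{v_j}}$ to a special bag and thus outside $\bigcup_{j'=2}^{\ell-1}B_{v_{j'}}$, or all five bags have size~$1$, whence $v_i,\ldots,v_{i+4}$ is a proper bipath of length~$4$ in $D$ and Rule~\ref{r:bipath} fires. Your ``expected obstacle'' section overestimates the difficulty --- once the size-$2$ case is dispatched by non-isolatedness (as you already do), no further bookkeeping on bag sizes or on $v_{i-1},v_{i+5}$ is needed, since the proper bipath uses only $v_i,\ldots,v_{i+4}$ and constrains only the middle three.
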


\begin{proof}
 Fix $i\in\{3,\ldots,\ell-6\}$ and consider the length 4 bipath $v_i,\ldots,v_{i+4}$.
 For convenience denote $B_j=B_{v_j}$.
 If for some $j=i+1,i+2,i+3$ there is an edge from $B_j$ with head not in $B_{j-1}\cup B_{j+1}$, then by Lemma~\ref{lem:decomposition}$(iii)$ this head is outside $\cup_{j'=2}^{\ell-1}B_{v_{j'}}$ and we are done.
 Hence the edges leaving $B_{i+1}$, $B_{i+2}$, and $B_{i+3}$ go only to the neighboring bags.  
 Since Rule~\ref{r:bipath} does not apply, for some $j=i,\ldots,i+4$ the bag $B_j$ is of size~2.
 Since $v_j$ is hard, $B_j$ is not isolated.
 Hence, there is an edge $e$ in $D$ from $t_{B_j}$ to a special bag $B$. 
 Since $v_2,\ldots,v_{\ell-1}$ are hard, $B$ is none of $B_2,\ldots,B_{\ell-1}$
\end{proof}

\begin{lemma}
\label{lem:many-neighbors}
 For any maximal hard weak bipath $P'$ in $\contractd$, we have $|\hd(\contractd)\cap V(P')|\le 10|O(P')|+6$.
\end{lemma}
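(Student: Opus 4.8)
The plan is to bound $\ell$, the number of vertices of $P'=v_1,\dots,v_\ell$, since $|\hd(\contractd)\cap V(P')|=\ell-2$: the hard vertices on a maximal hard weak bipath are exactly its internal vertices, the extremities being easy. If $\ell\le 8$ then $\ell-2\le 6\le 10|O(P')|+6$ and we are done, so assume $\ell\ge 9$. First I invoke Lemma~\ref{lem:no-long-proper}: for every index $i\in\{3,\dots,\ell-6\}$ it yields an index $j(i)\in\{i,\dots,i+4\}$ together with an edge in $D$ from $t_{B_{v_{j(i)}}}$ to some vertex $w_i$ outside $\bigcup_{2\le j'\le \ell-1}B_{v_{j'}}$. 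Letting $\tilde w_i\in V(\contractd)$ be the vertex whose bag contains $w_i$, Lemma~\ref{lem:no-edge-to-head} forces $w_i=t_{B_{\tilde w_i}}$; moreover $\tilde w_i$ is an out-neighbour (in $\contractd$) of the internal vertex $v_{j(i)}$ and, because $w_i$ lies in no bag $B_{v_{j'}}$ with $2\le j'\le \ell-1$, it is not itself an internal vertex of $P'$, so $\tilde w_i\in O(P')$.

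Next I do a double counting over the $\ell-8$ ``windows'' $i\in\{3,\dots,\ell-6\}$. Set $E=\{j(i)\colon 3\le i\le \ell-6\}$ and charge each window $i$ to $j(i)\in E$. An index $j$ can be charged only by windows $i$ with $j\in\{i,\dots,i+4\}$, that is, by at most five windows, so $\ell-8\le 5|E|$. For each $j\in E$ fix one window $i$ with $j(i)=j$ and put $W_j:=\tilde w_i\in O(P')$, so that $D$ contains an edge from $t_{B_{v_j}}$ to $t_{B_{W_j}}$. The heart of the argument is the claim that every fibre of the map $W\colon E\to O(P')$ has size at most $2$; granting this, $|E|\le 2|O(P')|$, hence $\ell-8\le 10|O(P')|$, which rearranges to $\ell-2\le 10|O(P')|+6$, as required.

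To prove the fibre bound, suppose for contradiction that $j_1<j_2<j_3$ lie in $E$ and $W_{j_1}=W_{j_2}=W_{j_3}=\tilde w$. Put $x=t_{B_{\tilde w}}$ and $y=t_{B_{v_{j_2}}}$; then $D$ contains the edges $(t_{B_{v_{j_1}}},x)$, $(y,x)$, $(t_{B_{v_{j_3}}},x)$, and $t_{B_{v_{j_1}}},y,t_{B_{v_{j_3}}}$ are three distinct in-neighbours of $x$ (distinctness because $\tilde w\in O(P')$ lies outside $\{v_2,\dots,v_{\ell-1}\}$, which contains $v_{j_1},v_{j_2},v_{j_3}$). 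I will show that every path in $D$ from $r$ to $y$ passes through $t_{B_{v_{j_1}}}$ or through $t_{B_{v_{j_3}}}$. Indeed, contracting the bag-edges turns such a path into a walk from $r$ to $v_{j_2}$ in $\contractd$; since for each internal vertex $v_m$ of the weak bipath $N^-_{\contractd}(v_m)=\{v_{m-1},v_{m+1}\}$, looking at the last maximal stretch of this walk contained in the segment $\{v_{j_1},\dots,v_{j_3}\}$ shows that the walk enters this segment at $v_{j_1}$ or at $v_{j_3}$; and re-entering the corresponding bag of $\contractd$ must happen through its tail by Lemma~\ref{lem:no-edge-to-head}, i.e.\ through $t_{B_{v_{j_1}}}$ or $t_{B_{v_{j_3}}}$. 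Consequently, removing $N^-_D(x)\setminus\{y\}$ (which contains both $t_{B_{v_{j_1}}}$ and $t_{B_{v_{j_3}}}$) disconnects $y$ from $r$, so Rule~\ref{r:cutneighbor} applies to $x$ — contradicting that $D$ is reduced. This gives the fibre bound and finishes the proof.

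I expect the fibre-size argument to be the main obstacle, and within it the delicate point of moving between $D$ and $\contractd$: showing that every $r$-to-$t_{B_{v_{j_2}}}$ path in $D$ necessarily traverses $t_{B_{v_{j_1}}}$ or $t_{B_{v_{j_3}}}$ relies on combining the weak-bipath in-neighbourhood structure in $\contractd$ with the bag description and Lemma~\ref{lem:no-edge-to-head}. The remaining steps are bookkeeping, though one should also check the degenerate cases where $\tilde w\in\{v_1,v_\ell\}$ or where $r\in N^-_D(x)$; in both, the very same application of Rule~\ref{r:cutneighbor} still yields the contradiction.
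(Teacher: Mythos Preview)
Your proof is correct and follows essentially the same route as the paper: invoke Lemma~\ref{lem:no-long-proper} to produce many outgoing edges from tails of internal bags, then use Rule~\ref{r:cutneighbor} to show that no vertex of $D$ can be the head of three such edges (your ``fibre bound''), yielding $|O(P')|\ge(\ell-8)/10$. The only cosmetic differences are that the paper counts via disjoint windows to obtain $\lfloor(\ell-4)/5\rfloor$ edges directly (rather than your charging over overlapping windows), and states the cut argument for Rule~\ref{r:cutneighbor} more tersely, without spelling out the $D\leftrightarrow\contractd$ translation you make explicit.
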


\begin{proof}
Let $P'=v_1,\ldots,v_{\ell}$.
We can assume that $\ell\ge 9$, for otherwise $|\hd(\contractd)\cap V(P')|\le 6$ and the claim holds trivially.
For convenience denote $B_i=B_{v_i}$.
By Lemma~\ref{lem:no-long-proper} there are at least $\lfloor \frac{\ell-4}{5} \rfloor$ edges from tails of bags $B_3,\ldots,B_{\ell-2}$ to vertices outside $\cup_{i=2}^{\ell-1}B_{v_j}$.
Let $Z$ denote the set of these edges.
We claim that for every vertex $u\in V(D)$ there are at most two edges from $Z$ with heads in $u$.
Indeed, assume that $u$ has got three in-neighbors $t_{B_a}, t_{B_b}, t_{B_c}$ in $D$, with $a<b<c$. 
Then $N^-(u)\setminus \{t_{B_b}\}$ cuts $t_{B_b}$ (and all vertices of $B_{a+1},\ldots,B_{c-1}$) from $r$, a contradiction to the fact that $D$ is reduced with respect to Rule~\ref{r:cutneighbor}. 
Hence the edges in $Z$ have at least $\lfloor\frac{\ell-4}5 \rfloor \cdot \frac12 \ge \frac{\ell-8}5\cdot \frac12$ different heads. 
By Lemma~\ref{lem:no-edge-to-head} these heads are tails of bags, and by Lemma~\ref{lem:bag-size-2} each of them corresponds to a different vertex in $\contractd$.
It follows that the vertices $v_3,\ldots,v_{\ell-2}$ have in $\contractd$ at least $\frac{\ell-8}{10}$ neighbors in $O(P')$, so $|O(P')|\geq \frac{\ell-8}{10}$. 
Since $|\hd(\contractd)\cap V(P)|=\ell-2$ it follows that $|\hd(\contractd)\cap V(P)|\le 10|O(P')|+6$.
\end{proof}

In what follows we are going to bound the size of $\contractd$ using its sparsity properties.
To this end we use an auxiliary bipartite graph $G$, called the {\em bipath minor} of $\contractd$, constructed as follows.
We put $V(G)=A \cup B$, where $A=\ea(\contractd)$, and $B$ is the set of all maximal hard bipaths in $\contractd$.
For every maximal hard bipath $P'$ in $\contractd$  with extremities $u,v \in \ea(\contractd)$, the neighborhood of the corresponding vertex in $B$ is exactly $O(P')$.

\begin{lemma}\label{cl:fD-H}
Let $H$ be a graph. If $D$ is $H$-minor-free, then $|\hd(\contractd)| = 2^{O(|H|\sqrt{\log|H|})} (|\ea(\contractd)|+\sl(\contractd))$.
\end{lemma}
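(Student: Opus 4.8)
The goal is to bound $|\hd(\contractd)|$ linearly (up to the $H$-dependent constant) in $|\ea(\contractd)|+\sl(\contractd)$. The natural strategy is to work with the bipath minor $G=(A,B,E)$ just defined, where $A=\ea(\contractd)$ and $B$ indexes the maximal hard bipaths, with the $B$-vertex of a bipath $P'$ having neighborhood exactly $O(P')$ in $A$. The key point is that $G$ inherits sparsity from $\contractd$: one should first argue that $G$ is (isomorphic to) a subgraph of a shallow minor of $\contractd$ — indeed, contracting the interior of each maximal hard bipath to a single vertex turns $\contractd$ into a graph containing $G$ as a subgraph, and this is a bounded-depth minor operation since the pieces we contract are paths, hence of bounded radius only after we also account for the bag structure; in fact contracting a whole bipath interior is an unbounded-radius contraction, so instead one argues directly that $G$ is a depth-$1$ minor (or a minor, which suffices since $H$-minor-freeness is minor-closed). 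Either way, $G$ is $H$-minor-free, hence $d_H$-degenerate by Lemma~\ref{lem:degeneracy-h-minor}, and Corollary~\ref{cor:boundedsmallneighborhood} and Lemma~\ref{lem:big-degrees} apply to it.

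The main estimate then splits the maximal hard bipaths into two groups according to the size of $O(P')$. Fix the threshold $2d_H$ (or $2p$ as in Proposition~\ref{prop:bipartite-general}). For bipaths $P'$ with $|O(P')|$ \emph{large}, i.e.\ $\deg_G$ of the corresponding $B$-vertex exceeds $2d_H$, Lemma~\ref{lem:big-degrees} applied to $G$ with the side $X=A$ gives $\sum_{P'\,:\,|O(P')|>2d_H}|O(P')|\le 2d_H|A| = 2d_H|\ea(\contractd)|$, and then Lemma~\ref{lem:many-neighbors} converts this into a bound on the total number of hard vertices lying on these bipaths: $\sum_{P'\text{ large}}|\hd\cap V(P')|\le \sum_{P'\text{ large}}(10|O(P')|+6)\le 20 d_H|\ea(\contractd)| + 6\cdot(\text{\# large bipaths})$, and the number of large bipaths is itself at most $|A|$ (each contributes a distinct high-degree $B$-vertex, or more crudely is bounded by $2d_H|A|$). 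For bipaths $P'$ with $|O(P')|\le 2d_H$ \emph{small}, Corollary~\ref{cor:boundedsmallneighborhood} (equivalently part~2 of Proposition~\ref{prop:bipartite-general}) bounds the number of \emph{distinct} sets $O(P')$ that occur by $c_H|A|$, so the small bipaths partition into at most $c_H|\ea(\contractd)|$ classes, each class consisting of hard bipaths with the same pair of endpoints $\{u,v\}\subseteq \ea(\contractd)$ and the same $O(P')=S$. Here I need the observation that within such a class, at most two bipaths are masters and all the rest are slaves by definition; hence the number of non-master bipaths in the class is at most its number of slaves, and across all classes the total number of small bipaths is at most $2c_H|\ea(\contractd)| + \sl(\contractd)$. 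Multiplying by the per-bipath bound $10|O(P')|+6\le 20d_H+6$ from Lemma~\ref{lem:many-neighbors} gives $\sum_{P'\text{ small}}|\hd\cap V(P')|\le (20d_H+6)(2c_H|\ea(\contractd)|+\sl(\contractd))$.

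Adding the two contributions, using $d_H=O(|H|\sqrt{\log|H|})$ and $c_H=2^{O(|H|\sqrt{\log|H|})}$, yields $|\hd(\contractd)| = 2^{O(|H|\sqrt{\log|H|})}(|\ea(\contractd)|+\sl(\contractd))$, as claimed. (One also uses that every hard vertex lies on exactly one maximal hard bipath, which follows from Lemma~\ref{lem:decomposition} applied with $S=\ea(\contractd)$, so summing over maximal hard bipaths indeed counts all of $\hd(\contractd)$.)

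\textbf{Main obstacle.} The delicate point I expect to be the crux is establishing cleanly that the bipath minor $G$ is $H$-minor-free (equivalently that Proposition~\ref{prop:bipartite-general}/Corollary~\ref{cor:boundedsmallneighborhood} may be invoked for it). Contracting the \emph{entire} interior of a maximal hard bipath — possibly a very long bidirectional path, with bags of size two — is not a bounded-depth minor operation, so one cannot directly cite a shallow-minor bound; instead one must note that $H$-minor-freeness is closed under \emph{arbitrary} contractions and deletions, so $G$, being obtained from (the underlying undirected graph of) $\contractd$ by contracting each hard-bipath interior and deleting all edges not between $A$ and a contracted vertex, is a minor of $\contractd$ and hence $H$-minor-free. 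The only subtlety is that several hard bipaths might share endpoints, creating parallel edges or identifications in $G$; but parallel edges are irrelevant for the counting of \emph{distinct neighborhoods} and high-degree vertices, and one can simply work with the simple graph underlying $G$. After this is in place, the rest is the bookkeeping above.
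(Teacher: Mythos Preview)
Your overall approach is the same as the paper's: work with the bipath minor $G$, argue it is a minor of $D$ (hence $H$-minor-free and $d_H$-degenerate), then split the $B$-vertices by the threshold $2d_H$, handling the high-degree side via Lemma~\ref{lem:big-degrees} and the low-degree side via Corollary~\ref{cor:boundedsmallneighborhood} together with the master/slave dichotomy. Your identification of the ``main obstacle'' (that $G$ is only an arbitrary-depth minor, not a shallow one, but that this is immaterial for $H$-minor-freeness) is exactly right and matches the paper's one-line justification.

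There is, however, a genuine slip in your treatment of the small-degree bipaths. You write that Corollary~\ref{cor:boundedsmallneighborhood} bounds the number of distinct sets $O(P')$ by $c_H|A|$, and then assert that ``the small bipaths partition into at most $c_H|\ea(\contractd)|$ classes, each class consisting of hard bipaths with the same pair of endpoints $\{u,v\}$ and the same $O(P')=S$''. This is not what the corollary gives: it only bounds the number of distinct \emph{neighborhoods} $S$, and two maximal hard bipaths with the same $O(P')=S$ may very well have different extremities (any pair $\{u,v\}\subseteq S$ is possible). Since masters and slaves are defined per triple $(u,v,S)$, a single $S$-class can contain up to $2\binom{|S|}{2}$ masters, not just two.

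The fix is immediate and is precisely what the paper does: for each of the at most $c_H|A|$ small neighborhoods $S$ (with $|S|\le 2d_H$) there are at most $\binom{|S|}{2}\le\binom{2d_H}{2}=O(d_H^2)$ endpoint pairs, and hence at most $O(d_H^2)$ masters with that neighborhood. This yields $O(d_H^2\, c_H\, |\ea(\contractd)|)$ small master bipaths in total, which is still $2^{O(|H|\sqrt{\log|H|})}|\ea(\contractd)|$, and the rest of your bookkeeping goes through unchanged.
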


\begin{proof}
Consider an arbitrary hard vertex $v$ of $\contractd$.
Consider the maximal hard weak bipath $P'$ in $\contractd$ that contains $v$.
Then $P'$ corresponds to a vertex in $B$ and by Lemma~\ref{lem:many-neighbors}, it has at most $10|O(P')|+6$ internal vertices.
It follows that

\begin{equation}
\label{eq:hd}
|\hd(\contractd)|\leq \sum_{v \in B}(10\deg_G(v)+6)\leq \sum_{v \in B}16\deg_G(v). 
\end{equation}

Note that $G$ is a minor of (the undirected version of) $D$ since it can be obtained from $\contractd$ by edge contractions and deletions, and $\contractd$ in turn is obtained from $D$ by contractions.
Hence, $G$ is $H$-minor-free. 
Moreover, $G$ is simple.
By Lemma~\ref{lem:degeneracy-h-minor}, we know that $G$ is $d_H$-degenerate, for $d_H={O(|H|\sqrt{\log|H|})}$. 
Let $B_m$ and $B_s$ denote the vertices in $B$ for which the corresponding maximal hard bipath is master and slave, respectively.
By~\eqref{eq:hd} we get
\begin{eqnarray}
|\hd(\contractd)| & \leq & 16\sum_{v \in B}\deg_G(v)\nonumber\\
& \leq & 16\hspace{-5mm}\sum_{\substack{v \in B \\ \deg_G(v) > 2d_H}}\hspace{-5mm}\deg_G(v)+16\hspace{-6mm}\sum_{\substack{v \in B_s \\ \deg_G(v) \le 2d_H}}\hspace{-5mm}\deg_G(v)+16\hspace{-5mm}\sum_{\substack{v \in B_m \\ \deg_G(v) \le 2d_H}}\hspace{-5mm}\deg_G(v).\nonumber
\end{eqnarray}

Let us bound each of the terms separately. By Lemma~\ref{lem:big-degrees}, we have 
$$\sum_{\substack{v \in B_s \\ \deg_G(v) > 2d_H}} d(v) \leq 2d_H|A| = O(|H|\sqrt{\log|H|}\cdot |\ea(\contractd)|).$$ 
Obviously, 
$$\sum_{\substack{v \in B_s \\ \deg_G(v) \le 2d_H}}\deg_G(v) \le 2d_H\sl(\contractd) = O(|H|\sqrt{\log|H|} \sl(\contractd)).$$
Finally, \[\sum_{\substack{v \in B_m \\ \deg_G(v) \le 2d_H}}\deg_G(v) = \sum_{\substack{S \subseteq A \\ |S|\leq 2 d_H}}\hspace{-2mm}|S| \cdot |\{ v \in B_m \colon N_G(v)=S\}|\le 2 d_H \hspace{-2mm}\sum_{\substack{S \subseteq A \\ |S|\leq 2 d_H}} |\{ v \in B_m \colon N_G(v)=S\}|.\]
By Corollary~\ref{cor:boundedsmallneighborhood}, there is a constant $c_H=2^{O(|H|\sqrt{\log|H|})}$ such that there are at most $c_H|A|$ distinct neighborhoods of vertices in $B$. 
For each such neighborhood $S \subseteq A$ and for every pair of vertices $u,v\in S$ there are at most two master bipaths $P'$ with endpoints $u$ and $v$ and such that $O(P')=S$.
Therefore for a fixed neighborhood $S$ of size at most $2d_H$ we have $|\{v \in B_m | N_G(v)=S\}|\leq 2{|S| \choose 2} \leq 2{2 d_H \choose 2}=O(d_H^2)$. 
Hence \[\sum_{{S \subseteq A,  |S|\leq 2 d_H}} |\{ v \in B_m | N_G(v)=S\}|=O(c_H\cdot d_H^2 \cdot |\ea(\contractd)|) = 2^{O(|H|\sqrt{\log|H|})} |\ea(\contractd)|.\]
The claim follows.
\end{proof}

Now we can finish the proof of Theorem~\ref{thm:size-bound}.
Assume $\ml(D)< k$.
By Lemmas~\ref{lem:lower-bound-special} and~\ref{lem:lower-bound-isolated}, $\ea(\contractd) < 60k + 180k$.
Moreover, by Lemma~\ref{lem:slaves}, $\sl(\contractd) < k$.	
This, with Lemma~\ref{cl:fD-H} gives the claim of Theorem~\ref{thm:size-bound}.

\section{\probIOB in graphs of bounded expansion}\label{sec:IOB-kernel}

In this section we give a linear kernel for \probIOBshort on any graph class $\mathcal{G}$ of bounded expansion. To this end, we modify the approach of Gutin, Razgon and Kim~\cite{Gutin-quadratic}. Before we proceed to the argumentation, let us remark that Gutin et al. work with a slightly more general problem, where the root of the outbranching is not prescribed; of course, the outbranching is still required to span the whole vertex set. Note that the variant with a prescribed root $r$ can be reduced to this variant simply by removing all in-arcs of $r$, which forces $r$ to be the root of any outbranching of the given digraph. Since our kernel will be an induced subgraph of $D$ and $r$ will not be removed by any reduction, it will be still true that $r$ is the only candidate for the root of an outbranching. Hence, the resulting instance will be equivalent in both variants. Therefore, from now on we work with variant without prescribed root in order to be able to use the observations of Gutin et al. as black-boxes.

First, Gutin et al. observe that in an instance that cannot be easily resolved, one can find a small vertex cover (of the underlying undirected graph).

\begin{lemma}[\cite{Gutin-quadratic}]
\label{lem:vertex-cover}
Given a digraph $D$, we can either build an out-branching with at least $k$ internal vertices or obtain a vertex cover of size at most $2k-2$ in $O(n^2m)$ time.
\end{lemma}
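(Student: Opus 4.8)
The plan is to mimic a standard local-search / greedy argument, building an out-branching greedily and counting internal vertices, and when this fails to produce $k$ internal vertices, harvesting a small vertex cover from the structure of the failed attempt. First I would fix an arbitrary out-branching $T$ of $D$ rooted at some vertex $r$ (such a root exists iff $D$ has an out-branching, which can be tested in $O(n+m)$ time; if none exists we may output a trivial no-instance). The key quantity is the number of internal vertices of $T$, i.e.\ $n$ minus the number of leaves of $T$. I would then apply local improvement moves that strictly decrease the number of leaves (equivalently, increase the number of internal vertices): a typical move takes a leaf $\ell$ of $T$ with an out-arc $(\ell,w)$ in $D$ such that $w$ is not a child of $\ell$, and re-attaches $w$ (and hence its subtree) below $\ell$, turning $\ell$ into an internal vertex; one has to check this keeps $T$ an out-branching, which it does as long as $w$ is not an ancestor of $\ell$. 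Since the number of internal vertices is bounded by $n$, this process terminates after at most $n$ improvements, each implementable in $O(nm)$ or so — giving the $O(n^2m)$ bound.

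The heart of the matter is the analysis of the terminal out-branching $T$, for which no improving move applies, under the assumption that $T$ has fewer than $k$ internal vertices. Let $L$ be the set of leaves of $T$ and $I = V(D)\setminus L$ the set of internal vertices, so $|I| \le k-1$. The crucial claim is that $I$ (or $I$ together with at most one more vertex, or $I$ minus the root — the exact bookkeeping is what yields $2k-2$) is a vertex cover of the underlying undirected graph of $D$. To see why, suppose $uv$ is an edge of $D$ with both endpoints leaves of $T$; by the non-applicability of the improving move and the structure of $T$, such a configuration either directly gives an improvement (re-attaching one leaf below the other) or forces a contradiction with maximality via a slightly more careful rerouting argument — for instance, if the arc is oriented $u \to v$, one re-parents $v$ under $u$; the only obstruction would be that $v$ is an ancestor of $u$, but $v$ is a leaf and hence has no descendants. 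Therefore no edge of $D$ has both endpoints in $L$, so $I$ covers every edge. The bound $|I| \le k-1$ is not quite $2k-2$, so the factor of $2$ must come from the fact that Gutin et al.\ count edges of the vertex cover rather than internal vertices, or include both endpoints of a matching; I would reconcile this by taking, for each internal vertex, itself and (if needed) its parent, or by noting each internal vertex accounts for at least one tree edge and there are at most $2k-2$ such vertices incident to the edge set one needs — the precise constant is a routine but slightly delicate count that I would defer to following the presentation of Gutin et al.~\cite{Gutin-quadratic}.

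I expect the main obstacle to be exactly this bookkeeping: getting the vertex cover bound down to $2k-2$ rather than a weaker $O(k)$, and ensuring the improving moves are both sound (preserving the out-branching property) and complete (so that \emph{every} bad edge configuration is removable). In particular one must be careful that a local move which fixes one bad edge does not create another, and that the potential function (number of internal vertices, capped at $n$) genuinely decreases lexicographically or strictly; handling the root $r$ separately (it is always internal if it has any out-arc) is where the ``$-2$'' plausibly enters. Since the statement is quoted verbatim from~\cite{Gutin-quadratic}, I would in the write-up simply cite their argument for the exact constant and running time, using the sketch above only to convey the idea; the subsequent sections only need the qualitative fact that a vertex cover $U$ with $|U| = O(k)$ can be computed in polynomial time.
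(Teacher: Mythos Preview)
The paper does not prove this lemma at all: it is quoted as a black box from Gutin, Razgon and Kim~\cite{Gutin-quadratic}, so there is no ``paper's own proof'' to compare against. Your instinct at the end --- to simply cite their argument --- is exactly what the paper does.

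That said, your sketch has a genuine gap, and it is precisely the one that explains the $2k-2$ (rather than $k-1$) bound that puzzled you. The reparenting move you describe does \emph{not} always strictly increase the number of internal vertices: if $\ell$ is a leaf with an arc $(\ell,w)$ to another leaf $w$, then reparenting $w$ below $\ell$ turns $\ell$ internal but may turn $w$'s old parent $p$ into a leaf, namely when $w$ was $p$'s only child in $T$. In that case the internal count is unchanged, and your local search can even cycle. Concretely, take $D$ with arcs $r\to a$, $a\to b$, $r\to c$, $c\to d$, $b\to d$: the tree $r\to a\to b$, $r\to c\to d$ is terminal for your rule (the only applicable move swaps the parent of $d$ between $c$ and $b$ without gain), yet its leaves $b,d$ are adjacent, so the internal set $\{r,a,c\}$ is not a vertex cover.

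The fix, which is essentially the argument in~\cite{Gutin-quadratic}, is to observe that in a terminal tree every leaf--leaf arc $(\ell,w)$ forces $w$'s parent to have out-degree exactly $1$ in $T$; hence distinct ``bad'' leaves $w$ (heads of leaf--leaf arcs) have distinct parents among the internal vertices, so there are at most $|I|\le k-1$ of them. Taking $I$ together with all bad leaves gives a vertex cover of size at most $2(k-1)=2k-2$. This is where the factor $2$ comes from --- not from bookkeeping around the root, but from the failure of your independence claim.
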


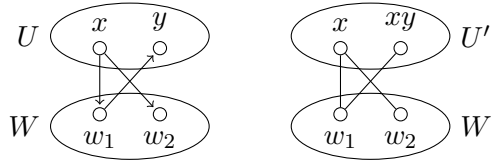
\begin{figure}[h]
 \centering
\begin{tikzpicture}[scale=\scalefactor]
\tikzstyle{whitenode}=[draw,circle,fill=white,minimum size=5pt,inner sep=0pt]
\tikzstyle{set}=[draw,ellipse,fill=white,minimum height=25pt,minimum width=60pt,inner sep=0pt]
 \tikzstyle{blacknode}=[draw,circle,fill=black,minimum size=4pt,inner sep=0pt]
\tikzstyle{texte}=[circle,minimum size=5pt,inner sep=0pt]
\draw (0,0) node[set] (U) [label=left:$U$] {};  
\draw (0,-1.5) node[set] (W) [label=left:$W$] {};

\draw (-0.5,-0.2) node[whitenode] (x) [label=90:$x$] {}; 
\draw (0.5,-0.2) node[whitenode] (y) [label=90:$y$] {}; 

\draw (-0.5,-1.3) node[whitenode] (w1) [label=-90:$w_1$] {}; 
\draw (0.5,-1.3) node[whitenode] (w2) [label=-90:$w_2$] {};

\draw (x) edge [post] node {} (w1);
\draw (x) edge [post] node {} (w2);
\draw (w1) edge [post] node {} (y);

\draw (4,0) node[set] (U) [label=right:$U'$] {};  
\draw (4,-1.5) node[set] (W) [label=right:$W$] {};

\draw (3.5,-0.2) node[whitenode] (x) [label=90:$x$] {}; 
\draw (4.5,-0.2) node[whitenode] (xy) [label=90:$xy$] {}; 

\draw (3.5,-1.3) node[whitenode] (w1) [label=-90:$w_1$] {}; 
\draw (4.5,-1.3) node[whitenode] (w2) [label=-90:$w_2$] {};

\draw (x) edge node {} (w1);
\draw (x) edge node {} (w2);
\draw (xy) edge node {} (w1);

\end{tikzpicture}
\caption{The digraph $D$ with vertex cover $U$ (left) and the corresponding graph $G_{D,U}$ (right) \label{fig:aux-graph}}
\end{figure}

For a given directed graph $D$ and a vertex cover $U$ in $D$ we build an undirected bipartite graph $B_{D,U}$ as follows (see Fig.~\ref{fig:aux-graph}).
Let $W=V(D)\setminus U$. Then,
\begin{eqnarray*}
 V(B) & = &U' \cup W\textrm{, where }U' = N^-(W)\cup (U\times U);\\[0.2cm]
 E(B) & = &\{\{xy,w\}\ :\ xy\in U\times U,\ w\in W, (x,w)\in E(D),\ (w,y)\in E(D)\} \cup \\
& & \{\{x,w\}\ :\ x\in U,\ w\in W,\ (x,w)\in E(D)\}.
\end{eqnarray*}

A \emph{crown decomposition} of an undirected graph $G$ is a partitioning of 
$V(G)$ into three parts $C$, $H$ and $R$, such that 
\begin{itemize}
\item $C$ is an independent set.
\item There are no edges between vertices of $C$ and $R$. That is, $H$ separates $C$ and $R$. 
\item $C$ can be partitioned into $C_m\cup C_u$ with $|C_m|=|H|$, such that $G[C_m\cup H]$ contains a perfect matching that matches each vertex of $C_m$ with a vertex of $H$.
\end{itemize}

Crown decompositions are used in multiple kernelization algorithms. In particular, the following lemma, which Gutin et al. attribute to Fellows et al.~\cite{FellowsHRST04}, shows that in certain situations a crown decomposition can be found efficiently.

\begin{lemma}[see~\cite{Gutin-quadratic}]\label{lem:crown-factory}
Suppose $G$ is an undirected graph on $n$ vertices, and suppose $I$ is an independent set in $G$ such that $|I|\geq \frac{2n}{3}$. Then $G$ admits a crown decomposition $(C=C_u\uplus C_m,H,R)$ with $C\subseteq I$, $H\subseteq V(G)\setminus I$ and $C_u\neq \emptyset$. Moreover, given $I$, the decomposition $(C=C_u\uplus C_m,H,R)$ can be found in $O(nm)$ time.
\end{lemma}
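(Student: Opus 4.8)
The plan is to run a matching-based argument of K\"onig/Dulmage--Mendelsohn type on the bipartite graph induced by $I$ and its complement. Let $W = V(G)\setminus I$; since $I$ is independent, every edge of $G$ either joins $I$ to $W$ or lies inside $W$, and the latter kind is irrelevant here, so consider the bipartite graph $G'$ with sides $I$ and $W$ and edge set $E(G)\setminus E(G[W])$. First I would compute a maximum matching $M$ in $G'$ by the classical augmenting-path method: start from $M=\emptyset$ and repeatedly search for an $M$-augmenting path and flip it. There are at most $|W|\le n$ augmentations and each search costs $O(m)$, so this phase runs in $O(nm)$ time. Because $|W| = n-|I| \le n/3 < |I|$, the matching $M$ has size at most $|W| < |I|$, hence leaves at least one vertex of $I$ unmatched.

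Next I would perform one alternating breadth-first search from the (nonempty) set $Q$ of $M$-unmatched vertices of $I$, exploring alternating walks that leave $I$ along non-matching edges and return to $I$ along matching edges. Let $H\subseteq W$ be the set of vertices of $W$ visited and $C\subseteq I$ the set of vertices of $I$ visited, so $Q\subseteq C$; set $R = V(G)\setminus(C\cup H)$. Then I would check the crown properties. The set $C$ is independent since $C\subseteq I$. Every $h\in H$ is $M$-matched --- otherwise the alternating path witnessing $h\in H$ would be $M$-augmenting, contradicting maximality of $M$ --- and its $M$-partner lies in $C$, namely the vertex through which the search proceeded out of $h$. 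Hence the $M$-partners of the vertices of $H$ form a set $C_m\subseteq C$ with $|C_m|=|H|$, and $M$ restricted to $C_m\cup H$ is a perfect matching of $G[C_m\cup H]$ matching each vertex of $C_m$ to a vertex of $H$. Putting $C_u = C\setminus C_m$, we have $Q\subseteq C_u$, so $C_u\neq\emptyset$.

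It remains to verify that $H$ separates $C$ from $R$, i.e.\ that there is no edge between $C$ and $R$. Suppose $cr\in E(G)$ with $c\in C$ and $r\in R$. Since $C\subseteq I$ and $I$ is independent, $r\in W$. If $cr\notin M$, then the alternating search reaching $c$ could be extended along $cr$, which would place $r$ in $H$, contradicting $r\in R$. If $cr\in M$, then $r$ is the $M$-partner of $c$, so in particular $c\notin Q$; therefore $c$ was entered in the search along a matching edge from some $h\in H$, which forces $r=h\in H$, again a contradiction. Hence no such edge exists, and $(C=C_u\uplus C_m,\,H,\,R)$ is the desired crown decomposition with $C\subseteq I$, $H\subseteq V(G)\setminus I$, and $C_u\neq\emptyset$. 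The matching computation dominates, giving the claimed $O(nm)$ running time once $I$ is provided.

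The only mild subtlety --- and the step I would write most carefully --- is the bookkeeping in the alternating search: one must record for each visited vertex the edge through which it entered, so that the phrase ``$c\in C$'' genuinely licenses extending the alternating walk along \emph{every} non-matching edge incident to $c$ and along the unique matching edge at $c$ when $c\notin Q$. With that convention the two case analyses above are routine, and the degenerate situation $H=\emptyset$ (all unmatched vertices of $I$ isolated in $G$) is handled uniformly, since then $C_m=\emptyset$ and the empty matching vacuously satisfies the matching requirement while $C_u=C\supseteq Q\neq\emptyset$.
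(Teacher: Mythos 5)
Your proof is correct. Note that the paper itself does not prove this lemma: it is quoted from Gutin, Razgon and Kim, who in turn attribute it to Fellows et al., so there is no ``paper proof'' to compare against line by line. What you have written is essentially the standard proof of the crown lemma that underlies those references: take the bipartite graph between $I$ and $W=V(G)\setminus I$, compute a maximum matching $M$ by augmenting paths in $O(nm)$ time, and run one alternating search from the $M$-exposed vertices of $I$; the visited sets give $C$ and $H$, maximality of $M$ shows every vertex of $H$ is matched into $C$ (yielding $C_m$ with $|C_m|=|H|$ and the required perfect matching), the two-case analysis on matched versus unmatched edges shows there are no $C$--$R$ edges, and the exposed vertices of $I$ certify $C_u\neq\emptyset$. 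Your handling of the bookkeeping in the alternating BFS and of the degenerate case $H=\emptyset$ is the right level of care, and your argument in fact uses only $|I|>|W|$, i.e.\ $|I|>n/2$, which is slightly weaker than the stated hypothesis $|I|\geq 2n/3$; this is consistent with (indeed a mild strengthening of) the cited statement, so there is nothing to fix.
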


%

The main idea of Gutin et al. is to search for crowns in $B_{D,U}$ with $C\subseteq W$ and $C_u\neq \emptyset$. Such crowns can be conveniently reduced using the following reduction rule, whose correctness is proved in Lemma 4.4 of~\cite{Gutin-quadratic}. 

\setcounter{rulecnt}{0}
\rrule{r:crown}{Let $U$ be a vertex cover in $D$ and let $W=V(D)\setminus U$. 
Assume there is a crown decomposition $(C=C_m\cup C_u,H,R)$ in $B_{D,U}$ with $C\subseteq W$ and $C_u\neq \emptyset$.
Then remove $C_u$ from~$D$.}
\medskip

Our idea is to combine Rule~\ref{r:crown} with the knowledge that $D$ belongs to a graph class of bounded expansion $\mc{G}$, and hence Proposition~\ref{prop:bipartite-general} can be used to reason about the sparseness of the adjacency structure between $U$ and $W$. Let us introduce some notation. 
Consider a vertex cover $U$ and an independent set $W=V(D)\setminus U$ in $D$.
Let $W_s = \{w \in W\ :\ \deg_D(w) < 2\nabla_0(\mc{G})\}$, and let $W_b = W\setminus W_s$. 
Moreover, for $N\subseteq U$ with $|N|<2\nabla_0(\mc{G})$, let $W_N = \{w \in W_s\ :\ N(w) = N\}$. 
Let $\mc{N}(U)=\{N\subseteq U \ :\ |N|<2\nabla_0(\mc{G}), W_N \neq \emptyset\}$.      
Note that $|\mc{N}(U)| \le |W_s|$.

Our kernelization algorithm is as follows.

\begin{enumerate}
 \item If the algorithm from Lemma~\ref{lem:vertex-cover} returns an outbranching, answer YES and terminate; otherwise it returns a vertex cover $U$ of size at most $2k-2$. Let $W=V(D)\setminus U$.
 \item Construct the graph $B:=B_{D,U}$ and compute $W_s$, $\mc{N}(U)$, and nonempty sets $W_N$.
 \item If there is a set $N \in \mc{N}(U)$ such that $|W_N|>2|N_{B}(W_N)|$, then apply Lemma~\ref{lem:crown-factory} to graph $B[N_B[W_N]]$ with $I=W_N$. This gives us a crown decomposition $(C=C_u\uplus C_m,H,R)$ of $B[N_B[W_N]]$ with $C\subseteq W_N$, $H\subseteq N_{B}(W_N)$, and $C_u\neq \emptyset$. Observe that $(C=C_u\uplus C_m,H,R\cup (V(B)\setminus N_B[W_N]))$ is a crown decomposition of $B$.
 Apply Rule~\ref{r:crown} to this crown decomposition in order to remove $C_u$ from $D$, and restart the algorithm in the reduced graph.
 \item Otherwise, return $D$.      
\end{enumerate}

In case we have a prescribed root $r$ of the outbranching that we would like to preserve in the kernelization process, we can add it to the constructed vertex cover $U$, thus increasing its size up to at most $2k-1$. The reduction rules never remove any vertex of $U$.

Given this algorithm, we can restate and prove our main result for \probIOBshort.

\restateintroIOB*
\begin{proof}
 The correctness of our kernelization algorithm and a polynomial bound on its running time follows from Lemmas~\ref{lem:vertex-cover} and~\ref{lem:crown-factory}. 
 Note that the kernelization algorithm never decrements the budget $k$, so it suffices to show that it outputs an instance $(D,k)$ such that $|V(D)|=O(k)$.

 We can assume that the algorithm constructed a vertex cover $U$ of $D$ of size at most $2k-2$ ($2k-1$ if we want to preserve a prescribed root), because otherwise the algorithm would terminate and provide a positive answer. 
 Let $W = V(D) \setminus U$. 
 Then $V(D) = U \cup W_s \cup W_b$.
 By the first claim of Proposition~\ref{prop:bipartite-general} we get $|W_b| \le 2\nabla_0(G)|U| \le 4\nabla_0(\mc{G})k$.
 Hence it suffices to bound the size of $W_s$.
 Note that $W_s = \bigcup_{N \in \mc{N}(U)} W_N$.
 By the second claim of Proposition~\ref{prop:bipartite-general} we get $|\mc{N}(U)| \le (4^{\nabla_1(G)}+2\nabla_1(G)) |U| = O(4^{\nabla_1(\mc{G})}k)$.
 However, since Step 2 of the kernelization algorithm cannot be applied, for every $N \in \mc{N}(U)$ we have 
  $|W_N| \le 2|N_{B_{D,U}}(W_N)|$. However, by the construction of $B_{D,U}$ it is clear that $|N_{B_{D,U}}(W_N)| \le |N|^2 + |N|<4\nabla_0(\mc{G})^2+2\nabla_0(\mc{G})$, and hence $|W_N|<8\nabla_0(\mc{G})^2+4\nabla_0(\mc{G})$.
  It follows that $|W_s| = \sum_{N \in \mc{N}(U)} |W_N| = O(4^{\nabla_1(\mc{G})}\nabla_0(\mc{G})^2 k)$, and hence $|V(D)|=|U|+|W_s|+|W_b|=O(4^{\nabla_1(\mc{G})}\nabla_0(\mc{G})^2 k)$.
  This finishes the proof.
\end{proof}

Let us remark that in the proof of Theorem~\ref{thm:mainIOB} we used only the boundedness of $\nabla_0(\mc{G})$ and $\nabla_1(\mc{G})$, so our algorithm works as well in any graph class where only these two grads are finite constants. Also, the kernelization algorithm has polynomial running time, where the degree of the polynomial is a constant independent of $\mc{G}$.

\section{Subexponential algorithms}\label{sec:subexp}

Theorems~\ref{thm:mainLOB} and~\ref{thm:mainIOB} enable us to design subexponential parameterized algorithms for \probRMLOshort and \probIOBshort on $H$-minor-free graphs using the standard approach via treewidth. To this end, we compose two facts: First, for a fixed forbidden minor $H$, every $H$-minor-free graph on $n$ vertices has treewidth at most $O(\sqrt{n})$~\cite{Grohe03}. Second, both \probRMLO and \probIOB can be solved in time $2^{O(t)}\cdot n^{O(1)}$ on $n$-vertex graphs given together with their tree decompositions of width at most $t$, as explained next.

For the latter ingredient, a standard approach to dynamic programming on tree decompositions would yield algorithms with running time $2^{O(t\log t)}\cdot n^{O(1)}$, since we consider all possible partitions of a bag in the states of the dynamic programming table. However, both problems are amenable to recently developed new techniques for constructing dynamic programming algorithms with running time $2^{O(t)}\cdot n^{O(1)}$. An application of the Cut\&Count technique~\cite{CyganNPPRW11} immediately yields randomized algorithms with such a running time for both these problems. Actually, the existence of such algorithms also follows from expressibility in the logical formalism ECML+C proposed by the third author~\cite{Pilipczuk11,abs-1104-3057}, which provides a meta-result on applicability of Cut\&Count; The full paper~\cite{abs-1104-3057}, Appendix D, contains a formula for the problem of finding an outbranching with exactly $k$ leaves, which can be trivially adjusted to express both \probRMLO and \probIOB. The Cut\&Count technique has been recently derandomized by Bodlaender et al.~\cite{BodlaenderCKN13}, who proposed the so-called {\em{rank based approach}} that yields deterministic $2^{O(t)}\cdot n^{O(1)}$-time algorithms for many problems amenable to Cut\&Count. It is a simple exercise to see that using this technique one can also design such algorithms for \probRMLO and \probIOB. Thus, we have the following proposition.

\begin{proposition}\label{prop:tw-DP}
\probRMLO and \probIOB can be solved in deterministic time $2^{O(t)}\cdot n^{O(1)}$ on an $n$-vertex graph given together with its tree decomposition of width $t$.
\end{proposition}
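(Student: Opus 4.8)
The plan is to derive Proposition~\ref{prop:tw-DP} from the rank-based approach of Bodlaender et al.~\cite{BodlaenderCKN13}, which is cleaner to set up than invoking the \textsc{ECML+C} meta-theorem, though the latter also suffices (see~\cite{abs-1104-3057}). The starting observation is that an out-branching of $D$ rooted at $r$ can be described purely by a set of arcs $F\subseteq E(D)$ subject to \emph{local} constraints together with a single \emph{global connectivity} constraint: every vertex of $V(D)\setminus\{r\}$ has exactly one in-arc in $F$, the root $r$ has no in-arc in $F$, and the underlying undirected graph of $(V(D),F)$ is connected. Indeed, the in-degree conditions force $|F|=|V(D)|-1$, so if the underlying graph is connected it is a spanning tree, and the in-degree-one condition at every non-root vertex then forces each edge to be oriented away from $r$; conversely every out-branching satisfies all three conditions. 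Thus the directed reachability requirement collapses to ordinary undirected connectivity once the in-degree constraints are imposed --- this is the one point that must be checked carefully, and it is precisely what lets the standard undirected connectivity machinery apply verbatim.

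With this reformulation I would run a dynamic program over a nice tree decomposition of width $t$ whose partial solutions are families of \emph{weighted partitions} of the current bag, recording: (i) for each bag vertex, whether it has already been assigned its in-arc; (ii) for each bag vertex, a bit stating whether it currently has out-degree $0$ or out-degree $\geq 1$ in the partial solution, so that at the moment a vertex is forgotten we know whether it counts as a leaf (for \probRMLOshort) or as internal (for \probIOBshort); (iii) an integer $j\in\{0,\dots,|V(D)|\}$ counting how many already-forgotten vertices are leaves (resp.\ internal); and (iv) the partition of the bag into the connected components the partial solution induces on the forgotten part plus the bag. Coordinates (i)--(iii) inflate the table only by a $2^{O(t)}\cdot n^{O(1)}$ factor; the expensive coordinate is (iv), where a priori there are $t^{O(t)}$ partitions. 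Here one invokes the core lemma of the rank-based approach: after every bag operation the stored family of weighted partitions may be replaced by a \emph{representative subfamily} of size at most $2^{t-1}$ that preserves the optimum of every way of closing up the partition, and such a subfamily is computable in time $2^{O(t)}\cdot n^{O(1)}$ by Gaussian elimination over $\mathrm{GF}(2)$ on the matrix of the standard cut/uncut compatibility relation. Acceptance is tested at the root bag by looking for a partial solution whose partition is a single block, whose in-arc and out-degree bookkeeping is consistent, and whose counter $j$ is at least $k$.

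As an alternative I would mention but not develop, the randomized Cut\&Count framework of Cygan et al.~\cite{CyganNPPRW11} applies directly as well: fixing a random weight function on the arcs, one counts modulo $2$ the pairs $(F,\text{consistent cut of }F)$ where $F$ satisfies the in-degree constraints and has prescribed leaf/internal count $w$, with the root pinned to one fixed side of the cut; by the usual parity argument only connected $F$ survive, and the Isolation Lemma ensures that for the right $w$ some solution is isolated and hence detected. This yields a randomized $2^{O(t)}\cdot n^{O(1)}$ algorithm, and the rank-based procedure above is exactly a derandomization of it.

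The main obstacle is not the connectivity machinery, which is by now routine, but the encoding details: one must be scrupulous that the local in-degree constraints are maintained correctly across introduce, forget, and join nodes, and in particular that the ``out-degree $0$ versus $\geq 1$'' bit of a vertex is finalized only when the vertex is about to be forgotten (all incident arcs having been decided), so that the counter $j$ is incremented exactly once per vertex. Once this bookkeeping is in place, correctness follows by the standard induction over the tree decomposition, and the claimed running time is immediate from the $2^{O(t)}$ bound on representative families combined with the polynomially many values of $j$.
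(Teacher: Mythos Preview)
Your proposal is correct and follows essentially the same approach as the paper: the paper does not give a detailed proof but simply points to the Cut\&Count technique and its derandomization via the rank-based approach of Bodlaender et al.~\cite{BodlaenderCKN13}, remarking that applying it to \probRMLOshort and \probIOBshort is ``a simple exercise.'' Your write-up is in fact more detailed than the paper's own justification, spelling out the key observation that the in-degree constraints reduce the out-branching condition to undirected connectivity, which is precisely what is needed for the rank-based machinery to apply.
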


Gathering all the tools, we obtain the subexponential algorithms promised in Section~\ref{sec:intro}.

\restateintrosubexp*
\begin{proof}
Let $(D,k)$ be the input instance of \probRMLOshort or \probIOBshort, where $D$ is $H$-minor free. First, we apply the kernelization algorithm of Theorem~\ref{thm:mainLOB} or~\ref{thm:mainIOB} (depending on the problem) to reduce the size of the instance to $O(k)$; note that the application of neither of these algorithms can increase the parameter. Having the reduced instance $(D',k')$ in hand, where $k'\leq k$, $D'$ is $H$-minor-free, and $|V(D')|=O(k)$, we infer that the treewidth of $D'$ is in $O(\sqrt{k})$. Hence we apply any constant-factor approximation algorithm for treewidth, e.g.~\cite{BodlaenderDDFLP13}, to compute a tree decomposition of $D'$ of width $O(\sqrt{k})$ in time $2^{O(\sqrt{k})}$. We conclude by applying the appropriate algorithm of Proposition~\ref{prop:tw-DP}; this application also takes time $2^{O(\sqrt{k})}$.
\end{proof}

We remark that the running time of the algorithms given by Theorem~\ref{thm:subexp} is essentially optimal under the assumption of the Exponential Time Hypothesis (ETH), even already on planar directed graphs. More precisely, from the known NP-hardness reductions it follows that the existence of an algorithm for \probRMLOshort or \probIOBshort working in time $2^{o(\sqrt{N})}$ on a planar directed graph with $N$ vertices would contradict ETH. For completeness, we sketch now how this conclusion can be derived.

\begin{theorem}
Unless ETH fails, there is no algorithm solving \probRMLOshort or \probIOBshort that achieves running time $2^{o(\sqrt{N})}$ on planar directed graphs with $N$ vertices.
\end{theorem}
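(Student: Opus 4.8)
The plan is to reduce from two classical planar problems whose $2^{\Omega(\sqrt{N})}$ lower bounds under ETH are already known, making sure that each reduction increases the number of vertices only by a constant factor (plus at most a polynomial-factor slowdown, which is harmless for a $2^{o(\sqrt{N})}$ bound). The starting point is the standard consequence of ETH together with the Sparsification Lemma that \textsc{3-SAT} on $n$ variables and $m$ clauses has no $2^{o(n+m)}$-time algorithm, combined with the observation that the textbook reductions from \textsc{3-SAT} to planar problems blow up the instance size only quadratically --- each of the $O((n+m)^2)$ crossings of a drawing is replaced by a constant-size planarization gadget. Consequently, neither \textsc{Planar Hamiltonian Cycle} (already on undirected graphs) nor \textsc{Planar Connected Dominating Set} admits an algorithm running in time $2^{o(\sqrt{N})}$ on $N$-vertex instances, unless ETH fails; and since a vertex set $S$ is a connected dominating set of a connected graph $G$ exactly when $G$ has a spanning tree with $|V(G)|-|S|$ leaves, the same lower bound transfers to \textsc{Planar Maximum Leaf Spanning Tree}.

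For \probIOBshort I would argue as follows. Starting from a connected planar undirected graph $G$ on $N$ vertices, bidirect it by replacing each edge with two antiparallel arcs, obtaining a planar digraph $\vec{G}$; since a directed Hamiltonian cycle of $\vec{G}$ cannot use both arcs of a pair, $\vec{G}$ has a directed Hamiltonian cycle iff $G$ has a Hamiltonian cycle, and the latter holds iff for some arc $(a,b)$ of $\vec{G}$ the digraph $\vec{G}_{(a,b)}$ --- obtained by deleting $(a,b)$ and adding two new vertices $s,t$ together with the arcs $(s,b)$ and $(a,t)$, drawn along the curve that represented $(a,b)$ --- has a directed Hamiltonian path. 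Each $\vec{G}_{(a,b)}$ is planar with $N+2$ vertices; moreover $s$ has in-degree $0$ and $t$ has out-degree $0$, so every Hamiltonian path of $\vec{G}_{(a,b)}$ must start at $s$ and end at $t$, and an out-branching of $\vec{G}_{(a,b)}$ rooted at $s$ with at least $N+1$ internal vertices is exactly such a Hamiltonian path. Thus a $2^{o(\sqrt{N})}$-time algorithm for \probIOBshort could be run on the $O(N)$ digraphs $\vec{G}_{(a,b)}$ with root $s$ and budget $N+1$, deciding \textsc{Planar Hamiltonian Cycle} in total time $2^{o(\sqrt{N})}$ --- a contradiction with ETH.

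For \probRMLOshort I would instead bidirect a connected planar \textsc{Maximum Leaf Spanning Tree} instance $G$ on $N\geq 3$ vertices, again obtaining a planar digraph $\vec{G}$ on the same vertex set. For any vertex $r$, an out-branching of $\vec{G}$ rooted at $r$ is precisely a spanning tree of $G$ rooted at $r$, and its set of leaves is exactly the set of degree-one vertices of that tree other than $r$; hence the maximum number of leaves of an out-branching of $\vec{G}$, taken over all choices of root, equals the maximum-leaf number $\ell(G)$. Indeed, no out-branching can exceed $\ell(G)$ since it induces a spanning tree of $G$, while rooting an optimal spanning tree at one of its internal vertices --- which exists because a tree on at least three vertices has a non-leaf --- makes all $\ell(G)$ leaves of that tree leaves of the out-branching. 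Since \probRMLOshort prescribes the root, a hypothetical $2^{o(\sqrt{N})}$-time algorithm could be run once for each of the $N$ candidate roots, deciding \textsc{Planar Maximum Leaf Spanning Tree} in time $N\cdot 2^{o(\sqrt{N})}=2^{o(\sqrt{N})}$ and again contradicting ETH.

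The main obstacle --- and the only part of the argument doing genuine work --- is the $2^{\Omega(\sqrt{N})}$ ETH lower bound for \textsc{Planar Hamiltonian Cycle} and \textsc{Planar Connected Dominating Set}. This is classical and relies solely on the quadratic blow-up incurred when planarizing the graph produced by the standard reduction from \textsc{3-SAT}, so I would cite it rather than reprove it. All the remaining steps --- bidirection, the local $s,t$-gadget at a single arc, the complementation between connected dominating sets and spanning-tree leaves, and iterating over the $N$ possible roots --- affect the vertex count only by an additive or multiplicative constant and the running time only by a polynomial factor, neither of which disturbs a $2^{o(\sqrt{N})}$ running-time bound.
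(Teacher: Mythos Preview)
Your proof is correct and follows essentially the same route as the paper: both arguments reduce \probIOBshort from \textsc{Planar Hamiltonian Cycle} via a Hamiltonian-path gadget and \probRMLOshort from \textsc{Planar Max Leaf} (obtained through the duality with \textsc{Connected Dominating Set}), bidirecting the graph and iterating over the linearly many choices of root or distinguished arc. The only cosmetic difference is that the paper spells out the chain \textsc{Planar Vertex Cover} $\to$ \textsc{Planar Connected Vertex Cover} $\to$ \textsc{Planar Connected Dominating Set} to justify the starting lower bound for \probRMLOshort, whereas you cite that bound directly.
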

\begin{proof}
For \probIOBshort the statement follows easily from the known fact that the existence of such an algorithm for {\sc{Planar Hamiltonian Cycle}} would contradict ETH, see e.g.~\cite{LokshtanovMS11}. First, {\sc{Planar Hamiltonian Cycle}} can be Turing-reduced to {\sc{Planar Hamiltonian Path}} by guessing an edge used in the solution and replacing it with two pendant vertices attached to its endpoints. {\sc{Planar Hamiltonian Path}} can be now reduced to the variant of \probIOBshort on planar digraphs, where the root is not specified: simply replace every undirected edge by two directed edges with opposite orientations, and ask for an outbranching with at least $N-1$ internal vertices. The variant with unspecified root is easily Turing-reducible to the one with specified root by simply guessing the root. Note that all the aforementioned reductions increase the instance size by at most a constant factor, and hence the statement for \probIOBshort follows.

We turn our attention to \probRMLOshort. First, it is known that the {\sc{Planar Vertex Cover}} problem does not admit an algorithm with running time $2^{o(\sqrt{N})}$ on planar graphs with $N$ vertices, see e.g.~\cite{grohe:book,LokshtanovMS11}. Garey and Johnson~\cite{GareyJ77} proposed a reduction from {\sc{Planar Vertex Cover}} to {\sc{Planar Connected Vertex Cover}} that increases the number of vertices of the graph only by a constant multiplicative factor. This proves that also for {\sc{Planar Connected Vertex Cover}} an algorithm with running time $2^{o(\sqrt{N})}$ can be excluded under ETH. 

We further reduce {\sc{Planar Connected Vertex Cover}} to {\sc{Planar Connected Dominating Set}} using the following transformation: subdivide every edge of the graph, and add a pendant to every introduced subdividing vertex. It can be easily shown that a planar graph $G$ has a connected vertex cover of size $k$ if and only if the planar graph $G'$ obtained in this transformation has a connected dominating set of size $k+|E(G)|$. Hence, under ETH there is no $2^{o(\sqrt{N})}$-time algorithm for {\sc{Planar Connected Dominating Set}}.

Now, we use the known fact that the {\sc{Connected Dominating Set}} is dual to the {\sc{Max Leaf}} problem --- the problem of finding a spanning tree of an undirected graph with the maximum possible number of leaves. More precisely, a graph $G$ has a connected dominating set of size at most $k$ if and only if it admits a spanning tree with at least $|V(G)|-k$ leaves; cf.~\cite{Douglas92}. Hence, under ETH there is no $2^{o(\sqrt{N})}$-time algorithm for {\sc{Planar Max Leaf}}.

Finally, {\sc{Planar Max Leaf}} can be reduced to the variant of \probRMLOshort on planar graphs where the root is not specified by just replacing every undirected edge by two directed edges with opposite orientations. Again, the variant with unspecified root is easily Turing-reducible to the one with specified root by simply guessing the root. This proves the statement for \probRMLOshort.
\end{proof}

\section{Concluding remarks}
\label{sec:conclude}

In this paper we have shown linear kernels for both \probRMLO and \probIOB on sparse graph classes: $H$-minor-free and of bounded expansion, respectively. We believe that our work is another good example of how abstract properties derived from the sparsity of the considered graph class, in particular the ones expressed in Proposition~\ref{prop:bipartite-general}, can be used in the kernelization setting for a clean treatment of graph classes with excluded minors, without the need of invoking the decomposition theorem of Robertson and Seymour. Other examples of this approach include~\cite{DrangeDFKLPPRSVS14,BndExpKernels13}, and we hope that even more will appear in future.

In the light of our results, the question about the existence of linear kernels for \probRMLO and \probIOB on general graphs becomes even more tantalizing. We do not intend to take a stance about the actual answer, but after investigating both problems for some time we believe that in both cases a conceptual breakthrough is needed to make an improvement.

\section*{Acknowledgments}

The authors are very grateful to Marcin Pilipczuk for reading the manuscript carefully and providing useful comments.

\bibliographystyle{abbrv}

\begin{thebibliography}{10}

\bibitem{afn:planar-domset}
J.~Alber, M.~R. Fellows, and R.~Niedermeier.
\newblock Polynomial-time data reduction for dominating set.
\newblock {\em J. ACM}, 51(3):363--384, 2004.

\bibitem{BodlaenderCKN13}
H.~L. Bodlaender, M.~Cygan, S.~Kratsch, and J.~Nederlof.
\newblock Deterministic single exponential time algorithms for connectivity
  problems parameterized by treewidth.
\newblock In {\em ICALP 2013}, volume 7965 of {\em Lecture Notes in Computer
  Science}, pages 196--207. Springer, 2013.

\bibitem{BodlaenderDDFLP13}
H.~L. Bodlaender, P.~G. Drange, M.~S. Dregi, F.~V. Fomin, D.~Lokshtanov, and
  M.~Pilipczuk.
\newblock An ${O}(c^k n)$ 5-approximation algorithm for treewidth.
\newblock In {\em FOCS 2013}, pages 499--508. {IEEE} Computer Society, 2013.

\bibitem{BodlaenderFLPST09}
H.~L. Bodlaender, F.~V. Fomin, D.~Lokshtanov, E.~Penninkx, S.~Saurabh, and
  D.~M. Thilikos.
\newblock (meta) kernelization.
\newblock In {\em 50th Annual {IEEE} Symposium on Foundations of Computer
  Science, {FOCS} 2009, October 25-27, 2009, Atlanta, Georgia, {USA}}, pages
  629--638. {IEEE} Computer Society, 2009.

\bibitem{CyganNPPRW11}
M.~Cygan, J.~Nederlof, M.~Pilipczuk, M.~Pilipczuk, J.~M.~M. van Rooij, and
  J.~O. Wojtaszczyk.
\newblock Solving connectivity problems parameterized by treewidth in single
  exponential time.
\newblock In {\em {IEEE} 52nd Annual Symposium on Foundations of Computer
  Science, {FOCS} 2011, Palm Springs, CA, USA, October 22-25, 2011}, pages
  150--159. {IEEE} Computer Society, 2011.

\bibitem{daligault09}
J.~Daligault and S.~Thomass{\'e}.
\newblock On finding directed trees with many leaves.
\newblock In {\em Parameterized and Exact Computation}, pages 86--97. Springer,
  2009.

\bibitem{DemaineFHT05}
E.~D. Demaine, F.~V. Fomin, M.~T. Hajiaghayi, and D.~M. Thilikos.
\newblock Subexponential parameterized algorithms on bounded-genus graphs and
  $h$-minor-free graphs.
\newblock {\em J. {ACM}}, 52(6):866--893, 2005.

\bibitem{DemaineHK11}
E.~D. Demaine, M.~Hajiaghayi, and K.~Kawarabayashi.
\newblock Contraction decomposition in $h$-minor-free graphs and algorithmic
  applications.
\newblock In {\em Proceedings of the 43rd {ACM} Symposium on Theory of
  Computing, {STOC} 2011, San Jose, CA, USA, 6-8 June 2011}, pages 441--450.
  {ACM}, 2011.

\bibitem{DornFLRS13}
F.~Dorn, F.~V. Fomin, D.~Lokshtanov, V.~Raman, and S.~Saurabh.
\newblock Beyond bidimensionality: Parameterized subexponential algorithms on
  directed graphs.
\newblock {\em Inf. Comput.}, 233:60--70, 2013.

\bibitem{Douglas92}
R.~J. Douglas.
\newblock {NP}-completeness and degree restricted spanning trees.
\newblock {\em Discrete Mathematics}, 105(1-3):41--47, 1992.

\bibitem{DrangeDFKLPPRSVS14}
P.~G. Drange, M.~S. Dregi, F.~V. Fomin, S.~Kreutzer, D.~Lokshtanov,
  M.~Pilipczuk, M.~Pilipczuk, F.~Reidl, S.~Saurabh, F.~S. Villaamil, and
  S.~Sikdar.
\newblock Kernelization and sparseness: the case of dominating set.
\newblock {\em CoRR}, abs/1411.4575, 2014.

\bibitem{FellowsHRST04}
M.~Fellows, P.~Heggernes, F.~A. Rosamond, C.~Sloper, and J.~A. Telle.
\newblock Finding $k$ disjoint triangles in an arbitrary graph.
\newblock In {\em Graph-Theoretic Concepts in Computer Science, 30th
  International Workshop,WG 2004, Bad Honnef, Germany, June 21-23, 2004,
  Revised Papers}, volume 3353 of {\em Lecture Notes in Computer Science},
  pages 235--244. Springer, 2004.

\bibitem{grohe:book}
J.~Flum and M.~Grohe.
\newblock {\em Parameterized Complexity Theory}.
\newblock Texts in Theoretical Computer Science. Springer, 2006.

\bibitem{FominLRS11}
F.~V. Fomin, D.~Lokshtanov, V.~Raman, and S.~Saurabh.
\newblock Subexponential algorithms for partial cover problems.
\newblock {\em Inf. Process. Lett.}, 111(16):814--818, 2011.

\bibitem{fomin:bidim-kernels}
F.~V. Fomin, D.~Lokshtanov, S.~Saurabh, and D.~M. Thilikos.
\newblock Bidimensionality and kernels.
\newblock In {\em SODA}, pages 503--510, 2010.

\bibitem{BndExpKernels13}
J.~Gajarsk{\'{y}}, P.~Hlinen{\'{y}}, J.~Obdrz{\'{a}}lek, S.~Ordyniak, F.~Reidl,
  P.~Rossmanith, F.~S. Villaamil, and S.~Sikdar.
\newblock Kernelization using structural parameters on sparse graph classes.
\newblock In H.~L. Bodlaender and G.~F. Italiano, editors, {\em ESA 2013},
  volume 8125 of {\em Lecture Notes in Computer Science}, pages 529--540.
  Springer, 2013.

\bibitem{GareyJ77}
M.~R. Garey and D.~S. Johnson.
\newblock The {R}ectilinear {S}teiner {T}ree problem in {NP}-complete.
\newblock {\em {SIAM} Journal of Applied Mathematics}, 32:826--834, 1977.

\bibitem{Grohe03}
M.~Grohe.
\newblock Local tree-width, excluded minors, and approximation algorithms.
\newblock {\em Combinatorica}, 23(4):613--632, 2003.

\bibitem{Gutin-quadratic}
G.~Gutin, I.~Razgon, and E.~J. Kim.
\newblock Minimum leaf out-branching and related problems.
\newblock {\em Theor. Comput. Sci.}, 410(45):4571--4579, 2009.

\bibitem{KleinM14}
P.~N. Klein and D.~Marx.
\newblock A subexponential parameterized algorithm for subset {TSP} on planar
  graphs.
\newblock In C.~Chekuri, editor, {\em Proceedings of the Twenty-Fifth Annual
  {ACM-SIAM} Symposium on Discrete Algorithms, {SODA} 2014, Portland, Oregon,
  USA, January 5-7, 2014}, pages 1812--1830. {SIAM}, 2014.

\bibitem{LokshtanovMS11}
D.~Lokshtanov, D.~Marx, and S.~Saurabh.
\newblock Lower bounds based on the {E}xponential {T}ime {H}ypothesis.
\newblock {\em Bulletin of the {EATCS}}, 105:41--72, 2011.

\bibitem{LokshtanovSW12}
D.~Lokshtanov, S.~Saurabh, and M.~Wahlstr{\"{o}}m.
\newblock Subexponential parameterized {O}dd {C}ycle {T}ransversal on planar
  graphs.
\newblock In D.~D'Souza, T.~Kavitha, and J.~Radhakrishnan, editors, {\em
  {IARCS} Annual Conference on Foundations of Software Technology and
  Theoretical Computer Science, {FSTTCS} 2012, December 15-17, 2012, Hyderabad,
  India}, volume~18 of {\em LIPIcs}, pages 424--434. Schloss Dagstuhl -
  Leibniz-Zentrum fuer Informatik, 2012.

\bibitem{Sparsity}
J.~Ne\v{s}et\v{r}il and P.~{Ossona de Mendez}.
\newblock {\em {Sparsity: Graphs, Structures, and Algorithms}}, volume~28 of
  {\em Algorithms and Combinatorics}.
\newblock Springer, 2012.

\bibitem{Pilipczuk11}
M.~Pilipczuk.
\newblock Problems parameterized by treewidth tractable in single exponential
  time: {A} logical approach.
\newblock In {\em MFCS 2011}, volume 6907 of {\em Lecture Notes in Computer
  Science}, pages 520--531. Springer, 2011.

\bibitem{abs-1104-3057}
M.~Pilipczuk.
\newblock Problems parameterized by treewidth tractable in single exponential
  time: a logical approach.
\newblock {\em CoRR}, abs/1104.3057, 2011.

\bibitem{PilipczukPSL13}
M.~Pilipczuk, M.~Pilipczuk, P.~Sankowski, and E.~J. van Leeuwen.
\newblock Subexponential-time parameterized algorithm for {S}teiner {T}ree on
  planar graphs.
\newblock In N.~Portier and T.~Wilke, editors, {\em 30th International
  Symposium on Theoretical Aspects of Computer Science, {STACS} 2013, February
  27 - March 2, 2013, Kiel, Germany}, volume~20 of {\em LIPIcs}, pages
  353--364. Schloss Dagstuhl - Leibniz-Zentrum fuer Informatik, 2013.

\bibitem{PilipczukPSL14}
M.~Pilipczuk, M.~Pilipczuk, P.~Sankowski, and E.~J. van Leeuwen.
\newblock Network sparsification for steiner problems on planar and
  bounded-genus graphs.
\newblock In {\em 55th {IEEE} Annual Symposium on Foundations of Computer
  Science, {FOCS} 2014, Philadelphia, PA, USA, October 18-21, 2014}, pages
  276--285. {IEEE} Computer Society, 2014.

\end{thebibliography}

\end{document}